\newtheorem{theorem}{Theorem}
\newcommand{\name}{\texttt{AGNNCert}}
\newcommand{\nameE}{\texttt{AGNNCert-E}}
\newcommand{\nameN}{\texttt{AGNNCert-N}}
\begin{document}

\date{}


\title{\Large \bf 
{\name}: Defending Graph Neural Networks against Arbitrary Perturbations with Deterministic Certification}

\author{
{\rm Jiate Li$^{1}$, Binghui Wang$^{1,*}$}\\
$^1$Illinois Institute of Technology, $^*$Corresponding Author
}

\maketitle


\begin{abstract}

Graph neural networks (GNNs)
achieve the state-of-the-art on graph-relevant tasks such as node and graph classification. However, recent works show GNNs are vulnerable to adversarial perturbations include the perturbation on edges, nodes, and node features, the three components that form  a graph. 
Empirical defenses against such attacks are soon broken by adaptive ones. While certified defenses offer robustness guarantees, they face several limitations: 1) almost all restrict the adversary’s capability to only one type
of perturbation, which is impractical; 2) all are designed for a particular GNN task, which limits their applicability; and 3) the robustness guarantees of all methods except one are not 100\% accurate. 

We address all these limitations by developing {\name}, the first certified defense for GNNs against arbitrary (edge, node, and node feature) perturbations with deterministic robustness
guarantees, and applicable to the two most common node and graph classification tasks.    
{\name} also encompass existing certified defenses as special cases. 
Extensive evaluations on multiple {benchmark} node/graph classification datasets {and two real-world graph datasets}, and multiple GNNs validate the effectiveness of {\name} to provably defend against arbitrary perturbations. {\name} also shows its superiority over the state-of-the-art certified defenses against the individual edge perturbation and node perturbation.\footnote{Source code is at 
\url{https://github.com/JetRichardLee/AGNNCert}.}

\end{abstract}

\section{Introduction}
\label{sec:intro}

Graph is a natural representation for many real-world data, such as social networks, biological networks, and financial networks. In recent years, there has been a great surge of research interest on graph neural networks (GNNs)~\cite{scarselli2008graph,kipf2017semi,hamilton2017inductive,velivckovic2018graph,xu2019powerful} for representation learning on graphs, in which each node recursively aggregates representations of its neighbors to update its  representation. The learnt representations 
can be used for various graph-relevant tasks, e.g., node classification~\cite{kipf2017semi,xu2019powerful} and  graph classification~\cite{hamilton2017inductive,gilmer2017neural}.
For instance, in node classification, GNNs learn a node classifier to predict the label for each node, and learn a graph classifier to predict the label for an entire graph in graph classification.  
GNNs have achieved  outstanding performance on these tasks  for various computer security applications, including 
fraud detection (e.g., detecting fake accounts/users and fake news in social networks \cite{wang2017gang,wang2017sybilscar,gao2018sybilfuse,wang2018sybilblind,xu2022evidence}, fake reviewers and reviews in recommendation systems \cite{wang2019graph,dou2020enhancing}, fraud transactions in e-commerce systems \cite{zhang2022efraudcom}, and credit card fraud and money laundering in finance systems \cite{weber2018scalable,cheng2020graph}), 
intrusion detection \cite{zhong2024survey}, and software vulnerability detection \cite{zhou2019devign,cheng2021deepwukong,zhang2023detecting,chu2024graph}.

In GNNs, a graph is often represented as three components: nodes, their features, and edges that connect the nodes.
However, various works~\cite{zugner2018adversarial,dai2018adversarial,wu2019adversarial,wang2019attacking,xu2019topology,sun2020adversarial,zhang2021backdoor,wan2021adversarial,zhang2022projective,ma2020towards,mu2021a,wang2022bandits,wang2023turning,chenunderstanding,ju2023let,wang2024efficient} have shown that GNNs are vulnerable to \emph{test-time} adversarial attacks, where an adversary can successfully perform the attack by perturbing any  individual component or their combinations in the graph.   
Specifically, given a node/graph classifier and a graph, an attacker could inject a few nodes~\cite{sun2020adversarial,ju2023let}, slightly modify the edges~\cite{zugner2018adversarial,xu2019topology,wang2019attacking} on the graph\footnote{Edge features are typically incorporated in the edge matrix, whose perturbation can be viewed as a special case of edge perturbation.}, and/or perturb features of certain nodes~\cite{zugner2018adversarial} such that the classifier makes wrong predictions for the target node/graph.
Taking GNN based fake user detection in social networks (e.g., Twitter) as an example. In this context, nodes represent users, edges denote following-follower relationships, and node features capture user profile information. A strategic attacker (i.e., fake users) can manipulate their profiles, modify their connections with other users, and create new fake accounts and connections to evade detection \cite{wang2017gang}.

To mitigate the attacks, two lines of defenses have been proposed. 
Empirical defenses~\cite{wu2019adversarial,xu2019topology,zhu2019robust,entezari2020all,tao2021adversarial,zhao2021expressive} are developed with heuristic strategies, but were later broken by adaptive/stronger attacks~\cite{mujkanovic2022defenses}.
Certified defenses~\cite{bojchevski2020efficient,jin2020certified,jia2020certified,wang2021certified,xia2024gnncert,lai2023nodeawarebismoothingcertifiedrobustness} address the issue by offering robustness guarantees against the worst-case attack scenario. For instance, Bi-RS \cite{lai2023nodeawarebismoothingcertifiedrobustness} achieves the state-of-the-art certified defense performance against the node injection attack, while GNNCert \cite{xia2024gnncert} achieves the state-of-the-art against the edge or/and node feature perturbation attack. 
However, all existing certified defenses face several fundamental limitations shown below  (See Table~\ref{tbl:sum_CS} a comprehensive summary). 

\begin{enumerate}[leftmargin=*]
\vspace{-2mm}
\item They all restrict the adversary's capability to only 
\emph{one} type of perturbation, except \cite{xia2024gnncert} to edge and node feature perturbation. 
In practice, however, an attacker could  simultaneously manipulate nodes, node features, as well as edges to perform the best-possible attack. 
\vspace{-2mm}
\item They are designed for a particular GNN task, e.g., node classification or graph classification. This naturally limits the applicability of these defenses. 

\vspace{-2mm}
\item Their robustness guarantees are probabilistic (i.e., not 100\%), with the exception of \cite{xia2024gnncert}. This  implies the guarantees could be inaccurate with a certain probability. 
\vspace{-2mm}
\end{enumerate}

\noindent {\bf Our work:} We develop a voting-based defense, called {\name}, to address all the above limitations. 
{\name} is the  \emph{first certified defense} for GNNs on the two most common \emph{node and graph classification tasks} against \emph{arbitrary perturbations} with  \emph{deterministic} robustness guarantees.
Here, an {arbitrary perturbation} is the perturbation that can arbitrarily manipulate the 
nodes (i.e., inject new nodes and delete existing nodes), edges (i.e., inject new edges and delete existing edges), and node features on a graph.  
More specifically, {\name} can provably predict the same label for a test node/graph with arbitrary perturbation whose perturbation size (i.e., 
the total number of manipulated nodes, nodes with feature perturbations, and edges) is bounded by a threshold, which we call the \emph{certified perturbation size}.

Generally, given a graph and a GNN node/graph classifier,  our voting-based defense includes three 
steps:  
\begin{itemize}[leftmargin=*]
\vspace{-2mm}
\item  {\bf Step I: Divide a graph into multiple subgraphs.} 
We use a hash function \cite{xia2024gnncert} 
to deterministically divide the given graph into multiple subgraphs. 

\vspace{-2mm}
\item {\bf Step II: Build a voting node/graph classifier on the subgraphs.}
We use the node/graph classifier to predict the label of subgraphs, where each prediction is treated as a vote. We then count the votes for each label, and build a voting 
classifier 
that returns the label with the most votes. 

\vspace{-2mm}
\item {\bf Step III: Derive the deterministic robustness guarantee.} 
We derive the certified perturbation size for the voting node/graph 
classifier against arbitrary perturbations on the given graph with deterministic certification.  
\vspace{-2mm}
\end{itemize}

Under this setup, we first derive the sufficient condition for certified robustness against arbitrary  perturbations on GNNs---the number of different predictions on subgraphs generated from the given graph and from the arbitrarily perturbed graph should be bounded (Theorem~\ref{thm:suffcond}).  
We then propose two graph division strategies, one is \emph{edge-centric} and the other is \emph{node-centric}, to obtain the upper bounded altered predictions.

\noindent {\bf Edge-centric graph division:} This strategy is inspired by \cite{xia2024gnncert}, in which we use a hash function to 
map \emph{edges} from the given graph into multiple subgraphs \emph{such that the edges  are disjoint in any two subgraphs} ({\bf Step I}). 
With it, we show that manipulating any edge in the given graph (via edge injection or deletion) only perturbs one subgraph and hence at most one subgraph prediction is altered (Theorem~\ref{thm:edgeperturb}). 
Further, by leveraging the underlying message-passing mechanism in GNNs and with careful analysis, we prove the generated subgraphs can also bound the different subgraph predictions caused by the node manipulation (Theorem~\ref{thm:nodeperturb}) and node feature manipulation (Theorem~\ref{thm:nodefeaperturb}). 
Together, these theorems ensure the number of subgraph  predictions be altered for any node/graph 
after arbitrary perturbation is bounded (Theorem \ref{thm:edgebased}). 
Further, based on the voting 
classifier in {\bf Step II} and  Theorem~\ref{thm:suffcond}, we derive in Theorem~\ref{thm:certifyedgebased} the certified perturbation size 
({\bf Step III}).  

\noindent {\bf Node-centric graph division:} The theoretical result under edge-centric
graph division reveals the robustness guarantee  
is largely dominated by the number of edges induced by the manipulated nodes and node features, which could be ineffective in practice. For instance, injected nodes could produce many edges by linking with many nodes in the graph to exceed the certified perturbation size. 
To mitigate the issue, we propose a \emph{node-centric} graph division method. 
Our key idea is that if we can ensure all edges of a manipulated node is in a same subgraph, this subgraph is the only one being affected under every node or node feature manipulation. 
However, naive solutions are ineffective. For instance, we can map nodes into different subgraphs such that they are \emph{non-overlapped}, but it fails for node classification, as every node only appears once in all subgraphs and all target nodes for classification only receive one vote, yielding vacuous robustness. 

To address it, we innovatively treat every undirected edge as two directed edges and map each node into a subgraph index only using its outgoing edges ({\bf Step I}). In doing so, all subgraphs are directed and only contain outgoing edges of the nodes with the corresponding index.
By leveraging these \emph{directed subgraphs} and the message-passing mechanism in GNNs, we can derive the same bounded number of altered subgraph predictions against edge manipulation (Theorem~\ref{thm:edgeperturb2}) as in edge-centric graph division. Moreover, this bound 
against arbitrary node or node feature manipulation is the  number of injected/deleted  nodes (Theorem~\ref{thm:nodeperturb2}) or number of nodes whose features can be arbitrarily perturbed (Theorem~\ref{thm:nodefeaperturb2}).  
\emph{This implies the bound is robust to the manipulated node that links with many even infinite number of edges.} 
Combining them, we derive the total bounded number of altered subgraph predictions against arbitrary perturbation in Theorem \ref{thm:nodebased}, 
and the certified perturbation size in Theorem~\ref{thm:certifynodebased} 
({\bf Step III}) with the built voting classifier on the directed subgraphs ({\bf Step II}).

\noindent {\bf Evaluation:} We extensively evaluate {\name} on multiple graph datasets and multiple node and graph classifiers against arbitrary perturbations. 
We use the certified node/graph  accuracy at perturbation size $m$ 
as the evaluation metric, which means the fraction of test nodes/graphs that are provably classified as the true label against arbitrary perturbations whose perturbation size is $m$. 
Our results show that: 1) Under edge-centric graph division, {\name} can obtain about 70\% (or 60\%) certified node (or graph) accuracy when 
the perturbation size is 200 (or 10), i.e., 200 (or 10) edges induced by the edge manipulation, injected/deleted edges associated with the node manipulation, and edges associated with node feature manipulation are arbitrarily perturbed; 2) Under node-centric graph division, {\name} can obtain similar 
certified node (or graph) accuracy when the total number of 200 (or 10) edges and nodes induced by edge, node, and node feature manipulations are arbitrarily perturbed, where the manipulated nodes allow to have infinite number of edges. 

As {\name} can also defend against fewer manipulations, 
we further compare it with the state-of-the-art certified defenses of GNNs for node classification against node injection attack~\cite{lai2023nodeawarebismoothingcertifiedrobustness}, and for graph classification against node feature or/and edge manipulation~\cite{xia2024gnncert}. 
Our results show {\name} significantly outperforms \cite{lai2023nodeawarebismoothingcertifiedrobustness}  under node-centric graph division, and outperforms \cite{xia2024gnncert} 
under both graph division methods. 

We also evaluate {\name} on two real-world graph datasets (Amazon co-purchasing dataset \cite{clusterGCN} with 2M nodes and 51M edges and Big-Vul code vulnerability dataset \cite{big_vul} with 10,900 vulnerable C++ codes) to demonstrate its scalability and practicability. Our results show {\name} obtains promising robustness guarantees with an acceptable computational overhead over the undefended GNNs.

\vspace{+0.05in}
\noindent {\bf Contributions:} Our contributions are summarized below: 
\begin{itemize}[leftmargin=*]

\vspace{-2mm}
\item We develop the first certified defense to robustify GNNs for node and graph classification against arbitrary perturbations on individual graphs. 

\vspace{-2mm}
\item We propose two strategies to realize our defense that leverages the 
unique message-passing mechanism in GNNs.  

\vspace{-2mm}
\item Our robustness guarantee  
is accurate with probability 1. 

\vspace{-2mm}
\item Our defense treat existing certified defenses as special cases, as well as significantly outperforming them. 
\vspace{-4mm}
\end{itemize}

\begin{table*}[!t]
    \vspace{-2mm}
    \addtolength{\tabcolsep}{-2pt}
    \footnotesize
    \centering 
    \begin{tabular}{c|c|c|c|c|c|c|c|c|c|c|c|c|c}
     \toprule
      {\bf GNN Task} &\multicolumn{6}{c|}{\bf Node Classification}&\multicolumn{6}{c|}{\bf Graph Classification}&\multirow{2}{*}{\bf Certification Type}\\
         \cline{1-13} 
        {\bf Attack Type} & $\mathcal{E}_{\pm}$ & {\bf X}! & \multicolumn{1}{c|}{$\mathcal{V}_{\pm}$} & $\mathcal{E}_{\pm} \& {\bf X}!$ & $\mathcal{V}_{\pm} \& {\bf X}!$ & Arbitrary & $\mathcal{E}_{\pm}$ & {\bf X}! & \multicolumn{1}{c|}{$\mathcal{V}_{\pm}$} & $\mathcal{E}_{\pm} \& {\bf X}!$ & $\mathcal{V}_{\pm} \& {\bf X}!$ & Arbitrary & {}\\
         \cline{1-14} 
         {\bf RS} \cite{wang2021certified} &\checkmark &\checkmark & $\times$ &$\times$ & $\times$ &$\times$ & \checkmark &\checkmark &$\times$&$\times$&$\times$ & $\times$ &\multirow{3}{*}{\bf Probabilistic} \\
         \cline{1-13} 
         {\bf Sparsity-Aware RS} \cite{bojchevski2020efficient} &\checkmark &\checkmark & \checkmark& $\bigcirc$ &$\times$& $\times$& \checkmark &\checkmark   & \checkmark&$\times$&$\times$&$\times$&{}\\
         \cline{1-13} 
         {\bf Node-Aware Bi-RS}\cite{lai2023nodeawarebismoothingcertifiedrobustness}&$\times$ &$\times$ & \checkmark & $\times$&$\times$&$\times$&$\times$ &$\times$  & $\times$ & $\times$&$\times$&$\times$&{}\\
         \cline{1-14} 
         {\bf GNNCert} \cite{xia2024gnncert}&$\bigcirc$ &$\bigcirc$ & $\times$&$\bigcirc$ &$\times$ &$\times$&\checkmark&\checkmark & $\times$  & \checkmark &$\times$&$\times$&\multirow{3}{*}{\bf Deterministic}\\
         \cline{1-13} 
         {\bf {\nameE}} (Ours)&\checkmark &\checkmark& \checkmark&\checkmark &\checkmark &\checkmark&\checkmark &\checkmark&\checkmark&\checkmark&\checkmark & \checkmark \\
         \cline{1-13} 
         {\bf {\nameN}} (Ours)&\checkmark &\checkmark & \checkmark&\checkmark &\checkmark &\checkmark&\checkmark &\checkmark&\checkmark&\checkmark&\checkmark  & \checkmark \\
       \bottomrule
    \end{tabular}
    \caption{Summarizing the existing certified defenses of GNN against adversarial perturbations and their capability against different types of manipulations. $\mathcal{E}_{\pm}$, $\mathcal{V}_{\pm}$, and ${\bf X}!$ represent the edge manipulation (injection/deletion), node manipulation (injection/deletion), and node feature perturbation, respectively. $\checkmark$ means the defense is able to defend the respective attack, $\bigcirc$ means the defense could be adapted to defend the attack, and $\times$ means not able to.}
    \label{tbl:sum_CS}
    \vspace{-4mm}
\end{table*}

\section{Background and Problem Definition}
\label{sec:background}

 \subsection{\bf Graph Neural Network (GNN)} 

Let a graph be $G=\{\mathcal{V},\mathcal{E},{\bf X}\}$, which consists of the nodes $\mathcal{V}$, node features  ${\bf X}$, and edges $\mathcal{E}$. We denote $u\in \mathcal{V}$ as a node, $e=(u,v) \in \mathcal{E}$ as an edge, and ${\bf X}_u$ as node $u$'s feature. 

GNNs learn representations for graph data 
by following the \textit{message passing} strategy with two operations, i.e.,  the {aggregate} operation \texttt{Agg} and {combine} operation \texttt{Comb}.
Specifically, \texttt{Agg} iteratively aggregates the representations of all neighbors of a node, while \texttt{Comb} updates the node’s representation by combining it with the aggregated neighbors’ representations. The two operations are formally defined below:

{
\vspace{-4mm}
\small
\begin{align}
\label{aggregate}
\bm{l}_v^{(k)} =  \texttt{Agg} \big(\big\{ \bm{h}_u^{(k-1)}: u \in \mathcal{N}(v) \big\} \big), \, \bm{h}_v^{(k)} = \texttt{Comb}\big(\bm{h}_v^{(k-1)},  \bm{l}_v^{(k)} \big),
\end{align}
}%
where $\bm{h}_v^{(k)}$ denotes node $v$'s representation in the $k$-th layer and $\bm{h}_v^{(0)}={\bf X}_v$. $\mathcal{N}(v)$ denotes the neighbors of $v$. 

Different GNNs use different aggregate and combine operations. For example, in Graph Convolutional Network (GCN)~\cite{kipf2017semi}, the two operations are integrated as follows:

{
\vspace{-4mm}
\small
 \begin{align}
 \label{aggregate_gcn}
 \bm{h}_v^{(k)} & = \texttt{ReLU}\big( \bm{W}^{(k)} \cdot \texttt{Mean}  \big \{  \bm{h}_{u}^{(k-1)}: u \in \mathcal{N}(v) \bigcup \bm{h}_v^{(k-1)} \big \} \big), 
 \end{align}
 }%
where the element-wise mean pooling function \texttt{Mean} acts as the aggregate operation and $\texttt{ReLU}$ the combine operation. $\Theta = \{\bm{W}^{(1)}, \cdots, \bm{W}^{(K)} \} $ are all the learned parameters.

A node $v$’s final representation $\bm{h}_v^{(K)}$ captures structural information within $v$’s $K$-hop neighbors, which are used for many tasks. In this paper we focus on the two classic classification tasks on graphs: node classification and graph classification. We  denote $f$ as the GNN node or graph classifier and $\mathcal{Y}$ as the set of all labels.  

\vspace{+0.05in}
 \noindent {\bf Node classification:} $f$ takes a graph $G$ as input and predicts each node $v \in G$ a label $y_v \in \mathcal{Y}$ based on $v$'s learnt representation $\bm{h}_v^{(K)}$. That is, $y_v = f(G)_v = \texttt{softmax}(\bm{h}_v^{(K)})$. 

\vspace{+0.05in}
 \noindent {\bf Graph classification:} $f$ takes a graph $G$ as input and predicts a label $y_G \in \mathcal{Y}$ for the whole graph $G$ by using all nodes' representations $\{\bm{h}_v^{(K)}\}_{v\in G}$. For instance, when averaging all nodes' final representations, we have $y_G = f(G) = \texttt{softmax}(\texttt{Avg}(\{\bm{h}_v^{(K)}\}_{v \in G}))$.

\subsection{Adversarial Attacks on GNNs}

In adversarial attacks against GNNs, an attacker can manipulate a graph $G=\{\mathcal{V},\mathcal{E},{\bf X}\}$ into a perturbed one $G' = \{\mathcal{V}',\mathcal{E}',{\bf X}'\}$, where $\mathcal{V}'$, $\mathcal{E}'$, ${\bf X}'$ are the perturbed version of $\mathcal{V}$, $\mathcal{E}$, and ${\bf X}$, respectively. 

\vspace{+0.05in}
\noindent {\bf Edge manipulation:} The attacker can 1) \emph{inject new edges} $\mathcal{E}_+$, and 2) \emph{delete existing edges}, denoted as $\mathcal{E}_{-}\subset \mathcal{E}$ from $G$.

\vspace{+0.05in}
\noindent {\bf Node manipulation:}
The attacker perturbs $G$ by (1) \emph{injecting new nodes} $\mathcal{V}_+$,  whose node feature denoted as ${\bf X}'_{\mathcal{V}_+}$ can be arbitrary, together with the arbitrarily injected new edges $\mathcal{E}_{\mathcal{V}_+} \subseteq \{(u,v) \notin \mathcal{E}, \forall u \in \mathcal{V}_+ \vee v \in \mathcal{V}_+ \}$ induced by 
$\mathcal{V}_+$; and (2) \emph{deleting existing nodes} $\mathcal{V}_- \subset \mathcal{V}$. When $\mathcal{V}_-$ are deleted, their features  ${\bf X}_{\mathcal{V}_-} \subseteq {\bf X}$ and all connected edges  $\mathcal{E}_{\mathcal{V}_-} = \{(u,v) \in \mathcal{E}, \forall u \in \mathcal{V}_- \vee v \in \mathcal{V}_- \}$ are also removed.

\vspace{+0.05in}
\noindent {\bf Node feature manipulation:} 
The attacker arbitrarily manipulates features ${\bf X}_{\mathcal{V}_r}$ of a set of representative nodes $\mathcal{V}_{r}$ to be ${\bf X}'_{\mathcal{V}_r}$. 
We also denote the edges connected with nodes $\mathcal{V}_{r}$  
as $\mathcal{E}_{\mathcal{V}_r}=\{(u,v) \in \mathcal{E}: \forall u \in\mathcal{V}_{r} \vee v \in\mathcal{V}_{r} \}$.

\vspace{+0.05in}
\noindent {\bf Arbitrary manipulation:} The attacker can manipulate the graph $G$ with an 
arbitrary combined perturbations on 
edges, nodes, and node features. 

\emph{For description simplicity, we will use $\{\mathcal{E}_+, \mathcal{E}_-\}$ to indicate the edge manipulation with arbitrary injected edges $\mathcal{E}_+$  and deleted edges $\mathcal{E}_-$ on $G$. Similarly, we will use $\{\mathcal{V}_+, \mathcal{E}_{\mathcal{V}_+}, {\bf X}'_{\mathcal{V}_+}, \mathcal{V}_-, \mathcal{E}_{\mathcal{V}_-}\}$ to indicate the node manipulation, and $\{\mathcal{V}_r, \mathcal{E}_{\mathcal{V}_r},{\bf X}'_{\mathcal{V}_r}\}$ the node feature manipulation. Any combination of the manipulations is inherently well-defined.}

\subsection{Voting based Certified Defense}
\label{sec:GNNCert}

Voting-based GNNCert~\cite{xia2024gnncert}
has achieved state-of-the-art certified defense performance against 
node feature and edge manipulation. Here we review~\cite{xia2024gnncert} since  
our defense is also based on voting.   
GNNCert is only applicable for graph classification and consists of three steps.  

\vspace{+0.05in}
\noindent {\bf Step I:  divide a graph into multiple subgraphs.} Given a graph $G=\{\mathcal{V},\mathcal{E},{\bf X}\}$, and a graph classifier $f$.  GNNCert uses a hash function $h$ (e.g., MD5) to generate the subgraphs for $G$. A hash function takes a bit string as input and outputs an integer (e.g., within a range $[0,2^{128}-1]$). It uses the string of edge or node index as the input to the hash function. 
For instance, for a node $u$,  its string is denoted as $\texttt{str}(u)$, while for an edge $e=(u,v)$, its string is $\texttt{str}(u)+\texttt{str}(v)$, where ``+" means string concatenation, and $\texttt{str}$ turns the node index into a string and adds “0” prefix to align it into a fixed length.

To defend against edge manipulation, it uses 
$h$ to map each edge into a subgraph index.
Assuming $T_e$ subgraphs 
in total, the subgraph index $i_e$ of every edge $e=(u,v)$ is defined as\footnote{In the undirected graph, we put the node with a smaller index (say  $u$) first and let 
$h[\mathrm{str}(v) + \mathrm{str}(u)]=h[\mathrm{str}(u) + \mathrm{str}(v)]$.} 

{
\vspace{-4mm}
\begin{align}
\label{eqn:edgehash}
i_e = h[\mathrm{str}(u) + \mathrm{str}(v)] \, \, \texttt{mod} \, \, T_e+1, 
\end{align}
}%
where $\texttt{mod}$ is the module function. Denoting $\mathcal{E}^i$ as the set of edges whose subgraph index is $i$, i.e., $\mathcal{E}^i = \{\forall e \in \mathcal{E}: i_e= i \}$,  $T_e$ subgraphs for $G$ can be built as $\mathcal{G}^e_T = \{ {G}_i = (\mathcal{V},\mathcal{E}^i,{\bf X}): i=1,2,\cdots, T_e\}$, where edges in different subgraphs are disjoint, i.e., $\mathcal{E}^i \cap \mathcal{E}^j =  \emptyset, \forall i,j \in \{1, \cdots, T_e\}, i \neq j$. 

To defend against node feature manipulation, it 
 uses $h$ to map each node into a subgraph index. 
Assuming $T_n$ subgraphs in total, the subgraph index $i_u$ of every node $u$ is

{
\vspace{-4mm}
\begin{align}
\label{eqn:nodehash}
i_u = h[\mathrm{str}(u)] \, \, \texttt{mod} \, \, T_n+1, 
\end{align} 
}%
It then uses ${\bf X}^i$ to denote the features of nodes whose subgraph index is $i$. 
Then, $T_n$ subgraphs  can be built as: $\mathcal{G}^n_T = \{ {G}_i = (\mathcal{V},\mathcal{E},{\bf X}^i): i=1,2,\cdots, T_n\}$,

To defend against both manipulations, it then constructs a total of $T = T_e \cdot T_n$ subgraphs $\mathcal{G}_T = \{G_t = (\mathcal{V},\mathcal{E}^i,{\bf X}^j), t = 1, \cdots, T_e \cdot T_n, i=\lceil t/T_n\rceil, j=t-(i-1)\cdot T_n\}$.

\noindent {\bf Step II: build a voting graph classifier on all subgraphs.} 
GNNCert applies the graph classifier $f$  to make predictions on all $T$ subgraphs, and count the vote $c_y$ for every class $y \in \mathcal{Y}$. 

{
\vspace{-4mm}
\begin{align}
c_{y_G} = \sum\nolimits_{i=1}^{T}\mathbb{I}(f(G_{i})=y_G), \forall y_G \in \mathcal{Y} \label{eqn:vote_GC} 
\end{align}
}%

It then defines a \emph{voting graph classifier} $\bar{f}$ as returning the class with the most vote:

{
\vspace{-4mm}
\begin{align}
\bar{f}(G) = \underset{y_G \in \mathcal{Y}}{\arg\max} 
\,  c_{y_G} \label{eqn:vc_GC} 
\vspace{-2mm}
\end{align}
}%

\noindent {\bf Step III: derive the deterministic robustness guarantee for the voting graph classifier.}
GNNCert guarantees that $T_n$ (or $T_e$) subgraphs are corrupted when an attacker injects or deletes an \emph{arbitrary} edge (or \emph{arbitrarily} perturb the features of a node). Then, GNNCert shows the voting classifier $\bar{f}$ tolerates up to $\lfloor M^f/T_e \rfloor$ perturbed edges {\bf \emph{OR}} $\lfloor M^f/T_n \rfloor$ of nodes with adversarially perturbed features, where $M^f \in [0,T_n\cdot T_e/2]$ is a constant depending on the number of votes of $f$'s output.

\vspace{+0.05in}
\noindent {\bf Limitations of GNNCert:} 1) It only derives the robustness guarantee against edge manipulation \emph{OR} node feature manipulation. 
{Under a very special case when $T_n = T_e = T$, we can derive its robustness against both edge \emph{AND} node feature manipulation, where the certified perturbation size is $\lfloor M^f/T \rfloor$.} However, its performance is worse than ours (See Figure~\ref{fig:ours-vs-gnncert-edgenode}). 
2) It is only applicable for graph classification. 3) It cannot defend against the well-known node injection attack.

\subsection{Problem Statement}
\label{sec:problem}

{\bf Threat model:} Given a GNN node/graph classifier $f$ and a graph $G$, the adversary can \emph{arbitrarily} manipulate a number of the edges, nodes, and node features in $G$ such that $f$ misclassifies target graphs in graph classification or target nodes in node classification. 
{For instance, when a social network platform deploys a GNN detector to detect fake users (the adversary) \cite{wang2017gang,xu2022evidence}, the fake users is motivated to evade them \cite{wang2019attacking}: they can modify their profiles,
their connections with some users, and create new fake accounts and connections to bypass detection.}
Since we focus on certified defenses, we consider the strongest attack where the adversary has white-box access to $G$ and $f$, i.e., it knows all the edges, nodes, and node features in $G$, and all the model parameters about $f$.

\vspace{+0.05in}
\noindent {\bf Defense goal:}
We aim to build a certifiably robust GNN that: 
\begin{itemize}[leftmargin=*]
\vspace{-2mm}
\item  has a deterministic robustness guarantee; 
\vspace{-2mm}
\item is suitable for both node and graph classification tasks; 
\vspace{-2mm}
\item provably predicts the same label against the arbitrary perturbation when the \emph{perturbation size}, i.e., the total number of manipulated nodes, nodes with feature perturbation, and edges, is bounded by a threshold, which we call the \emph{certified perturbation size}. 
\vspace{-2mm}
\end{itemize}

Our ultimate goal is to obtain the largest-possible certified perturbation size that satisfies all the above conditions.

\section{Our Voting-based Defense: {\name}}
\label{sec:defense}

In this section we introduce our voting-based certified defense {\name} for GNNs against arbitrary perturbations. We first give an overview of {\name} in Section~\ref{Sec:overview}, which consists of three critical steps, e.g., the first step is to divide a graph into multiple subgraphs with disjoint edges. We then 
design two distinct graph division strategies (one is edge-centric in Section~\ref{Sec:edgebased} inspired by \cite{xia2024gnncert} and the other is node-centric in Section~\ref{Sec:nodebased} by further enhancing the robustness guarantee).  
Within each strategy, we derive our deterministic certified robustness results, which can treat existing defenses as special cases. 
Figure~\ref{fig:overview} briefly illustrates our {\name}.

\begin{figure*}[t]
    \centering
    \captionsetup[subfloat]{labelsep=none, format=plain, labelformat=empty}

    \includegraphics[width=0.98\linewidth]{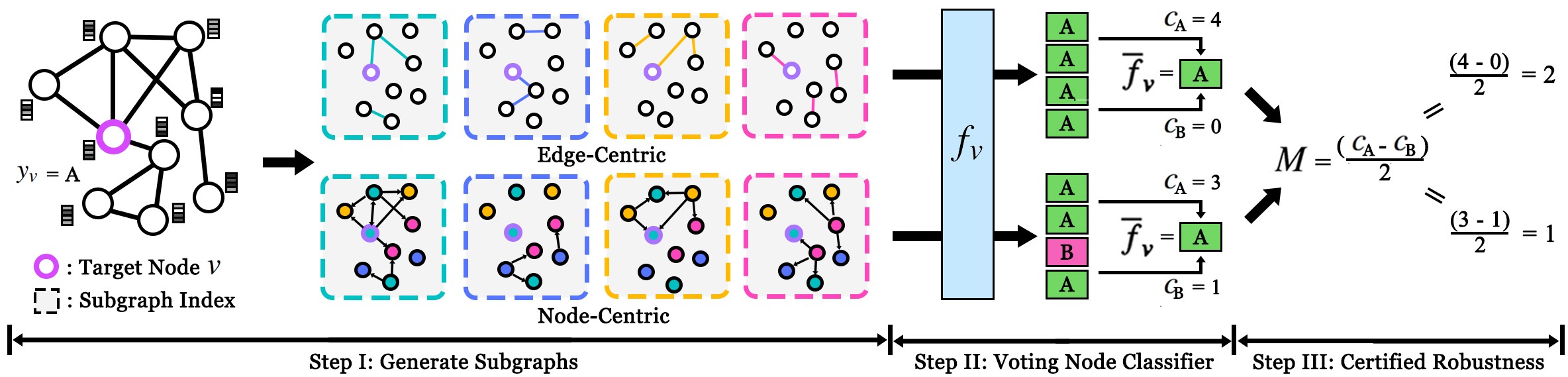}
    \vspace{-2mm}
    \caption{Overview of our {\name} (use node classification for illustration), which consists of three steps. Assume we are given an input graph $G$, a GNN node classifier $f$, and a target node $v$ with label $y_v$ for classification. {\bf Step I:}  it divides $G$ into a set of (e.g., 4) subgraphs via the proposed \emph{Edge-Centric Graph Division} (Section~\ref{Sec:edgebased}) or \emph{Node-Centric Graph Division} (Section~\ref{Sec:nodebased}) strategy. 
    {\bf Step II:} it builds a voting node classifier $\bar{f}$ based on all the subgraphs. Specifically, the target node's predicted class  by $f$ on all subgraphs are treated as votes, and $\bar{f}$ returns the class with the most vote as the final prediction. {\bf Step III:} it derives the certified perturbation size $M$ for $\bar{f}$  against arbitrary perturbations with a deterministic (100\%) guarantee. 
    }
    \label{fig:overview}
    \vspace{-4mm}
\end{figure*}

\vspace{-2mm}
\subsection{Overview}
\label{Sec:overview}

Given a graph $G=\{\mathcal{V},\mathcal{E},{\bf X}\}$, 
a GNN node/edge classifier $f$, the set of classes $\mathcal{Y}$, and a target node $v\in\mathcal{V}$ if the task is node classification. 
At a high level, our defense framework is similar to \cite{xia2024gnncert} that consists of three steps below:  

\vspace{+0.05in}
\noindent {\bf Step I:  divide the graph into multiple subgraphs.} We divide $G$ into 
a set of $T$ subgraphs $\mathcal{G}_T=\{G_{1},G_{2},\dots,G_{T}\}$ via a hash function and ensure  
edges in different subgraphs are \emph{disjoint}.

\vspace{+0.05in}
\noindent {\bf Step II: build a voting-based node/graph classifier:} We apply the GNN classifier $f$  to make predictions on all the $T$ subgraphs, and count the vote $c_y$ for every class $y$ in $\mathcal{Y}$. 

{
\vspace{-4mm}
\begin{align}
& \textbf{Node classifier: } c_{y_v} = \sum\nolimits_{i=1}^{T}\mathbb{I}(f(G_{i})_v=y_v), \forall y_v \in \mathcal{Y} \label{eqn:vote_NC} \\
& \textbf{Graph clasifier: } c_{y_G} = \sum\nolimits_{i=1}^{T}\mathbb{I}(f(G_{i})=y_G), \forall y_G \in \mathcal{Y} \label{eqn:vote_GC} 
\end{align}
}%

We then define our \emph{voting node/graph classifier} $\bar{f}$ as returning the class with the most vote:

{
\vspace{-4mm}
\begin{align}
& \textbf{Voting node classifier: } \bar{f}(G)_v = \underset{y_v \in \mathcal{Y}}{\arg\max} c_{y_v} \label{eqn:vc_NC} \\
& \textbf{Voting graph classifier: } \bar{f}(G) = \underset{y_G \in \mathcal{Y}}{\arg\max} c_{y_G} \label{eqn:vc_GC} 
\end{align}
}%

\vspace{-2mm}
\noindent {\bf Step III: derive the deterministic robustness guarantee.} 
We denote $y_a$ and $y_b$  as the class with the most vote $c_{y_a}$ and 
the second-most vote $c_{y_b}$, respectively.   
We pick the class with a smaller index if ties exist.  
Denote $G'$ as the perturbed graph of $G$ under arbitrary perturbation, and  $\mathcal{G}_T'=\{G'_{1},G'_{2},\dots,G'_{T}\}$ be the set of $T$ subgraphs generated for $G'$ under the same graph division strategy.  
Then we have the below condition for certified robustness against arbitrary attacks on GNNs.

\begin{theorem}[Sufficient Condition for Certified Robustness]
\label{thm:suffcond}
\vspace{-1mm}
Let $y_a, y_b, c_{y_a}, c_{y_b}$ be defined above in node classification or graph classification, and let $M = {\lfloor c_{y_a}-c_{y_b}-\mathbb{I}(y_{a}>y_{b})\rfloor} / {2}$. The voting classifier $\bar{f}$ guarantees the same prediction on both $G'$ and $G$ for the target node $v$ in node classification or the target graph $G$ in graph classification, if the number of subgraphs' predictions on $\{G_i\}$'s and $\{G'_i\}$' that are different under the arbitrary perturbation is bounded by $M$. I.e., 

{
\vspace{-4mm}
\small
\begin{align}
    & \forall G': \sum\nolimits_{i=1}^{T}\mathbb{I}(f(G_{i})_v\neq f(G'_{i})_v) \leq M \implies \bar{f}(G)_v = \bar{f}(G')_v \label{eqn:suff-NC} \\ 
    & \forall G': \sum\nolimits_{i=1}^{T}\mathbb{I}(f(G_{i}) \neq f(G'_{i})) \leq M  \implies \bar{f}(G) = \bar{f}(G')
    \label{eqn:suff-GC}
\end{align}
}
\end{theorem}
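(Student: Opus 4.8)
The plan is to prove both displayed implications at once by the standard stability argument for plurality voting; I will carry out the node-classification case, since the graph-classification case is word-for-word the same after replacing each per-node prediction $f(G_i)_v$ by the whole-graph prediction $f(G_i)$. Fix an arbitrary perturbed graph $G'$ together with its subgraphs $\mathcal{G}'_T=\{G'_1,\dots,G'_T\}$, and let $r=\sum_{i=1}^{T}\mathbb{I}(f(G_i)_v\neq f(G'_i)_v)\le M$ be the number of subgraphs whose prediction is altered. For each class $y$, let $c_y$ and $c'_y$ denote its vote count on $\{G_i\}$ and on $\{G'_i\}$ as in \eqref{eqn:vote_NC}. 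The first step is the elementary observation that a single altered subgraph can remove at most one vote from any fixed class and add at most one vote to any fixed class: the indices at which the prediction moves away from $y_a$, and the indices at which it moves onto a given $y$, are disjoint subsets of the $r$ altered indices, so $c'_{y_a}\ge c_{y_a}-r$ and $c'_y\le c_y+r$ for every $y\neq y_a$.

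The second step is to show the winner is unchanged, i.e.\ $\bar{f}(G')_v=y_a$. Because $\bar{f}$ outputs the class of largest vote count and breaks ties toward the smaller index, $y_a$ remains the output as soon as $c'_{y_a}-c'_y\ge \mathbb{I}(y_a>y)$ for every $y\neq y_a$. Chaining the two bounds from Step~1 gives $c'_{y_a}-c'_y\ge c_{y_a}-c_y-2r\ge c_{y_a}-c_y-2M$; since $y_b$ carries the second-largest count we have $c_y\le c_{y_b}$, hence $c'_{y_a}-c'_y\ge c_{y_a}-c_{y_b}-2M\ge \mathbb{I}(y_a>y_b)$, where the last inequality holds by the definition of $M$. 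It then remains to check $\mathbb{I}(y_a>y_b)\ge \mathbb{I}(y_a>y)$ in the binding case, which I would argue by cases: if $c_y<c_{y_b}$ there is an extra unit of slack, so the bound already reads $\ge 1$; if $c_y=c_{y_b}$ with $y\neq y_b$, then $y_b$ being the smallest-index runner-up forces $y>y_b$, so $y_a>y$ would imply $y_a>y_b$; and $y=y_b$ is immediate. This yields $\bar{f}(G')_v=y_a=\bar{f}(G)_v$, establishing \eqref{eqn:suff-NC}, and the identical computation with $f(G_i)$ in place of $f(G_i)_v$ establishes \eqref{eqn:suff-GC}.

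I expect the only genuinely delicate part to be the tie-breaking bookkeeping in the last step: one has to use the ``smallest index among the (second-)maximizers'' convention consistently in the definitions of $y_a,y_b$ and in the $\arg\max$ defining $\bar{f}$, so that the indicator $\mathbb{I}(y_a>y_b)$ inside $M$ is precisely the quantity the $\arg\max$ comparison requires and the bound is neither loose nor off by one; the degenerate sub-cases (for instance $M\le 0$, or a tie $c_{y_a}=c_{y_b}$ at the top forcing $r=0$) are also cleanest to dispose of here. Everything else reduces to counting altered subgraphs and elementary integer arithmetic.
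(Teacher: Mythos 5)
Your proof is correct and follows essentially the same route as the paper's: bound each class's vote change by the number $r$ of altered subgraph predictions ($c'_{y_a}\ge c_{y_a}-r$, $c'_y\le c_y+r$), then enforce the post-perturbation winner condition $c'_{y_a}\ge c'_y+\mathbb{I}(y_a>y)$ for all $y\neq y_a$. The only cosmetic difference is in the tie-breaking bookkeeping --- you argue by cases on whether $c_y<c_{y_b}$ or $c_y=c_{y_b}$, whereas the paper closes the same gap with the single inequality $\mathbb{I}(y_{a}>y_{b})\geq \mathbb{I}(y_{a}>y_{c})-\mathbb{I}(y_{b}>y_{c})$ combined with $c_{y_{b}}-\mathbb{I}(y_{b}>y_{c})\geq c_{y_{c}}$; both are sound.
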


\noindent \emph{Proof.} See Appendix \ref{app:suffcond}.

\vspace{+0.05in}
The above theorem motivates us to design the graph division method such that: 1) the number of different subgraph predictions on $\mathcal{G}_T$ and $\mathcal{G}'_T$ can be upper bounded (and the smaller the better).  
2) the difference between the most vote $c_{y_a}$ and second-most vote  $c_{y_b}$ is as large as possible, in order to ensure larger certified perturbation size.  

Next, we introduce our two graph division methods.  
Figure~\ref{fig:subgraphs} visualizes the divided subgraphs of the two methods without and with the adversarial manipulation.

\begin{figure*}[t]
    \centering
    \captionsetup[subfloat]{labelsep=none, format=plain, labelformat=empty}

    \subfloat[{(a) Edge-Centric Graph Division against edge injection and node injection attacks}]{
     \includegraphics[width=0.9\linewidth]{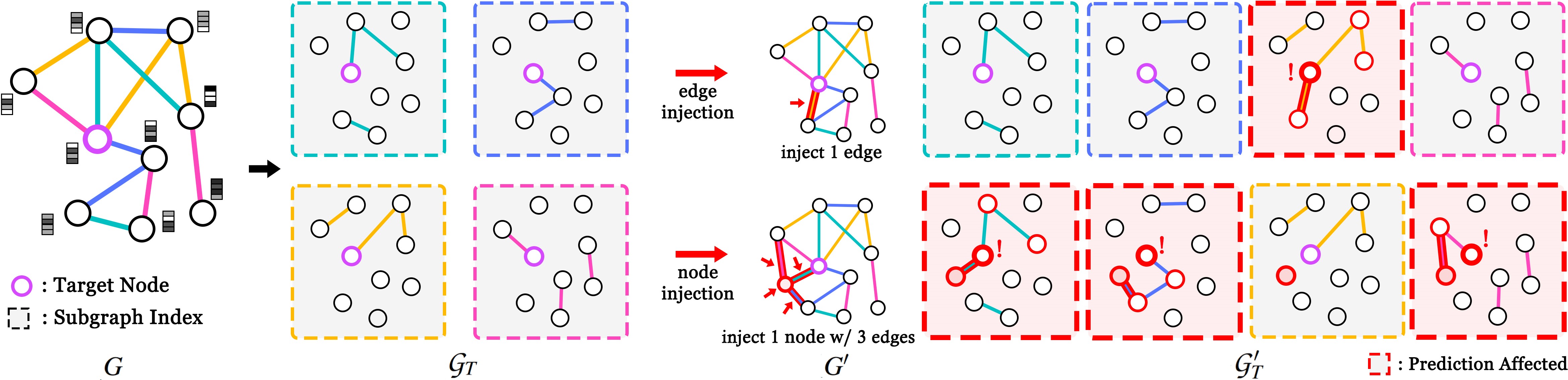}}
   
    \hspace{+20mm}
    \subfloat[{(b) Node-Centric Graph Division against edge injection and node injection attacks}]{
    \includegraphics[width=0.9\linewidth]{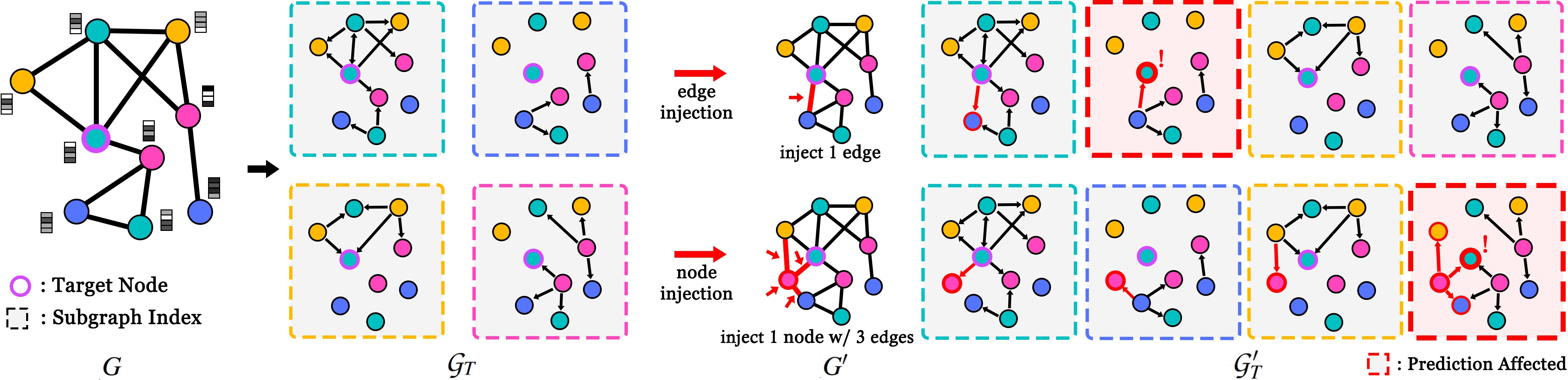}}
    
    \vspace{-2mm}
    \caption{Illustration of our edge-centric and node-centric graph division strategies for node classification. 
    We use edge injection and node injection attacks to show the bounded number of altered predictions on the generated subgraphs after the attack. {\bf To summarize:} 1 injected edge  affects at most 1 subgraph prediction in both graph division strategies. In contrast, 1 injected node with, e.g., $3$ injected edges can affect (at most) 3 subgraph predictions with edge-centric graph division, but at most 1 subgraph prediction with node-centric graph division. Figures~\ref{fig:subgraphs_NC_more}-\ref{fig:subgraphs_GC} 
    in Appendix 
    also show other attacks and on graph classification.    
    }
    \label{fig:subgraphs}
    \vspace{-4mm}
\end{figure*}

\label{Sec:Subgraph}

\subsection{Edge-Centric Graph Division}
\label{Sec:edgebased}

Our first graph division method is edge-centric inspired by~\cite{xia2024gnncert}. 
The idea is 
to divide \emph{edges} in a graph into different subgraphs, such that each edge is deterministically mapped into \emph{only one subgraph}. 
With this strategy, we can bound the number of altered predictions on these subgraphs before and after the arbitrary perturbation (Theorem~\ref{thm:edgebased}), which facilitates deriving the certified perturbation size (Theorem~\ref{thm:certifyedgebased}).
Next, we show our edge-centric graph division method in detail.

\noindent {\bf Generating edge-centric subgraphs:} 
We follow \cite{xia2024gnncert} to 
use the hash function to map edges as shown in Equation \ref{eqn:edgehash}. 
We build $T$ subgraphs for $G$ as $\mathcal{G}_T = \{ {G}_i = (\mathcal{V},{\bf X}, \mathcal{E}^i): i=1,2,\cdots, T\}$, where 
$\mathcal{E}^i \cap \mathcal{E}^j =  \emptyset, \forall i,j \in \{1, \cdots, T\}, i \neq j$.

Recall that \cite{xia2024gnncert} maps both edges and node features to generate two sets of subgraphs to defend against node feature and edge manipulations. Instead, our method only needs to map edges into a set of subgraphs, which is not only efficient, but also obtains much defense performance. 

\noindent {\bf Bounding the number of different subgraph predictions:} 
For a perturbed graph $G'$, we use the same graph division strategy to generate a set of $T$ subgraphs $\mathcal{G}'_T = \{G'_1, G'_2, \cdots, G'_T\}$. Then, we can upper bound the number of different subgraph predictions on $\mathcal{G}_T$ and $\mathcal{G}'_T$ against any individual perturbation. 

\begin{theorem}[]
\label{thm:edgeperturb}
Assume a graph $G$ is under the edge manipulation $\{\mathcal{E}_+,\mathcal{E}_-\}$, 
then at most $|\mathcal{E}_+| + |\mathcal{E}_-|$ subgraphs generated by our edge-centric graph division have different predictions between $\mathcal{G}'_T$ and $\mathcal{G}_T$. 
\end{theorem}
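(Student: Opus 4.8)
The plan is to track, edge by edge, how the edge-centric hash assignment changes when we pass from $G$ to the perturbed graph $G' = \{\mathcal{V}, \mathcal{E} \cup \mathcal{E}_+ \setminus \mathcal{E}_-, {\bf X}\}$, and to argue that each manipulated edge can ``contaminate'' at most one of the $T$ subgraphs. The key structural fact I would invoke is that the subgraph index $i_e = h[\mathrm{str}(u)+\mathrm{str}(v)] \bmod T + 1$ of an edge $e=(u,v)$ depends \emph{only on the endpoints of $e$ itself}, not on any other part of the graph. Consequently, for every edge $e$ that appears in \emph{both} $\mathcal{E}$ and $\mathcal{E}' := \mathcal{E} \cup \mathcal{E}_+ \setminus \mathcal{E}_-$ (i.e., every unmanipulated edge), its assigned index $i_e$ is identical in the division of $G$ and in the division of $G'$. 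The node set $\mathcal{V}$ and feature matrix ${\bf X}$ are untouched, so for each $i$, the only difference between $G_i = (\mathcal{V}, \mathcal{E}^i, {\bf X})$ and $G'_i = (\mathcal{V}, (\mathcal{E}')^i, {\bf X})$ is in the edge sets $\mathcal{E}^i$ versus $(\mathcal{E}')^i$.

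First I would define the set of affected indices $I_{\mathrm{aff}} = \{\, i_e : e \in \mathcal{E}_+ \cup \mathcal{E}_- \,\}$, and observe that $|I_{\mathrm{aff}}| \le |\mathcal{E}_+| + |\mathcal{E}_-|$ since $I_{\mathrm{aff}}$ is the image of a set of size at most $|\mathcal{E}_+| + |\mathcal{E}_-|$ under the index map. Next I would show that for every $i \notin I_{\mathrm{aff}}$ we have $\mathcal{E}^i = (\mathcal{E}')^i$: indeed, an edge lies in $\mathcal{E}^i$ iff it lies in $\mathcal{E}$ and hashes to $i$; since $i$ is not the index of any added or deleted edge, $\mathcal{E} \cap \{e : i_e = i\} = \mathcal{E}' \cap \{e : i_e = i\}$, using that $\mathcal{E}$ and $\mathcal{E}'$ differ only on edges whose indices lie in $I_{\mathrm{aff}}$. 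Hence $G_i = G'_i$ as graphs for all $i \notin I_{\mathrm{aff}}$, so $f(G_i) = f(G'_i)$ (and likewise $f(G_i)_v = f(G'_i)_v$ in the node classification case). Therefore the set of indices $i$ with $f(G_i) \ne f(G'_i)$ is contained in $I_{\mathrm{aff}}$, giving the bound $\sum_{i=1}^T \mathbb{I}(f(G_i) \ne f(G'_i)) \le |I_{\mathrm{aff}}| \le |\mathcal{E}_+| + |\mathcal{E}_-|$.

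I do not expect a serious obstacle here — the argument is essentially a bookkeeping exercise exploiting the locality of the hash assignment. The one point that needs a little care is the undirected-graph convention: an edge $(u,v)$ with $u < v$ is hashed via $\mathrm{str}(u)+\mathrm{str}(v)$, and the footnote convention guarantees $(v,u)$ receives the same index, so the index map is genuinely well-defined on undirected edges and the ``depends only on endpoints'' claim is unambiguous. A second minor subtlety is that $|I_{\mathrm{aff}}|$ could be strictly smaller than $|\mathcal{E}_+|+|\mathcal{E}_-|$ when two manipulated edges collide to the same subgraph index, but that only strengthens the inequality, so the stated ``at most'' bound holds in all cases. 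I would close by noting the argument is uniform across the node- and graph-classification definitions of $f$, since in both cases $f(G_i)$ depends only on the subgraph $G_i$ itself.
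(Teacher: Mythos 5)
Your proposal is correct and follows essentially the same route as the paper's proof: each manipulated edge's hash index is local to its endpoints, so only the subgraphs indexed by manipulated edges can change, bounding the number of altered predictions by $|\mathcal{E}_+| + |\mathcal{E}_-|$. Your write-up simply makes explicit (via the affected-index set and the observation that unaffected subgraphs are identical as graphs) the bookkeeping that the paper states more tersely.
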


\begin{proof}
Edges in all subgraphs of $\mathcal{G}_T$ are disjoint. 
Hence, when any edge in $G$ is deleted or added by an adversary, only one subgraph from $\mathcal{G}_T$ is affected. Further, when any $|\mathcal{E}_+| + |\mathcal{E}_-|$ edges in $G$ are perturbed, there are at most $|\mathcal{E}_+| + |\mathcal{E}_-|$ subgraphs between $\mathcal{G}_T$ and $\mathcal{G}'_T$ are different.
By applying the node/graph classifier on $\mathcal{G}_T$ and $\mathcal{G}'_T$, there are at most $|\mathcal{E}_+| + |\mathcal{E}_-|$ predictions that are different between them.  
\end{proof}

 Unlike edge manipulation, both node and node feature manipulations involve all components (i.e., edges, nodes, and node features) in the graph. At first glance, it seems hard to bound the alter subgraph predictions in this case. After careful analysis, we observe 
 the underlying  message-passing mechanism in GNNs (Section~\ref{sec:background}) still facilitates us to obtain the upper bound shown below. 

\begin{theorem}[]
\label{thm:nodeperturb} 
Assume a graph $G$ is under the node manipulation  
$\{\mathcal{V}_+, \mathcal{E}_{\mathcal{V}_+},{\bf X}'_{\mathcal{V}_+},\mathcal{V}_-, \mathcal{E}_{\mathcal{V}_-}\}$,
then at most $|\mathcal{E}_{\mathcal{V}_+}| + | \mathcal{E}_{\mathcal{V}_-}| $ subgraphs generated by our edge-centric graph division have different predictions between $\mathcal{G}'_T$ and $\mathcal{G}_T$. 
\end{theorem}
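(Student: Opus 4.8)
The plan is to reduce node manipulation to edge manipulation plus a benign observation about isolated/new nodes. The key insight is that in the edge-centric division, every subgraph $G_i = (\mathcal{V}, {\bf X}, \mathcal{E}^i)$ shares the \emph{same} node set and \emph{same} node features; only the edge sets $\mathcal{E}^i$ differ. So the subgraphs only ``see'' a node through its incident edges. If a node $u$ appears in $G$ but has no edges assigned to subgraph $i$, then $u$ is isolated in $G_i$, its representation $\bm{h}_u^{(K)}$ depends only on ${\bf X}_u$, and — crucially — $u$ contributes nothing to the message passing of any other node in $G_i$. Hence adding or removing such an edge-free node cannot change any prediction in $G_i$.

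First I would set up the comparison between $\mathcal{G}_T$ (subgraphs of $G$) and $\mathcal{G}'_T$ (subgraphs of $G'$, where $G'$ is $G$ after the node manipulation $\{\mathcal{V}_+, \mathcal{E}_{\mathcal{V}_+}, {\bf X}'_{\mathcal{V}_+}, \mathcal{V}_-, \mathcal{E}_{\mathcal{V}_-}\}$). The edge set of $G'$ is $\mathcal{E}' = (\mathcal{E} \setminus \mathcal{E}_{\mathcal{V}_-}) \cup \mathcal{E}_{\mathcal{V}_+}$, so the net effect on edges is: delete the edges $\mathcal{E}_{\mathcal{V}_-}$ and inject the edges $\mathcal{E}_{\mathcal{V}_+}$. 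Since the hash function in Equation~\ref{eqn:edgehash} assigns each edge to a subgraph index based only on its endpoints' strings — independent of which other nodes/edges are present — each injected edge in $\mathcal{E}_{\mathcal{V}_+}$ lands in exactly one subgraph $G'_{i}$, and each deleted edge in $\mathcal{E}_{\mathcal{V}_-}$ was in exactly one subgraph $G_i$. Therefore at most $|\mathcal{E}_{\mathcal{V}_+}| + |\mathcal{E}_{\mathcal{V}_-}|$ distinct subgraph indices have their edge set altered.

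Next I would argue that for every subgraph index $i$ that is \emph{not} among those altered indices, we have $f(G_i)_v = f(G'_i)_v$ (node classification) or $f(G_i) = f(G'_i)$ (graph classification). The subtlety is that $G_i$ and $G'_i$ may still differ in their node sets: $G'_i$ contains the injected nodes $\mathcal{V}_+$ and lacks the deleted nodes $\mathcal{V}_-$, plus the injected nodes carry arbitrary features ${\bf X}'_{\mathcal{V}_+}$. However, since none of the edges in $\mathcal{E}_{\mathcal{V}_+}$ or $\mathcal{E}_{\mathcal{V}_-}$ were assigned to index $i$, every injected node in $\mathcal{V}_+$ is isolated in $G'_i$ (all its edges, being in $\mathcal{E}_{\mathcal{V}_+}$, went elsewhere), and every deleted node in $\mathcal{V}_-$ was isolated in $G_i$ (all its edges were in $\mathcal{E}_{\mathcal{V}_-}$). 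By the message-passing recursion of Equation~\ref{aggregate}, an isolated node neither affects nor is affected by the representations of the remaining (non-isolated) nodes; for graph classification I would additionally note that a graph-level readout (e.g.\ \texttt{Avg}) over a fixed non-isolated core is unchanged if we also assume the readout is taken over the task-relevant nodes, or — more conservatively, matching the paper's framework — that the contribution of isolated nodes is accounted for identically; in the node-classification case the target node $v$ is the non-isolated core's concern and its $K$-hop neighborhood in $G_i$ equals that in $G'_i$. Hence the prediction on subgraph $i$ is unchanged.

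Putting it together: the number of subgraphs with $f(G_i)_v \neq f(G'_i)_v$ (or $f(G_i)\neq f(G'_i)$) is at most the number of altered edge-index subgraphs, which is at most $|\mathcal{E}_{\mathcal{V}_+}| + |\mathcal{E}_{\mathcal{V}_-}|$, giving the claim. The main obstacle I anticipate is the careful handling of the graph-classification readout and of isolated nodes: one must verify that the GNN's aggregate/combine operations genuinely ignore isolated nodes (true for the mean/sum-neighbor aggregators in Equations~\ref{aggregate}--\ref{aggregate_gcn}, where an isolated node's neighbor set is empty or just itself) and that the global pooling is not distorted by the presence or absence of those isolated injected/deleted nodes — this may require a mild assumption on the readout, or invoking the same convention the paper already uses for subgraph node sets. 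The edge-counting half of the argument is a direct consequence of Theorem~\ref{thm:edgeperturb}'s disjointness reasoning applied to $\mathcal{E}_{\mathcal{V}_+}$ and $\mathcal{E}_{\mathcal{V}_-}$.
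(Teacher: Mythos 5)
Your proposal is correct and follows essentially the same route as the paper's proof: the hash mapping confines the injected/deleted edges to at most $|\mathcal{E}_{\mathcal{V}_+}|+|\mathcal{E}_{\mathcal{V}_-}|$ subgraphs, and in every other subgraph the manipulated nodes are isolated, so by the message-passing recursion they cannot influence any other node's representation. One remark on the obstacle you flagged: your worry about the graph-classification readout is legitimate and is \emph{not} resolved in the paper's proof of this theorem either --- the paper's argument only establishes that isolated injected nodes do not affect \emph{other} nodes' representations, yet an isolated injected node with arbitrary features still contributes its own $\bm{h}_v^{(K)}$ to a global pooling such as $\texttt{Avg}$, which could in principle perturb every subgraph's graph-level prediction; the paper only acknowledges and fixes this issue (by removing off-index nodes and adding an auxiliary readout node) in the node-centric division of Section~\ref{Sec:nodebased}, so for the edge-centric case one must indeed add the kind of convention or readout assumption you describe. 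Your treatment is, if anything, more careful than the paper's on this point.
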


\begin{theorem}[]
\label{thm:nodefeaperturb} 
Assume a graph $G$ is under the node feature manipulation 
$\{\mathcal{V}_r, \mathcal{E}_{\mathcal{V}_r},{\bf X}'_{\mathcal{V}_r}\}$, 
then at most $|\mathcal{E}_{\mathcal{V}_r}|$ subgraphs generated by our edge-centric graph division have different predictions between $\mathcal{G}'_T$ and $\mathcal{G}_T$. 
\end{theorem}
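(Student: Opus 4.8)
The plan is to follow the skeleton of the proof of Theorem~\ref{thm:edgeperturb}, replacing the statement ``each altered edge lives in exactly one subgraph'' by ``the perturbed features on $\mathcal{V}_r$ can influence only those subgraphs that receive an edge incident to $\mathcal{V}_r$''. Concretely, let $S = \{\, i_e : e \in \mathcal{E}_{\mathcal{V}_r}\,\}$ be the set of subgraph indices onto which the hash map of Equation~\ref{eqn:edgehash} sends the edges incident to $\mathcal{V}_r$. Since every edge receives exactly one index, $|S| \le |\mathcal{E}_{\mathcal{V}_r}|$. I claim it is enough to prove that $f(G_i) = f(G'_i)$ in graph classification (and $f(G_i)_v = f(G'_i)_v$ for the target node $v$ in node classification) for every index $i \notin S$; then $\sum_i \mathbb{I}(f(G_i)\neq f(G'_i))$ is supported on $S$ and hence bounded by $|S| \le |\mathcal{E}_{\mathcal{V}_r}|$, exactly as in Theorem~\ref{thm:edgeperturb}.

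The first thing to record is what can possibly differ between $G_i$ and $G'_i$: node feature manipulation leaves $\mathcal{E}$ (hence each $\mathcal{E}^i$) untouched, so $G_i$ and $G'_i$ have identical node and edge sets, and their features agree everywhere except on $\mathcal{V}_r$. The second is a structural fact about the edge-centric construction that I would verify explicitly: the feature attached to a node $r$ matters to the prediction on subgraph $G_i$ only if $r$ is incident to some edge of $\mathcal{E}^i$ --- a node with no incident edge in $\mathcal{E}^i$ is isolated in $G_i$, so it neither lies on any message-passing path nor contributes a feature-dependent term to the graph-level readout $\texttt{Avg}(\{\bm{h}_u^{(K)}\})$. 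For $i \notin S$, by definition of $S$ no edge incident to any node of $\mathcal{V}_r$ has index $i$, so every node of $\mathcal{V}_r$ is isolated in $G_i$; combined with the previous sentence this already gives $f(G_i) = f(G'_i)$ in the graph-classification case.

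For node classification with target $v$ and $i \notin S$, I would invoke the $K$-hop locality of the message-passing recursion (Equation~\ref{aggregate}): $\bm{h}_v^{(K)}$ evaluated on $G_i$ depends only on the nodes within $K$ hops of $v$ in $G_i$, the edges among them, and their features. If $v \notin \mathcal{V}_r$, reaching any node of $\mathcal{V}_r$ from $v$ inside $G_i$ would require traversing an edge incident to $\mathcal{V}_r$, but no such edge carries index $i$, so $v$'s $K$-hop receptive field in $G_i$ is disjoint from $\mathcal{V}_r$ and the perturbed features never enter the computation of $\bm{h}_v^{(K)}$; if $v \in \mathcal{V}_r$, then $v$ itself is isolated in $G_i$ and its feature is inactive there. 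Either way $\bm{h}_v^{(K)}$, hence $f(G_i)_v$, is unchanged, which closes the claim and the counting argument.

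I expect the main obstacle to be making the ``quarantine'' step airtight: one has to argue carefully, using both the edge-disjointness of the subgraphs produced by Equation~\ref{eqn:edgehash} and the finite ($K$-hop) receptive field of a $K$-layer GNN, that a subgraph receiving no edge incident to $\mathcal{V}_r$ is genuinely insensitive to the feature perturbation --- for node classification via receptive fields, and for graph classification via the fact that isolated nodes contribute a perturbation-independent term to the readout. The remaining pieces (bounding $|S|$ by $|\mathcal{E}_{\mathcal{V}_r}|$ and summing indicators over subgraphs) are routine and parallel Theorem~\ref{thm:edgeperturb}.
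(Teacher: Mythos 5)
Your route is the same as the paper's: its (joint) proof of Theorems~\ref{thm:nodeperturb} and~\ref{thm:nodefeaperturb} rests on exactly the two ingredients you isolate --- the edges incident to the manipulated nodes occupy at most $|\mathcal{E}_{\mathcal{V}_r}|$ subgraphs because the hash partition makes the $\mathcal{E}^i$ disjoint, and in every other subgraph the nodes of $\mathcal{V}_r$ are isolated, so message passing cannot carry their perturbed features to any other node. Your index set $S$, the bound $|S|\le|\mathcal{E}_{\mathcal{V}_r}|$, and the $K$-hop receptive-field argument for a target $v\notin\mathcal{V}_r$ are faithful, somewhat more detailed renderings of what the paper writes.

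The one place you go beyond the paper is also the one place the argument is not airtight, and you half-acknowledge this in your last paragraph: the claim that an isolated node's feature is ``inactive'' --- that it contributes a perturbation-independent term to the readout $\texttt{Avg}(\{\bm{h}_u^{(K)}\})$, and that $f(G_i)_v$ is unchanged when the target $v\in\mathcal{V}_r$ is itself isolated in $G_i$ --- is not supported by the update rule in Equations~\ref{aggregate} and~\ref{aggregate_gcn}. The combine step feeds $\bm{h}_v^{(k-1)}$ back into $\bm{h}_v^{(k)}$, so an isolated node's final representation still depends on its own (perturbed) input feature; for GCN one gets $\bm{h}_v^{(k)}=\texttt{ReLU}\big(\bm{W}^{(k)}\bm{h}_v^{(k-1)}\big)$ for an isolated $v$. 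Consequently, for graph classification the readout over all nodes does change, and for node classification with $v\in\mathcal{V}_r$ every subgraph's prediction on $v$ can change. The paper's own proof has the same hole: it correctly asserts that isolated manipulated nodes ``have no influence on \emph{other} nodes' representations'' and then jumps to ``predictions maintain the same,'' which is precisely the step you tried to close. So you identified the delicate point but resolved it with a claim that is false for the architectures in scope; a clean fix would either exclude the target node from $\mathcal{V}_r$ in the node-classification statement and drop isolated nodes from the graph-level readout, or accept the same leap the paper makes.
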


\begin{proof}
Our proof for the above two theorems is based on the key observation that manipulations on isolated nodes have no influence on other nodes' representations in GNNs. 
Take node injection for instance and the proof for other cases are similar.  
Note that all subgraphs after node injection will contain the newly injected nodes, but they still do not have overlapped edges between each other via the hash mapping. Hence, the edges $E_{\mathcal{V}_+}$ induced by the injected nodes $\mathcal{V}_+$ exist in at most $|E_{\mathcal{V}_+}|$ subgraphs. In other word, the injected nodes $\mathcal{V}_+$ in at least $T-|E_{+}|$ subgraphs have no  edges and are isolated. 

Due to the message passing mechanism in GNNs, every node only uses its neighboring nodes' representations to update its own representation. Hence, 
the isolated injected nodes, whatever their features ${\bf X}'_{\mathcal{V}_+}$ are, would have no influence on other nodes' representations, implying at least $T-|E_{+}|$ subgraphs' predictions maintain the same. 
\vspace{-2mm}
\end{proof}

With above theorems, we can bound the total number of different subgraph predictions with \emph{arbitrary perturbation}.

\begin{theorem}[Bounded Number of Edge-Centric Subgraphs with Altered Predictions under Arbitrary Perturbation]
\label{thm:edgebased} 
Given any GNN node/graph classifier $f$, a graph $G$,  
and $T$ edge-centric subgraphs $\mathcal{G}_T$ for $G$. 
A perturbed graph $G'$ of 
$G$ is 
with arbitrary edge manipulation $\{\mathcal{E}_+,\mathcal{E}_-\}$, node manipulation  
$\{\mathcal{V}_+, \mathcal{E}_{\mathcal{V}_+}, \mathcal{V}_-, \mathcal{E}_{\mathcal{V}_-}\}$, and node feature manipulation 
$\{{\bf X}_{\mathcal{V}_r}, \mathcal{V}_r, \mathcal{E}_{\mathcal{V}_r}\}$. 
Then at most $m=|\mathcal{E}_+| + |\mathcal{E}_-| + |\mathcal{E}_{\mathcal{V}_+}| + |\mathcal{E}_{\mathcal{V}_-}| + |\mathcal{E}_{\mathcal{V}_r}|$ 
predictions are different by the node/graph classifier $f$ on the subgraphs $\mathcal{G}'_T$ generated for the perturbed graph $G'$ and on $\mathcal{G}_T$. 
In other words, $\sum_{i=1}^{T}\mathbb{I}(f(G_{i})_v\neq f(G'_{i})_v) \leq m$ for any target node $v \in G$ in node classification or $\sum_{i=1}^{T}\mathbb{I}(f(G_{i})\neq f(G'_{i})) \leq m$ in graph classification. 
\end{theorem}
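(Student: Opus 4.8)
The plan is to decompose the arbitrary perturbation into its three constituent pieces---edge manipulation $\{\mathcal{E}_+,\mathcal{E}_-\}$, node manipulation $\{\mathcal{V}_+, \mathcal{E}_{\mathcal{V}_+},{\bf X}'_{\mathcal{V}_+},\mathcal{V}_-, \mathcal{E}_{\mathcal{V}_-}\}$, and node feature manipulation $\{\mathcal{V}_r, \mathcal{E}_{\mathcal{V}_r},{\bf X}'_{\mathcal{V}_r}\}$---and then track, subgraph by subgraph, which of the $T$ edge-centric subgraphs can possibly change. The key structural fact, inherited from the hash-based edge division, is that every edge of $G$ (and of $G'$) is deterministically assigned to exactly one subgraph index via Equation~\ref{eqn:edgehash}, and this assignment depends only on the endpoints' indices, not on the rest of the graph. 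So the plan is: identify the set $S \subseteq \{1,\dots,T\}$ of subgraph indices that receive at least one edge from the ``touched'' edge set $\mathcal{E}_+ \cup \mathcal{E}_- \cup \mathcal{E}_{\mathcal{V}_+} \cup \mathcal{E}_{\mathcal{V}_-} \cup \mathcal{E}_{\mathcal{V}_r}$, argue $|S| \le m$ by a union bound over the individual edge sets, and then show that for every index $i \notin S$, the subgraph $G_i$ and $G'_i$ induce the same prediction from $f$ at the target node (or on the whole graph).

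The first step is the counting bound $|S|\le m$: since each edge maps to one index, the number of distinct indices hit by a set of $k$ edges is at most $k$, and summing over the five edge sets gives $|S| \le |\mathcal{E}_+| + |\mathcal{E}_-| + |\mathcal{E}_{\mathcal{V}_+}| + |\mathcal{E}_{\mathcal{V}_-}| + |\mathcal{E}_{\mathcal{V}_r}| = m$. The second and more delicate step is to show subgraphs outside $S$ are unaffected. For an index $i \notin S$, consider $G_i=(\mathcal{V},{\bf X},\mathcal{E}^i)$ and $G'_i=(\mathcal{V}',{\bf X}',(\mathcal{E}')^i)$. I would argue that the connected components of $G'_i$ that contain any node relevant to the prediction are \emph{identical} (as labeled graphs with features) to the corresponding components of $G_i$: deleted nodes $\mathcal{V}_-$ and deleted edges $\mathcal{E}_-$ only removed things already absent from the relevant components (their incident edges are in $\mathcal{E}_{\mathcal{V}_-}\cup\mathcal{E}_-$, hence not in subgraph $i$); injected nodes $\mathcal{V}_+$ appear in $G'_i$ but, since none of their incident edges $\mathcal{E}_{\mathcal{V}_+}$ landed in subgraph $i$, they are \emph{isolated} there; feature-perturbed nodes $\mathcal{V}_r$ have new features ${\bf X}'_{\mathcal{V}_r}$, but since none of their incident edges $\mathcal{E}_{\mathcal{V}_r}$ landed in subgraph $i$, those nodes are also isolated in $G_i$ and $G'_i$. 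The message-passing mechanism (Equation~\ref{aggregate}) then guarantees isolated nodes contribute nothing to any other node's $K$-hop representation, so $f(G_i)_v = f(G'_i)_v$ for every target node $v$ not among the injected/deleted/perturbed set, and likewise $f(G_i)=f(G'_i)$ in graph classification (here one must also note isolated nodes' contribution to the graph-level readout is controlled---this is precisely the subtlety already handled in Theorems~\ref{thm:nodeperturb} and~\ref{thm:nodefeaperturb}, which I would invoke rather than re-derive).

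The cleanest route, in fact, is to \emph{not} re-prove the isolation argument from scratch but to stitch together Theorems~\ref{thm:edgeperturb}, \ref{thm:nodeperturb}, and~\ref{thm:nodefeaperturb} directly: each says the respective manipulation alone flips at most $|\mathcal{E}_+|+|\mathcal{E}_-|$, $|\mathcal{E}_{\mathcal{V}_+}|+|\mathcal{E}_{\mathcal{V}_-}|$, resp.\ $|\mathcal{E}_{\mathcal{V}_r}|$ subgraph predictions, and the set of \emph{affected subgraph indices} in each case is contained in the set of indices hit by that manipulation's edge set. Since the overall perturbation's affected index set is contained in the union of the three, and $\mathbb{I}(f(G_i)_v \ne f(G'_i)_v) = 0$ whenever $i$ is outside all three affected sets (the subgraph is literally unchanged on the relevant component), the total is at most the sum $m$. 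The main obstacle I anticipate is making the ``subgraph outside $S$ is genuinely unchanged'' claim fully rigorous when \emph{several} manipulation types interact within one subgraph index that is inside $S$ for one type but we are trying to exclude it---i.e., ensuring the union-bound over index sets is valid and not double-subtracting; and, secondarily, handling the graph-classification readout, where an isolated injected node still appears in the node set and could in principle shift an \texttt{Avg}-style pooling, which is why the statement implicitly relies on the readout being of a form (like sum/mean over a node set that the certificate accounts for) consistent with the earlier theorems. I would resolve this by working with the index sets $S_e, S_v, S_r$ of the three manipulations, noting they are subsets of $\{1,\dots,T\}$ with $|S_e \cup S_v \cup S_r| \le |S_e| + |S_v| + |S_r| \le m$, and proving that any $i \notin S_e \cup S_v \cup S_r$ yields $G_i$ and $G'_i$ equal on every target-relevant component, so the indicator vanishes there.
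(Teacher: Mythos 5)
Your proposal is correct and follows essentially the same route as the paper, which states Theorem~\ref{thm:edgebased} as an immediate consequence of Theorems~\ref{thm:edgeperturb}, \ref{thm:nodeperturb}, and~\ref{thm:nodefeaperturb} without giving a separate proof. Your explicit union bound over the affected index sets $S_e, S_v, S_r$, together with the observation that any subgraph index outside their union leaves the target-relevant component literally unchanged (so the indicator vanishes there), supplies exactly the detail the paper leaves implicit when summing the three individual bounds.
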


\noindent {\bf Deriving the robustness guarantee against arbitrary perturbation:} 
Based on Theorem~\ref{thm:suffcond} and Theorem~\ref{thm:edgebased}, we can derive the certified perturbation size as the maximal perturbation such that Equation~\ref{eqn:suff-NC} or Equation~\ref{eqn:suff-GC} is satisfied. Formally,

\begin{theorem}[Certified Robustness Guarantee with Edge-Centric Subgraphs against Arbitrary Perturbation]
\vspace{-2mm}
\label{thm:certifyedgebased} 
Let $f, y_a, y_b, c_{y_a}, c_{y_b}$ be defined above for edge-centric subgraphs, and   
$m$ be the perturbation size induced by an arbitrary perturbed graph $G'$ on $G$. 
The voting classifier $\bar{f}$ guarantees the same prediction on both $G'$ and $G$ for the target node $v$ in node classification (i.e., $\bar{f}(G')_v = \bar{f}(G)_v$) or target graph $G$ in graph classification (i.e., $\bar{f}(G') = \bar{f}(G)$), when $m$ satisfies
\begin{align}
\label{eqn:cpz_edge}
m \leq M = {\lfloor c_{y_a}-c_{y_b}-\mathbb{I}(y_{a}>y_{b})\rfloor} / {2}.
\end{align}
In other words, the maximum certified perturbation size is  $M$.
\end{theorem}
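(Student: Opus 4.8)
The plan is to combine Theorem~\ref{thm:suffcond} and Theorem~\ref{thm:edgebased} directly. Theorem~\ref{thm:edgebased} tells us that for an arbitrary perturbed graph $G'$ with perturbation size $m = |\mathcal{E}_+| + |\mathcal{E}_-| + |\mathcal{E}_{\mathcal{V}_+}| + |\mathcal{E}_{\mathcal{V}_-}| + |\mathcal{E}_{\mathcal{V}_r}|$, the number of subgraph predictions that differ between $\mathcal{G}_T$ and $\mathcal{G}'_T$ satisfies $\sum_{i=1}^{T}\mathbb{I}(f(G_i)_v \neq f(G'_i)_v) \leq m$ (node classification) or $\sum_{i=1}^{T}\mathbb{I}(f(G_i) \neq f(G'_i)) \leq m$ (graph classification). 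Theorem~\ref{thm:suffcond} tells us that whenever this count is bounded by $M = \lfloor c_{y_a} - c_{y_b} - \mathbb{I}(y_a > y_b)\rfloor / 2$, the voting classifier $\bar{f}$ returns the same label on $G'$ as on $G$. So the first step is simply to observe that if $m \leq M$, then the hypothesis of Theorem~\ref{thm:suffcond} is met (since the number of differing predictions is $\leq m \leq M$), hence $\bar{f}(G')_v = \bar{f}(G)_v$ or $\bar{f}(G') = \bar{f}(G)$ accordingly.

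Second, I would argue that $M$ is the \emph{maximum} such bound, i.e., that the certificate cannot be pushed beyond $M$ in general. This is the converse/tightness direction and requires exhibiting a worst case: construct a perturbation that changes exactly $\lceil M \rceil + 1$ (or the appropriate next integer) subgraph predictions, all flipping votes from $y_a$ to $y_b$, so that after perturbation $c'_{y_b} \geq c'_{y_a}$ (with the tie-breaking index working against $y_a$), making $\bar{f}$ change its output. Since each unit of perturbation size can, in the worst case, alter one subgraph prediction (e.g. one injected or deleted edge, per Theorem~\ref{thm:edgeperturb}), such a perturbation of size $M+1$ exists for an adversarially chosen $f$ and $G$, showing the bound in Equation~\ref{eqn:cpz_edge} is tight and $M$ is indeed the maximum certified perturbation size. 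I would be careful here about whether the claim is "tight for some instance" versus "tight for all instances" — the former is what is provable and what the statement intends.

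The main obstacle I expect is bookkeeping around the floor function and the tie-breaking indicator $\mathbb{I}(y_a > y_b)$: one must check that the integer arithmetic in $M = \lfloor c_{y_a} - c_{y_b} - \mathbb{I}(y_a > y_b)\rfloor/2$ lines up exactly with the condition "$c'_{y_a}$ still strictly beats (or correctly ties) $c'_{y_b}$ after at most $M$ votes move adversarially." Concretely, after $k$ predictions change, the worst case drops $c_{y_a}$ by up to $k$ and can raise $c_{y_b}$ by up to $k$ only if those same predictions flip to $y_b$ — but a single prediction change can only either decrement $c_{y_a}$ or increment $c_{y_b}$, not both, so the gap $c_{y_a} - c_{y_b}$ shrinks by at most $2k$ after $k$ changes (the factor of $2$ is exactly why $M$ has a division by $2$). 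Establishing this "gap shrinks by at most $2k$" claim cleanly, and then checking the residual gap stays non-negative (respecting the tie-break) precisely when $k \leq M$, is the delicate part; most of the real content is already inside Theorem~\ref{thm:suffcond}, so here the work reduces to invoking it and then confirming the extremal construction matches the bound.

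Since Theorem~\ref{thm:certifyedgebased} as stated only asserts the sufficiency ("the voting classifier guarantees the same prediction ... when $m$ satisfies Equation~\ref{eqn:cpz_edge}") plus the remark that $M$ is the maximum, a minimal valid proof is: apply Theorem~\ref{thm:edgebased} to get the differing-prediction count $\leq m$, apply Theorem~\ref{thm:suffcond} with this count in place of its hypothesis (valid since $m \leq M$), conclude robustness; then note optimality follows because Equation~\ref{eqn:suff-NC}/\ref{eqn:suff-GC} is tight in Theorem~\ref{thm:suffcond} (i.e., there exist vote configurations realizable by some $f$ on some $\mathcal{G}'_T$ where $M+1$ changes break the prediction), so no larger $m$ can be certified by this analysis.
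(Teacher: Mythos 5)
Your proposal is correct and takes essentially the same route as the paper, which gives no standalone proof of this theorem but derives it exactly as you do: Theorem~\ref{thm:edgebased} bounds the number of altered subgraph predictions by $m$, and Theorem~\ref{thm:suffcond} then guarantees $\bar{f}$'s prediction is unchanged whenever that count is at most $M$, so $m \leq M$ suffices. One small slip in your heuristic aside: a single subgraph prediction flipping from $y_a$ to $y_b$ \emph{does} simultaneously decrement $c_{y_a}$ and increment $c_{y_b}$ (which is precisely why the gap can shrink by $2$ per change and why $M$ carries the division by $2$), but this is immaterial since the factor of $2$ is already established inside the proof of Theorem~\ref{thm:suffcond}, and your tightness discussion is extra care beyond what the paper argues.
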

\noindent \emph{Remark:} We have the following remarks from our theoretical result in Theorem \ref{thm:certifyedgebased}. 

\begin{itemize}[leftmargin=*]
\vspace{-2mm}
\item {No (adaptive/unknown) attack can break {\name} if its perturbation budget is within the derived bound $M$, regardless of the attack knowledge of ANNCert.}
\vspace{-2mm}
\item It can be applied for any GNN node/graph classifier. 
\vspace{-2mm}

\item The  guarantee is true with a probability 100\%. 
\vspace{-2mm}
\item It treats existing robustness guarantees as special cases. 
\begin{itemize}[leftmargin=*]
    \item For edge manipulation $\{\mathcal{E}_+,\mathcal{E}_-\}$~\cite{bojchevski2020efficient,wang2021certified,xia2024gnncert}, the voting classifier $\bar{f}$ is certified robust if $
    |\mathcal{E}_+|+|\mathcal{E}_-|\leq M$.  
    \vspace{-1mm}
    
    \item For node manipulation $\{\mathcal{V}_+, \mathcal{E}_{\mathcal{V}_+}, 
    {\bf X}'_{\mathcal{V}_r}, \mathcal{V}_-, \mathcal{E}_{\mathcal{V}_-}\}$ \cite{lai2023nodeawarebismoothingcertifiedrobustness}, 
    $\bar{f}$ is certified robust if $|\mathcal{E}_{\mathcal{V}_+}|+|\mathcal{E}_{\mathcal{V}_-}|\leq M$.
    \vspace{-1mm}
    
    \item For node feature manipulation $\{ \mathcal{V}_r, \mathcal{E}_{\mathcal{V}_r}, {\bf X}_{\mathcal{V}_r}\}$~\cite{jin2020certified,xia2024gnncert}, 
$\bar{f}$ is certified robust if $|\mathcal{E}_{\mathcal{V}_r}|\leq M$.
    \vspace{-1mm}
    \item For both edge {and} node feature manipulation \cite{xia2024gnncert}, 
    $\bar{f}$ is certified robust if $|\mathcal{E}_+|+|\mathcal{E}_-| + |\mathcal{E}_{\mathcal{V}_r}|\leq M$. 
    
\end{itemize}

\end{itemize}

\subsection{Node-Centric Graph Division}
\label{Sec:nodebased}

We observe the robustness guarantee under edge-centric graph division is largely dominated by the  
edges (i.e., $\mathcal{E}_{\mathcal{V}_+}, \mathcal{E}_{\mathcal{V}_-}$) induced by the manipulated nodes $\mathcal{V}_+, \mathcal{V}_-$, and edges $\mathcal{E}_{\mathcal{V}_r}$ by the perturbed node features ${\bf X}'_{\mathcal{V}_r}$. 
This guarantee could be weak against node or node feature manipulation, 
as the number of edges (i.e., $|\mathcal{E}_{\mathcal{V}_+}|, |\mathcal{E}_{\mathcal{V}_-}|, |\mathcal{E}_{\mathcal{V}_r}|$) could be much larger, compared with the number of the nodes (i.e., $|{\mathcal{V}_+}|, |{\mathcal{V}_-}|, |{\mathcal{V}_r}|$). 
For instance, an injected node could link with many edges to a given graph in practice, and when the number exceeds $M$ in Equation~\ref{eqn:cpz_edge},  the certified robustness guarantee is ineffective. 

This flaw inspires us to generate subgraphs, where we expect at most one subgraph is affected 
under every node or node feature manipulation (this means all edges of a manipulated node should be in a same subgraph). 
We design a tailored {node-centric graph division} strategy to achieve our goal.

\noindent {\bf Naive solutions are ineffective:}  
A first solution is to map nodes into different subgraphs that are \emph{non-overlapped}, like mapping edges into subgraphs that are non-overlapped in edge-centric method.  
Though this method may work for graph classification, 
 it completely fails for node classification, as every node only appears once in all subgraphs and all target nodes can only receive one vote, yielding vacuous robustness. 

A second solution is to 
retain all nodes in every subgraph (say $G_i$), but keep only edges connected to nodes with the index $i$.  However, this idea still does not work, because some nodes not with index $i$ may still connect to nodes with index $i$, and manipulations on nodes with index $i$ would still influence representations of those nodes with a different index. 

\noindent {\bf Generating node-centric directed subgraphs:} 
We notice the failure of the second solution is because the message passing between two connected nodes $u$ and $v$ is bidirectional. 
If we decompose an undirected edge into two directed edges, and only use the outgoing edges of nodes, e.g., with index $i$, then the message 
is passed in one direction, i.e., from index $i$ nodes to their connected nodes. 
Hence, we propose dividing graphs into directed subgraphs.

We use a hash function $h$ to generate directed subgraphs for a given graph $G=(\mathcal{V},\mathcal{E},{\bf X})$. 
Our node-centric graph division strategy as follow: (1) we treat every undirected edge $e=(u,v) \in G$ as two directed edges for $u$\footnote{GNNs inherently handles directed graphs with directed message passing. Particularly, each node only uses its incoming neighbors' message for update.}: the outgoing edge $u \rightarrow v$ and incoming edge $v \rightarrow u$; (2) for every node $u$, we compute the subgraph index of its every outgoing edge $u \rightarrow v$: 
\begin{align}
\label{eqn:nodehash}
i_{u \rightarrow v} = h[\mathrm{str}(u)] \, \, \mathrm{mod} \, \, T+1. 
\end{align}
Note all outgoing edges of $u$ are mapped in the same subgraph.

We use $\vec{\mathcal{E}}_i$ to denote the set of directed edges whose subgraph index is $i$, i.e., $\vec{\mathcal{E}}_i = \{\forall u \rightarrow v \in {\mathcal{E}}: i_{u \rightarrow v}= i \}.$ 
Then, we can construct $T$ \emph{directed} subgraphs for $G$ as $\vec{\mathcal{G}}_T = \{ \vec{G}_i = (\mathcal{V},\vec{\mathcal{E}}_i,{\bf X}): i=1,2,\cdots, T\}$. 
{Here, we mention that we need to further postprocess the subgraphs for graph classification, in order to derive the robustness guarantee. 
Particularly, in each subgraph $\vec{G}_i$, we remove all other nodes whose subgraph index is not $i$. This is because although they have no influence on other nodes' representation, their information would still be passed to the global graph embedding aggregation. To make up the loss of connectivity between nodes and simulate the aggregation, we add an extra node with a zero feature, and add an outgoing edge from every node with index $i$ to it.}

\vspace{+0.05in}
\noindent {\bf Bounding the number of different subgraph predictions:} 
Similarly, for a perturbed graph ${G}'$, we use the same graph division strategy to generate a set of $T$ \emph{directed subgraphs} $\vec{\mathcal{G}}'_T = \{\vec{G}'_1, \vec{G}'_2, \cdots, \vec{G}'_T\}$. 
We first show the theoretical results that can upper bound the number of different subgraph predictions on $\vec{\mathcal{G}}_T$ and $\vec{\mathcal{G}}'_T$ against any individual perturbation.

\begin{theorem}[]
\label{thm:edgeperturb2}
Assume a graph $G$ is under the edge manipulation $\{\mathcal{E}_+,\mathcal{E}_-\}$, 
then at most $|\mathcal{E}_+| + |\mathcal{E}_-|$ subgraphs generated by our node-centric graph division have different predictions between $\vec{\mathcal{G}}'_T$ and $\vec{\mathcal{G}}_T$.  
\end{theorem}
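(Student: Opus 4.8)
The plan is to mirror the argument of Theorem~\ref{thm:edgeperturb} but for the directed, node-centric construction. First I would fix the perturbed graph $G'$ obtained from $G$ by edge manipulation $\{\mathcal{E}_+, \mathcal{E}_-\}$ and recall that the node-centric division assigns each \emph{directed} edge $u \rightarrow v$ to subgraph $i_{u\rightarrow v} = h[\mathrm{str}(u)]\, \mathrm{mod}\, T + 1$, which depends only on the source endpoint $u$. The key structural fact to extract is that the sets $\vec{\mathcal{E}}_1, \dots, \vec{\mathcal{E}}_T$ partition the directed edge multiset: every directed edge lands in exactly one subgraph, and the partition of $G$'s directed edges and of $G'$'s directed edges are induced by the \emph{same} deterministic hash rule.

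Next I would translate each undirected edge operation into directed-edge bookkeeping. Deleting an undirected edge $\{u,v\} \in \mathcal{E}_-$ removes the two directed edges $u\rightarrow v$ and $v \rightarrow u$, which sit in subgraphs $i_{u\rightarrow v}$ and $i_{v \rightarrow u}$ respectively; but these two indices coincide \emph{only} through $u$ versus $v$ as sources, so in general two subgraphs could be touched — however, since we want a bound of the form $|\mathcal{E}_+| + |\mathcal{E}_-|$ and not $2(|\mathcal{E}_+|+|\mathcal{E}_-|)$, I need the earlier convention. Here I would invoke the fact (stated around Equation~\ref{eqn:edgehash}--\ref{eqn:nodehash} and in the node-centric construction) that the hash is computed from a canonical ordering of the undirected edge, so effectively each undirected edge's \emph{pair} of directed copies is assigned as a unit; more precisely, since the subgraph $\vec{G}_i$ retains only outgoing edges of index-$i$ nodes, adding or deleting the undirected edge $\{u,v\}$ changes $\vec{\mathcal{E}}_i$ for at most the single index $i = h[\mathrm{str}(u)]\,\mathrm{mod}\,T+1$ where $u$ is the canonical first endpoint (the reverse copy $v\rightarrow u$ is carried along but, for the purpose of which subgraph is \emph{altered}, it is consistently attributed). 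I would then argue: each of the $|\mathcal{E}_+|$ injected edges affects exactly one subgraph of $\vec{\mathcal{G}}_T$, and each of the $|\mathcal{E}_-|$ deleted edges affects exactly one subgraph, so at most $|\mathcal{E}_+| + |\mathcal{E}_-|$ of the subgraphs $\vec{G}_i$ differ from $\vec{G}'_i$ as graphs.

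Finally, I would close by applying the node/graph classifier $f$: if $\vec{G}_i = \vec{G}'_i$ then trivially $f(\vec{G}_i)_v = f(\vec{G}'_i)_v$ (node classification) or $f(\vec{G}_i) = f(\vec{G}'_i)$ (graph classification), so the number of differing subgraph predictions is at most the number of differing subgraphs, which is at most $|\mathcal{E}_+| + |\mathcal{E}_-|$. For the graph classification variant I would also note that the post-processing step (removing non-index-$i$ nodes, adding the auxiliary zero-feature sink node) is itself a deterministic function of $\vec{\mathcal{E}}_i$ and $\mathcal{V}$, so it does not introduce additional dependence on the perturbed edges beyond the single affected subgraph.

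The main obstacle I anticipate is the directed-edge double-counting issue: naively, an undirected edge spawns two directed edges with two potentially distinct source-based indices, which would give a bound of $2(|\mathcal{E}_+| + |\mathcal{E}_-|)$ rather than the claimed $|\mathcal{E}_+| + |\mathcal{E}_-|$. Resolving this cleanly requires being careful about exactly how the canonical endpoint ordering (or the accounting of which subgraph is deemed ``changed'') is used so that each undirected edge operation is charged to a single subgraph; I would want to state that convention explicitly and verify it is consistent with how the subgraphs are actually built and with how Theorem~\ref{thm:nodeperturb2} and Theorem~\ref{thm:nodefeaperturb2} are later proved, since an inconsistency there would propagate into the final certified perturbation size.
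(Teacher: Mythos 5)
Your proposal has a genuine gap at its central step. You claim that each injected or deleted undirected edge $\{u,v\}$ alters exactly one subgraph of $\vec{\mathcal{G}}_T$ \emph{as a graph}, justified by a ``canonical first endpoint'' convention. But that convention does not exist in the node-centric construction: the hash in Equation~\ref{eqn:nodehash} depends only on the \emph{source} of each directed copy, so $u \rightarrow v$ lands in subgraph $h[\mathrm{str}(u)] \bmod T + 1$ while $v \rightarrow u$ lands in subgraph $h[\mathrm{str}(v)] \bmod T + 1$, and these indices generically differ. (The canonical-ordering footnote you invoke belongs to the edge-centric hash of Equation~\ref{eqn:edgehash}, where both endpoints are concatenated into the hash input.) Consequently one undirected edge manipulation really does modify up to \emph{two} subgraphs as graphs, and your chain ``differing predictions $\le$ differing subgraphs $\le |\mathcal{E}_+|+|\mathcal{E}_-|$'' only yields the weaker bound $2(|\mathcal{E}_+|+|\mathcal{E}_-|)$. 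Moreover, the fix you float --- attributing both directed copies to a single canonically chosen subgraph --- would destroy the property that \emph{all} outgoing edges of a node live in that node's subgraph, which is exactly what Theorems~\ref{thm:nodeperturb2} and~\ref{thm:nodefeaperturb2} rely on; so it cannot be adopted without breaking the rest of the certification.

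The paper closes the gap at the level of \emph{predictions}, not graphs, using the directed message passing. In the case $i_{u\rightarrow v} \neq i_{v\rightarrow u}$, the subgraph $\vec{G}_{i_{u\rightarrow v}}$ gains or loses only the directed edge $u \rightarrow v$, and since $v$ has no outgoing edges there, the perturbation can change only $v$'s own representation $\bm{h}_v^{(K)}$ and propagates no further; symmetrically only $u$'s representation changes in $\vec{G}_{i_{v\rightarrow u}}$. A fixed target node $w$ cannot equal both $u$ and $v$, so at most one of the two subgraph predictions for $w$ changes; for graph classification the affected node is pruned from the respective post-processed subgraph, so neither prediction changes. This per-node containment argument is the missing idea in your write-up, and without it the stated bound of $|\mathcal{E}_+| + |\mathcal{E}_-|$ does not follow.
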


\begin{proof}
We simply analyze when an arbitrary edge $(u, v)$ is deleted/added from $G$. It is obvious at most two subgraphs $G_{i_{u \rightarrow v} }$ and $G_{i_{v \rightarrow u} }$ are perturbed after perturbation, but via detailed analysis, at most one subgraph's prediction is affected. 

We consider the following two cases: i)  $i_{u \rightarrow v} = i_{v \rightarrow u}$. This means $u$ and $v$ are in the same subgraph, hence at most one subgraph's prediction is affected. ii) $i_{u \rightarrow v} \neq i_{v \rightarrow u}$. In subgraph $\vec{G}_{i_{u \rightarrow v}}$, $v$ only has incoming edges. Due to the message passing mechanism in GNNs, only the node $v$'s representation $\bm{h}_v^{(K)}$ is affected. Symmetrically in subgraph $\vec{G}_{i_{v \rightarrow u}}$,  only node $u$'s representation $\bm{h}_u^{(K)}$ is affected. Therefore, for node classification on a target node $w \in \mathcal{V}$, there exists at most one subgraph whose prediction is affected (when $w=u$ or $w=v$); for  graph classification, since $u$ (or $v$) is removed in subgraph $\vec{G}_{i_{v \rightarrow u}}$ (or $\vec{G}_{i_{u \rightarrow v}}$), no prediction is changed on the two subgraphs.

Generalizing the analysis to any $|\mathcal{E}_+| + |\mathcal{E}_-|$ edges in $G$ being perturbed, at most $|\mathcal{E}_+| + |\mathcal{E}_-|$ predictions are different between $\mathcal{G}_{T}$ and  $\mathcal{G}_{T}'$.  
\vspace{-2mm}
\end{proof}

\begin{theorem}[]
\label{thm:nodeperturb2} 
Assume a graph $G$ is under the node manipulation  
$\{\mathcal{V}_+, \mathcal{E}_{\mathcal{V}_+}, {\bf X}'_{\mathcal{V}_+}, \mathcal{V}_-, \mathcal{E}_{\mathcal{V}_-}\}$,
then at most $|\mathcal{V}_+| + |\mathcal{V}_-| $ subgraphs generated by our node-centric graph division have different predictions between $\vec{\mathcal{G}}'_T$ and $\vec{\mathcal{G}}_T$. 
\end{theorem}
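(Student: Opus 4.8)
The plan is to mirror the argument of Theorem~\ref{thm:nodefeaperturb} but exploit the extra structure of the directed subgraphs to replace a bound in terms of induced edges by a bound in terms of the number of manipulated nodes. I would handle node injection and node deletion separately, since they behave slightly differently, and argue that each injected or deleted node corrupts at most one directed subgraph's prediction.

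First, consider node \emph{injection}. Each injected node $u \in \mathcal{V}_+$ is, under Equation~\ref{eqn:nodehash}, assigned a single subgraph index $i_u = h[\mathrm{str}(u)] \bmod T + 1$, and \emph{all} of its outgoing edges land in the single subgraph $\vec{G}_{i_u}$. In every other subgraph $\vec{G}_j$ with $j \neq i_u$, the node $u$ has no outgoing edges, so by the directed message-passing mechanism it cannot influence the representation $\bm{h}_w^{(K)}$ of any pre-existing node $w$ (for node classification), and for graph classification $u$ is explicitly removed from $\vec{G}_j$ in the post-processing step, so it contributes nothing to the global embedding. Hence injecting $u$ alters the prediction of at most the one subgraph $\vec{G}_{i_u}$. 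Note we do not even need to control the injected edges $\mathcal{E}_{\mathcal{V}_+}$: an injected edge $(u,v)$ with $u \in \mathcal{V}_+$, $v \in \mathcal{V}$ is an outgoing edge of $u$ (index $i_u$) and an incoming edge of $v$; in the subgraph $\vec{G}_{i_v}$ determined by $v$'s outgoing edges this edge is simply absent, so $v$'s representation there is untouched. Taking the union over all injected nodes, at most $|\mathcal{V}_+|$ subgraph predictions change.

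Next, node \emph{deletion}. When $u \in \mathcal{V}_-$ is deleted, all its incident edges vanish. Its outgoing edges all lived in the single subgraph $\vec{G}_{i_u}$, so only that subgraph loses outgoing structure contributed by $u$. The subtlety is the incoming edges of $u$: an edge $v \to u$ lives in $\vec{G}_{i_v}$, and deleting $u$ removes that edge from $\vec{G}_{i_v}$, potentially changing $v$'s representation there. But in $\vec{G}_{i_v}$ the edge $v \to u$ is an outgoing edge of $v$, so by the directed message passing only $u$'s representation in $\vec{G}_{i_v}$ depends on it — and $u$ is being deleted, so for node classification on a target $w \neq u$ nothing changes, and for graph classification $u$ is removed by post-processing anyway. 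So again each deleted node corrupts at most one subgraph (namely $\vec{G}_{i_u}$), giving at most $|\mathcal{V}_-|$ changes. Combining injection and deletion, and using that the affected index sets are disjoint or at worst overlapping (which only helps), the total is at most $|\mathcal{V}_+| + |\mathcal{V}_-|$; applying $f$ to all subgraphs then yields at most $|\mathcal{V}_+| + |\mathcal{V}_-|$ differing predictions between $\vec{\mathcal{G}}_T$ and $\vec{\mathcal{G}}'_T$.

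The main obstacle is making the directionality argument airtight in the deletion case and in the presence of the graph-classification post-processing: one must carefully verify that the auxiliary zero-feature node and the extra outgoing edges added for graph classification do not create a back-channel through which a deleted/injected node's index leaks into another subgraph's global embedding. I would argue that since the auxiliary node is added identically in $\vec{G}_i$ and $\vec{G}'_i$ and receives only edges from index-$i$ nodes (whose membership is fixed by the hash, not by the perturbation), its contribution is invariant except in the single subgraph $\vec{G}_{i_u}$ where $u$'s own index-$i$ status is at stake. A secondary subtlety is that, strictly, the perturbed and clean subgraphs with the same index must be compared as directed graphs with a consistent node set; I would note that the hash indices of all \emph{unperturbed} nodes are unchanged, so outside the at most $|\mathcal{V}_+|+|\mathcal{V}_-|$ flagged indices the two subgraphs are literally identical, which closes the argument.
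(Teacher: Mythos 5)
Your proposal is correct and rests on exactly the same key observation as the paper's proof: under directed message passing, a node whose outgoing edges are confined to the single subgraph $\vec{G}_{i_u}$ cannot influence any other node's representation in the remaining $T-1$ subgraphs, so each injected or deleted node corrupts at most one subgraph prediction. The paper only writes out the injection case and dismisses deletion with ``the proof for other cases are similar,'' whereas you explicitly verify the deletion case and the graph-classification post-processing; this is a more careful elaboration of the same argument, not a different one.
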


\begin{theorem}[]
\label{thm:nodefeaperturb2} 
Assume a graph $G$ is under the node feature manipulation 
$\{\mathcal{V}_r, \mathcal{E}_{\mathcal{V}_r},{\bf X}'_{\mathcal{V}_r}\}$, 
then at most $|\mathcal{V}_r|$ subgraphs generated by our edge-centric graph division have different predictions between $\vec{\mathcal{G}}'_T$ and $\vec{\mathcal{G}}_T$. 
\end{theorem}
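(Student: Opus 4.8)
Theorem~\ref{thm:nodefeaperturb2} claims that under node-centric graph division, node feature manipulation on a set $\mathcal{V}_r$ affects at most $|\mathcal{V}_r|$ subgraph predictions. The plan is to mirror the structure of the proof of Theorem~\ref{thm:nodeperturb} (the "isolated nodes don't matter" observation) but now exploit the directedness of our subgraphs, which is precisely what makes the bound depend on $|\mathcal{V}_r|$ rather than $|\mathcal{E}_{\mathcal{V}_r}|$.

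First I would reduce to the case of a single node $u \in \mathcal{V}_r$ whose feature is changed from ${\bf X}_u$ to ${\bf X}'_u$; the general case follows by a union bound, since perturbing $|\mathcal{V}_r|$ nodes can only affect the union of the affected subgraphs, whose size is at most $\sum_{u \in \mathcal{V}_r} 1 = |\mathcal{V}_r|$. So fix $u$, and let $i = i_u$ be the (unique) subgraph index assigned to all of $u$'s outgoing edges under Equation~\ref{eqn:nodehash}. The key claim is: for every $j \neq i$, the subgraph $\vec{G}_j$ has the same prediction for $G$ and $G'$. Observe that the graph division is feature-independent --- the hash in Equation~\ref{eqn:nodehash} only reads $\mathrm{str}(u)$ --- so $\vec{\mathcal{E}}_j$ is identical for $G$ and $G'$; the only difference between $\vec{G}_j$ and $\vec{G}'_j$ is the feature vector stored at node $u$. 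In $\vec{G}_j$ with $j\neq i$, node $u$ has no outgoing edges (those all live in $\vec{G}_i$), so $u$ is a pure sink: by the directed message-passing recursion in Equation~\eqref{aggregate}, $u$'s representation never enters $\mathcal{N}(w)$'s aggregation for any $w$, hence $\bm{h}_w^{(k)}$ for all $w \neq u$ and all layers $k$ is unchanged, and in particular every node's final representation except possibly $u$'s is identical in $\vec{G}_j$ and $\vec{G}'_j$.

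For node classification this immediately gives the result: the target node $w$'s prediction in $\vec{G}_j$ depends only on $\bm{h}_w^{(K)}$, which is unchanged unless $w = u$, and $w=u$ can contribute to the affected-count only for the single subgraph $\vec{G}_i$ (in all other $\vec{G}_j$, $u$'s feature change is the only difference but $u$ is isolated-as-a-sink, so we must be slightly careful --- actually $u$'s own representation in $\vec{G}_j$ with $j\neq i$ does change, so if the target node is $u$ itself, its prediction could change in $\vec{G}_j$). This subtlety is the main obstacle and needs the postprocessing/analysis analogous to the other node-centric proofs: I expect the correct statement is that for a target node $w$, only $\vec{G}_{i_w}$ can change --- because in $\vec{G}_j$ for $j \neq i_w$, node $w$ is a sink and (by the division being done node-by-node) if additionally $w \notin \mathcal{V}_r$ nothing changes, while if $w \in \mathcal{V}_r$ then $w=u$ falls under "its own subgraph $\vec{G}_{i_w}$" only... so really the bound counts, per perturbed node $u$, the single subgraph $\vec{G}_{i_u}$ where $u$'s outgoing edges live. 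I would argue that the prediction for a target node $w$ can change in $\vec{G}_j$ only if $w$'s representation changes there, which requires either $w \in \mathcal{V}_r$ with $j = i_w$ (so at most one subgraph per perturbed node, but only if the target is that node) or $w$ receives a changed message, which by the sink argument requires some $u \in \mathcal{V}_r$ with an outgoing edge to $w$ present in $\vec{G}_j$, forcing $j = i_u$. Either way, charging each changed subgraph to a distinct $u \in \mathcal{V}_r$ gives at most $|\mathcal{V}_r|$.

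For graph classification I would invoke the postprocessing step: in $\vec{G}_j$ we delete all nodes whose index is not $j$ and add a dummy zero-feature sink. So for $j \neq i_u$, node $u$ (index $i_u \neq j$) is removed entirely from $\vec{G}_j$, hence its feature ${\bf X}_u$ vs ${\bf X}'_u$ is irrelevant and the graph-level embedding --- and thus the prediction --- is unchanged. Therefore only $\vec{G}_{i_u}$ can change per perturbed node, again yielding at most $|\mathcal{V}_r|$. The hard part is handling the node-classification corner case cleanly (the target node being one of the perturbed nodes, and the "its-own-representation-changed-even-though-it's-a-sink" issue); I would resolve it exactly as the companion proof of Theorem~\ref{thm:nodeperturb2} does, by the bookkeeping that each affected subgraph admits an injective charge to some manipulated node in $\mathcal{V}_r$.
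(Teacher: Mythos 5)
Your proposal follows essentially the same route as the paper's (combined) proof of Theorems~\ref{thm:nodeperturb2} and~\ref{thm:nodefeaperturb2}: each $u \in \mathcal{V}_r$ has outgoing edges in exactly one subgraph $\vec{G}_{i_u}$; in every other subgraph $u$ is a pure sink under the directed message passing, so changing ${\bf X}_u$ cannot alter any \emph{other} node's representation; and charging each possibly-affected subgraph to a distinct manipulated node yields the bound $|\mathcal{V}_r|$. Your handling of graph classification via the postprocessing (nodes with index $\neq j$ are deleted from $\vec{G}_j$, so $u$'s feature is simply absent there) is likewise consistent with the paper's construction and, if anything, more explicit than what the paper writes.

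The one place where you go beyond the paper is the corner case in node classification where the target node $w$ is itself in $\mathcal{V}_r$, and there your patch does not hold up: being a sink shields \emph{other} nodes from $u$'s feature change, but it does not shield $u$'s own representation, because $\bm{h}_u^{(0)} = {\bf X}_u$ feeds the \texttt{Comb} update at every layer in \emph{every} subgraph, not only in $\vec{G}_{i_u}$. Hence if $w \in \mathcal{V}_r$, the prediction $f(\vec{G}_j)_w$ could in principle differ for all $T$ indices $j$, and your claim that this case contributes only the single subgraph $\vec{G}_{i_w}$ --- on which the injective charging rests --- is false as stated. To be fair, the paper's own proof has exactly the same blind spot: its key observation is phrased only as ``no influence on other nodes' representations,'' and it never addresses a classified node whose own features are perturbed. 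So this is an implicit restriction of the threat model shared by both arguments rather than a defect unique to your write-up; outside that corner case, your proof matches the paper's.
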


\begin{proof}
Our proof for the above two theorems is based on the key observation that: \emph{in a directed graph, manipulations on nodes with no outgoing edge have no influence on other nodes' representations in GNNs}. For any node  $u \in G$, only one subgraph $\vec{G}_{h[\mathrm{str}(u)] \, \, \texttt{mod} \, \, T+1}$ has outgoing edges.
Take node injection for instance and the proof for other cases are similar. Note that all subgraphs after node injection will contain newly injected nodes $V_{+}$, but they still do not have overlapped nodes with outgoing edges between each other via the hashing mapping. Hence, the injected nodes only have outgoing edges in at most $|V_{+}|$ subgraphs. 
Due to the directed message passing mechanism in GNNs, every node only uses its incoming neighboring nodes' representation to update its own representation. Hence, 
the injected nodes with no outgoing edges, whatever their features ${\bf X}'_{\mathcal{V}_+}$ are, would have no influence on other nodes' representations, implying at least $T-|V_{+}|$ subgraphs' predictions maintain the same.
\vspace{-2mm}
\end{proof}

\noindent \emph{Remark:} 
With edge manipulation, like Theorem~\ref{thm:edgeperturb}, Theorem~\ref{thm:edgeperturb2} has the same bounded number of altered subgraph predictions w.r.t. manipulated edges. 
Unlike Theorems~\ref{thm:nodeperturb} and~\ref{thm:nodefeaperturb}, Theorems~\ref{thm:nodeperturb2} and~\ref{thm:nodefeaperturb2} bound the number of altered subgraph predictions w.r.t. manipulated nodes. \emph{Importantly, we highlight these two bounds allow a manipulated node to link with many even infinite number of edges. Hence, these bounds are inherently robust against node inject attacks which often inject few nodes but with moderate number of edges, and node feature perturbations where the perturbed nodes have high degrees.}

With above  theorems, the total number of different subgraph predictions between  $\vec{\mathcal{G}}'_T$ and $\vec{\mathcal{G}}_T$ with \emph{arbitrary perturbation} can be straightforwardly bounded below.  

\begin{theorem}[Bounded Number of Node-Centric Subgraphs with Altered Predictions under Arbitrary Perturbation]
\label{thm:nodebased}
Let $f, v, G, \mathcal{E}_+, \mathcal{E}_-, {\mathcal{V}_+}, {\mathcal{V}_-}, \mathcal{V}_{r}$ be defined in Theorem~\ref{thm:edgebased}, and $\vec{\mathcal{G}}_T, \vec{\mathcal{G}}'_T$ contain directed subgraphs under the node-centric graph division.  
Then, at most $\bar{m} = |\mathcal{E}_+|+|\mathcal{E}_-| + |{\mathcal{V}_+}| + |{\mathcal{V}_-}| + |{\mathcal{V}_r}|$ predictions are different by the node/graph classifier $f$ on $\vec{\mathcal{G}}'_T$ and on $\vec{\mathcal{G}}_T$.   
In other words, $\sum_{i=1}^{T}\mathbb{I}(f(\vec{G}_{i})_v\neq f(\vec{G}'_{i})_v) \leq \bar{m}$ for any target node $v \in G$ in node classification or $\sum_{i=1}^{T}\mathbb{I}(f(\vec{G}_{i})\neq f(\vec{G}'_{i})) \leq \bar{m}$ in graph classification. 
\end{theorem}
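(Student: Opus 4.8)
The plan is to derive Theorem~\ref{thm:nodebased} as a direct consequence of the three single‑manipulation bounds already proved for the node‑centric division (Theorems~\ref{thm:edgeperturb2}, \ref{thm:nodeperturb2}, \ref{thm:nodefeaperturb2}), via a telescoping argument over a chain of ``atomic'' perturbations. First I would decompose the transformation $G \to G'$ into $\bar m$ elementary steps, producing $G = G^{(0)} \to G^{(1)} \to \cdots \to G^{(\bar m)} = G'$, ordered as follows: delete the nodes of $\mathcal{V}_-$ one at a time (each step removing one node together with its features and its remaining incident edges), then delete the edges of $\mathcal{E}_-$ one at a time, then inject the nodes of $\mathcal{V}_+$ one at a time (each step adding one node together with its feature vector from ${\bf X}'_{\mathcal{V}_+}$ and its induced edges from $\mathcal{E}_{\mathcal{V}_+}$, using a fixed tie‑breaking rule for an edge joining two injected nodes), then inject the edges of $\mathcal{E}_+$ one at a time, and finally rewrite the features of the nodes of $\mathcal{V}_r$ one at a time. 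By construction each step is a valid single‑unit manipulation of exactly one of the three types applied to the current graph, and the number of steps is $|\mathcal{E}_+| + |\mathcal{E}_-| + |\mathcal{V}_+| + |\mathcal{V}_-| + |\mathcal{V}_r| = \bar m$.

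Next, for each $k$ I would invoke the matching single‑manipulation theorem on the transition $G^{(k-1)} \to G^{(k)}$: Theorem~\ref{thm:edgeperturb2} with a single edge when the $k$-th step is an edge insertion or deletion, Theorem~\ref{thm:nodeperturb2} with a single node when it is a node injection or deletion, and Theorem~\ref{thm:nodefeaperturb2} with a single node when it is a feature rewrite. Letting $\vec{\mathcal{G}}^{(k)}_T$ denote the node‑centric directed subgraphs of $G^{(k)}$ (with the graph‑classification postprocessing when applicable), each such invocation yields $\sum_{i=1}^{T}\mathbb{I}(f(\vec{G}^{(k-1)}_i)_v \neq f(\vec{G}^{(k)}_i)_v) \le 1$ for node classification, and the analogous inequality with $f(\vec{G}^{(k-1)}_i)$ versus $f(\vec{G}^{(k)}_i)$ for graph classification. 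The only thing to verify here is that Theorems~\ref{thm:edgeperturb2}--\ref{thm:nodefeaperturb2} were stated and proved for an arbitrary input graph under an arbitrary single manipulation, so they apply verbatim to the intermediate graphs $G^{(k-1)}$ rather than only to the original $G$; the hash‑based subgraph construction and the count $T$ are deterministic and identical along the whole chain.

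Finally I would aggregate the per‑step bounds with the coordinatewise triangle inequality for the $0/1$ discrepancy, $\mathbb{I}(a \neq c) \le \mathbb{I}(a \neq b) + \mathbb{I}(b \neq c)$ applied with $b = f(\vec{G}^{(k)}_i)_v$, and sum over the chain: $\sum_{i=1}^{T}\mathbb{I}(f(\vec{G}_i)_v \neq f(\vec{G}'_i)_v) \le \sum_{k=1}^{\bar m}\sum_{i=1}^{T}\mathbb{I}(f(\vec{G}^{(k-1)}_i)_v \neq f(\vec{G}^{(k)}_i)_v) \le \bar m$, and identically for graph classification, which is exactly the claim.

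I do not expect a genuine mathematical obstacle, since the statement aggregates results already established; the only delicate point is making the atomic decomposition airtight — in particular, an edge of $\mathcal{E}_-$ that is also incident to a deleted node of $\mathcal{V}_-$ makes its own step vacuous and is simply skipped (which cannot increase the count), and an edge between two injected nodes must be charged consistently to exactly one of the two injection steps. An alternative, more static route is a union bound showing that the set of subgraph indices whose prediction changes lies in the union of the three index sets affected by each manipulation type in isolation; but that requires separately arguing prediction‑equivalence of the postprocessed subgraphs outside all three sets, which just repackages the same case analysis, so the telescoping argument is the cleaner one.
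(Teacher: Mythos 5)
Your proposal is correct and matches the paper's (implicit) argument: the paper presents Theorem~\ref{thm:nodebased} as a direct aggregation of the three single-manipulation bounds in Theorems~\ref{thm:edgeperturb2}--\ref{thm:nodefeaperturb2}, which is exactly what your telescoping chain formalizes, with the per-step bounds summed via the $0/1$ triangle inequality. The points you flag as delicate (applying the per-type theorems to intermediate graphs, charging each injected--injected edge to one injection step, and the determinism of the hash-based division along the chain) are precisely the details the paper leaves implicit, and your handling of them is sound.
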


\noindent {\bf Deriving the robustness guarantee against arbitrary perturbation:} Based on Theorem~\ref{thm:suffcond} and Theorem~\ref{thm:nodebased}, we can derive the certified perturbation size formally stated below

\begin{theorem}[Certified Robustness Guarantee with Node-Centric Subgraphs against Arbitrary Perturbation]
\label{thm:certifynodebased} 
Let $f, y_a, y_b, c_{y_a}, c_{y_b}$\footnote{Note that $c_{y_a}, c_{y_b}$ have different values with those in edge-centric graph division, as the generated node-centric subgraphs are different from edge-centric subgraphs. Here we use the same notation for description brevity.} be defined above for node-centric subgraphs, and   
$\bar{m}$ be the perturbation size induced by an arbitrary perturbed graph $G'$ on $G$. 
With a probability 100\%, the voting classifier $\bar{f}$ guarantees the same prediction on both $G'$ and $G$ for the target node $v$ in node classification (i.e., $\bar{f}(G')_v = \bar{f}(G)_v$) or the target graph $G$ in graph classification (i.e., $\bar{f}(G') = \bar{f}(G)$), 
if 
\begin{align}
\label{eqn:cpz_node}
\bar{m} \leq M = {\lfloor c_{y_a}-c_{y_b}-\mathbb{I}(y_{a}>y_{b})\rfloor} / {2}.
\end{align}
\end{theorem}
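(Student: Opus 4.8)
The plan is to obtain Theorem~\ref{thm:certifynodebased} as an immediate consequence of the two pieces already in place: the abstract sufficient condition of Theorem~\ref{thm:suffcond} and the deterministic prediction-stability bound of Theorem~\ref{thm:nodebased}. The key preliminary observation is that Theorem~\ref{thm:suffcond} is stated purely in terms of an arbitrary family of $T$ ``clean'' subgraphs $\{G_i\}$, the corresponding $T$ ``perturbed'' subgraphs $\{G'_i\}$, and the vote counts these induce through $f$; its proof uses nothing about how the subgraphs are produced, nor whether they are directed or undirected. Hence it applies verbatim with $G_i$ replaced by the directed node-centric subgraph $\vec{G}_i$ and $G'_i$ by $\vec{G}'_i$, where now $c_{y_a}, c_{y_b}$ are the largest and second-largest vote counts computed from $\{f(\vec{G}_i)\}_{i=1}^T$ (with the graph-classification postprocessing of Section~\ref{Sec:nodebased} --- removing off-index nodes and attaching a zero-feature sink --- folded into the definition of $\vec{G}_i$), and $M = \lfloor c_{y_a}-c_{y_b}-\mathbb{I}(y_a>y_b)\rfloor / 2$ as in the statement.

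Given that, the argument is a two-line chaining. First I would invoke Theorem~\ref{thm:nodebased}, which guarantees that for any arbitrarily perturbed graph $G'$ of $G$ with perturbation size $\bar{m} = |\mathcal{E}_+|+|\mathcal{E}_-|+|\mathcal{V}_+|+|\mathcal{V}_-|+|\mathcal{V}_r|$, the number of subgraph indices on which $f$ disagrees between $\vec{\mathcal{G}}_T$ and $\vec{\mathcal{G}}'_T$ is at most $\bar{m}$, i.e.\ $\sum_{i=1}^T \mathbb{I}(f(\vec{G}_i)_v \neq f(\vec{G}'_i)_v) \le \bar{m}$ in node classification, and the analogous inequality $\sum_{i=1}^T \mathbb{I}(f(\vec{G}_i) \neq f(\vec{G}'_i)) \le \bar{m}$ in graph classification. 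Then, under the hypothesis $\bar{m} \le M$, this disagreement count is itself $\le M$, so the premise of Theorem~\ref{thm:suffcond} (Equation~\eqref{eqn:suff-NC} or~\eqref{eqn:suff-GC}, read on the directed subgraphs) is satisfied, and its conclusion $\bar{f}(G')_v = \bar{f}(G)_v$ (resp.\ $\bar{f}(G') = \bar{f}(G)$) follows. One may also note in passing that $M$ is the largest threshold for which this chaining goes through, since Theorem~\ref{thm:suffcond} already fails to apply once the disagreement count can reach $M+1$.

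Finally I would justify the ``with probability $100\%$'' phrasing: both the hash-based node-centric division and the counting bound in Theorem~\ref{thm:nodebased} are entirely deterministic --- no randomized smoothing or sampling appears anywhere --- so, given $G$ and $f$, the quantities $c_{y_a}, c_{y_b}, M$ are fixed and the same-prediction guarantee holds with certainty rather than in expectation. I do not expect a genuine obstacle: the only point requiring care is that the vote counts and $M$ here are computed on the \emph{directed} node-centric subgraphs (as flagged in the footnote) and therefore generally differ numerically from the edge-centric ones, so one must consistently instantiate Theorem~\ref{thm:suffcond} with the $\vec{G}_i$'s throughout; and the fact that $\bar{m}$ counts \emph{nodes} rather than \emph{edges} in its node- and feature-manipulation terms is already absorbed into Theorems~\ref{thm:nodeperturb2} and~\ref{thm:nodefeaperturb2}, so it needs no additional treatment here.
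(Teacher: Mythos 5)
Your proposal is correct and matches the paper's own derivation, which likewise obtains Theorem~\ref{thm:certifynodebased} by chaining Theorem~\ref{thm:nodebased} (the disagreement count on the directed subgraphs is at most $\bar{m}$) with the sufficient condition of Theorem~\ref{thm:suffcond} instantiated on the node-centric subgraphs. Your added remarks on the determinism of the construction and on consistently using the directed subgraphs' vote counts are accurate but not points the paper needed to argue separately.
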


\noindent \emph{Remark:} 
Similarly,  our theoretical result can be applied for any GNN node/graph classifier, is true with probability 100\%, {and cannot be broken by any attack with perturbation budget $\leq M$.} 
Further, it can treat existing defenses as special cases. 

\begin{itemize}[leftmargin=*]
    \vspace{-1mm}
    \item For edge manipulation $\{\mathcal{E}_+,\mathcal{E}_-\}$~\cite{bojchevski2020efficient,wang2021certified,xia2024gnncert}, the voting classifier $\bar{f}$ is certified robust if $|\mathcal{E}_+|+|\mathcal{E}_-|\leq M$.
        
    \vspace{-1mm}
    \item For node manipulation $\{\mathcal{V}_+, \mathcal{E}_{\mathcal{V}_+}, {\bf X}'_{\mathcal{V}_+},\mathcal{V}_-, \mathcal{E}_{\mathcal{V}_-}\}$~\cite{lai2023nodeawarebismoothingcertifiedrobustness}, 
    $\bar{f}$ is certified robust if $|{\mathcal{V}_+}|+|{\mathcal{V}_-}|\leq M$.

    \vspace{-1mm}
    \item For node feature manipulation $\{\mathcal{V}_r, \mathcal{E}_{\mathcal{V}_r},{\bf X}'_{\mathcal{V}_r}\}$~\cite{jin2020certified,xia2024gnncert}, 
$\bar{f}$ is certified robust if $|{\mathcal{V}_r}|\leq M$. 

    \vspace{-1mm}
    \item For both edge {and} node feature manipulation \cite{xia2024gnncert}, 
    $\bar{f}$ is certified robust if $|\mathcal{E}_+|+|\mathcal{E}_-| + |{\mathcal{V}_r}|\leq M$. 

\end{itemize}

\section{Experiments}
\label{sec:eval}

\subsection{Experiment Settings}

\noindent {\bf Datasets:} We use four node classification datasets (Cora-ML~\cite{mccallum2000automating}, Citeseer~\cite{sen2008collective},  PubMed~\cite{sen2008collective}, Amazon-C~\cite{yang2021extract}) and four graph classification datasets (AIDS~\cite{riesen2008iam}, MUTAG~\cite{debnath1991structure}, PROTEINS~\cite{Borgwardt2005}, and DD~\cite{Dobson2003}) for evaluation. In each dataset, we take $30\%$ nodes (for node classification) or 50\% graphs (for graph classification) as the training set, $10\%$ and 20\% as the validation set, and the remaining nodes/graphs as the test set. 
Table~\ref{tab:datasets} shows the basic statistics of them. 
Our experiments are tested on a machine with NVIDIA RTX-4090 24G GPU, AMD EPYC 7352 CPU, and 60G RAM.

\vspace{+0.05in}
\noindent {\bf GNN classifiers and {\name} training:} 
We adopt the three well-known GNNs as the base node/graph classifiers: GCN~\cite{kipf2017semi}, GSAGE~\cite{hamilton2017inductive} and GAT~\cite{velivckovic2018graph}, and use their official source  code\footnote{https://github.com/tkipf/gcn; https://github.com/williamleif/GraphSAGE; https://github.com/PetarV-/GAT}.   
To enhance the robustness performance, existing certified defense~\cite{xia2024gnncert,lai2023nodeawarebismoothingcertifiedrobustness} augment the training set with generated subgraphs~\cite{xia2024gnncert} or noisy graphs~\cite{lai2023nodeawarebismoothingcertifiedrobustness} to train the GNN classifier.
Similarly, {\name} trains the GNN classifier using both the training nodes/graphs and their generated subgraphs, whose labels are same as the training nodes/graphs'. 
We denote the two versions of {\name} under edge-centric graph division and node-centric graph division as {\nameE} and {\nameN}, respectively. 
By default, we use GCN as the node/graph classifier in our experiments.

\vspace{+0.05in}
\noindent {\bf Evaluation metric:} 
Following existing works~\cite{xia2024gnncert,lai2023nodeawarebismoothingcertifiedrobustness,wang2021certified}, we use the certified node/graph  accuracy at perturbation size as the evaluation metric. 
For arbitrary perturbation, the perturbation size is 
the total number of manipulated nodes, edges, and nodes whose features can be arbitrarily perturbed. 
Given a perturbation size $m$ and test nodes/graphs,  
certified node/graph accuracy at $m$ is the fraction of test nodes/graphs that are accurately classified by the voting node/graph classifier and its certified perturbation size is no smaller than $m$. 
Note that the standard node/graph accuracy is under $m=0$.

\begin{figure*}[t]
\centering
\subfloat[Cora-ML]{\includegraphics[width=0.25\textwidth]{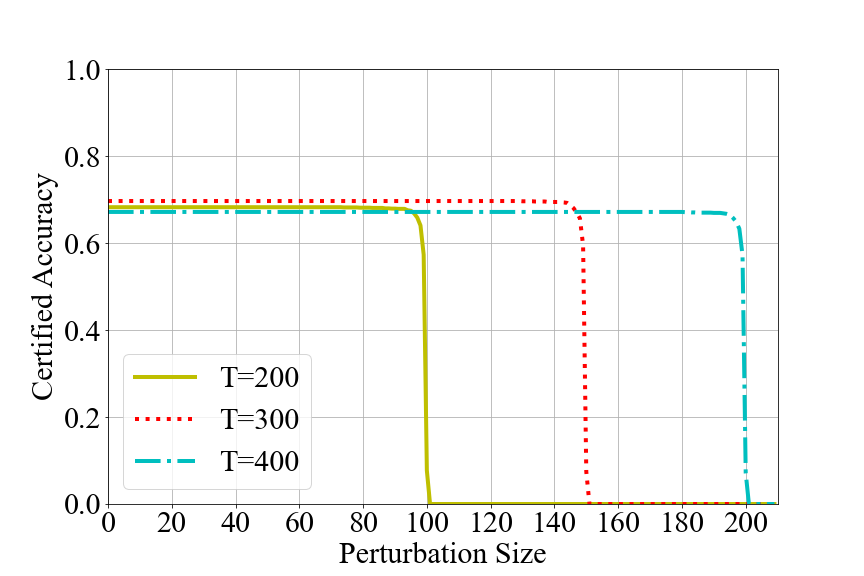}}\hfill
\subfloat[Citeseer]{\includegraphics[width=0.25\textwidth]{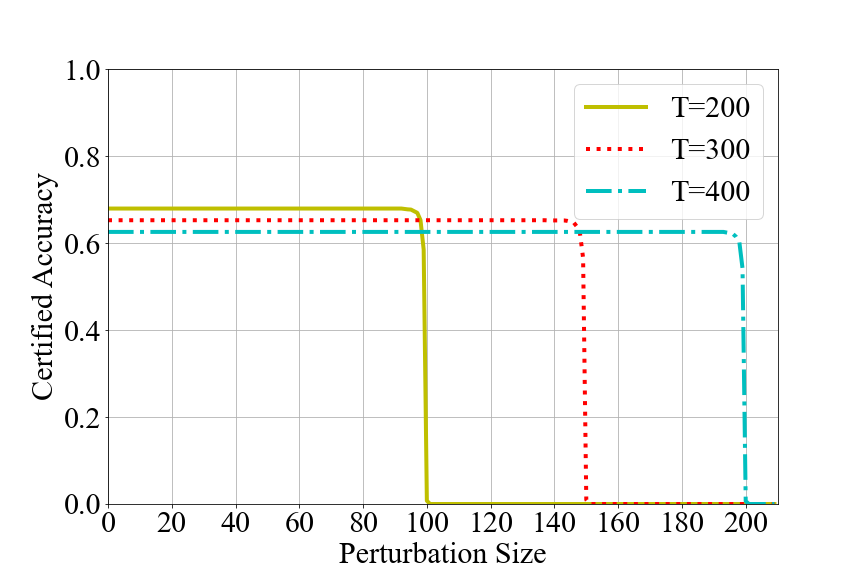}}\hfill
\subfloat[Pubmed]{\includegraphics[width=0.25\textwidth]{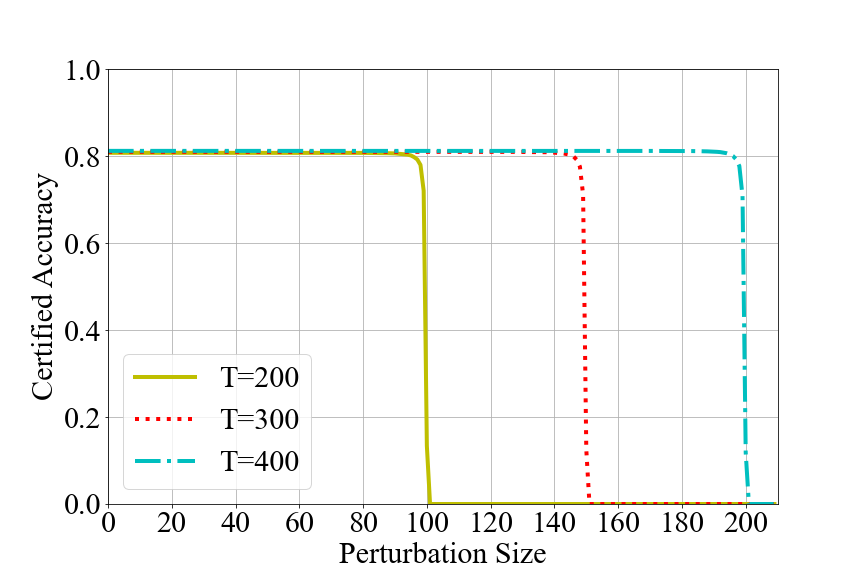}}\hfill
\subfloat[Amazon-C]{\includegraphics[width=0.25\textwidth]{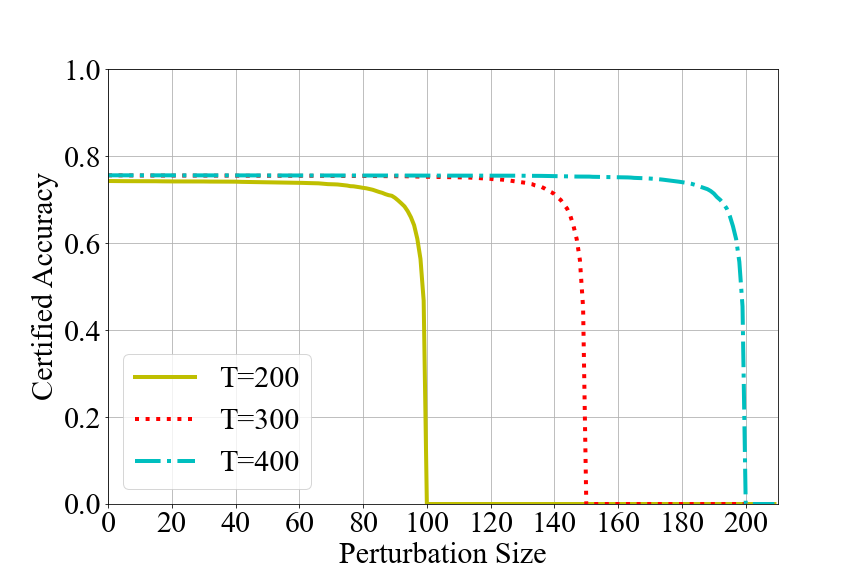}}\\
\caption{Certified node accuracy of our {\nameE} w.r.t. the number of subgraphs $T$. 
}
\label{fig:node-EC-T}
\vspace{-3mm}
\end{figure*}

\begin{figure*}[!t]
\centering
\subfloat[Cora-ML]{\includegraphics[width=0.25\textwidth]{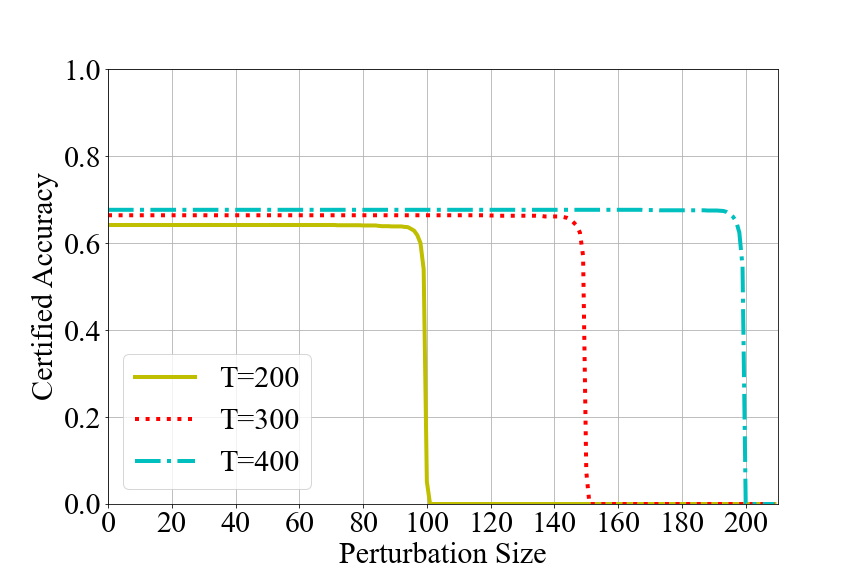}}\hfill
\subfloat[Citeseer]{\includegraphics[width=0.25\textwidth]{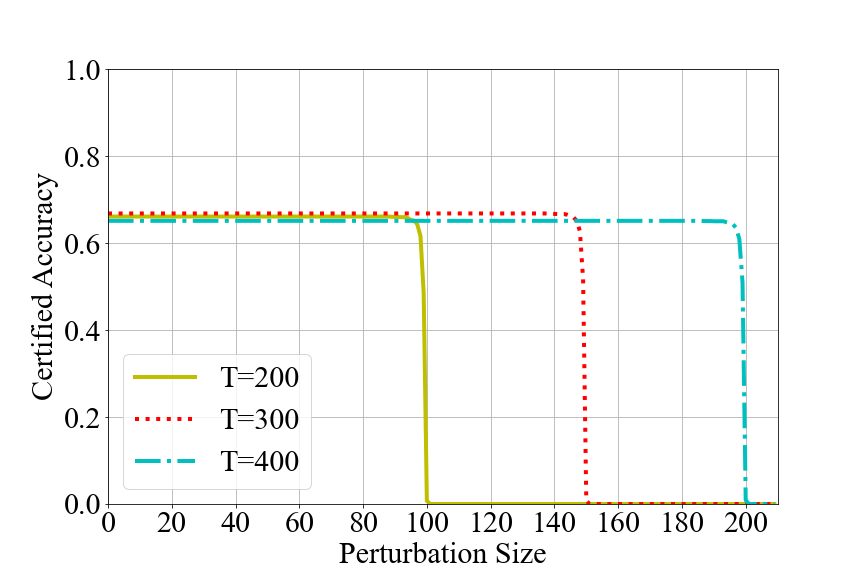}}\hfill
\subfloat[Pubmed]{\includegraphics[width=0.25\textwidth]{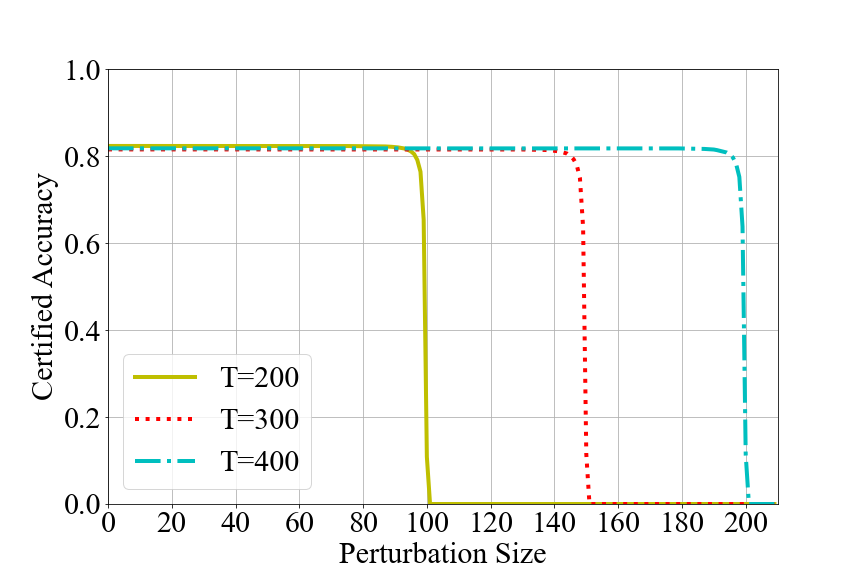}}\hfill
\subfloat[Amazon-C]{\includegraphics[width=0.25\textwidth]{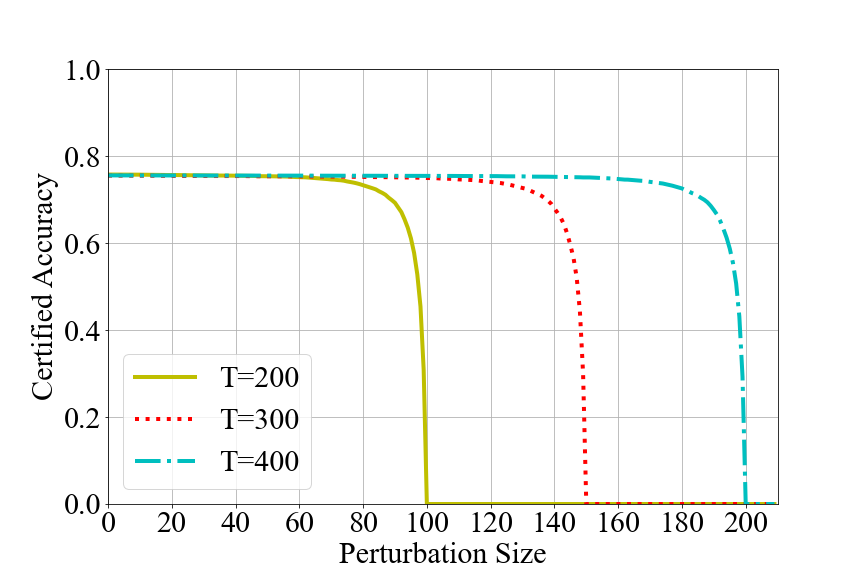}}\\
\caption{Certified node accuracy of our {\nameN} w.r.t. the number of subgraphs $T$.
}
\label{fig:node-NC-T}
\vspace{-3mm}
\end{figure*}

\begin{figure*}[!t]
\centering
\subfloat[AIDS]{\includegraphics[width=0.25\textwidth]{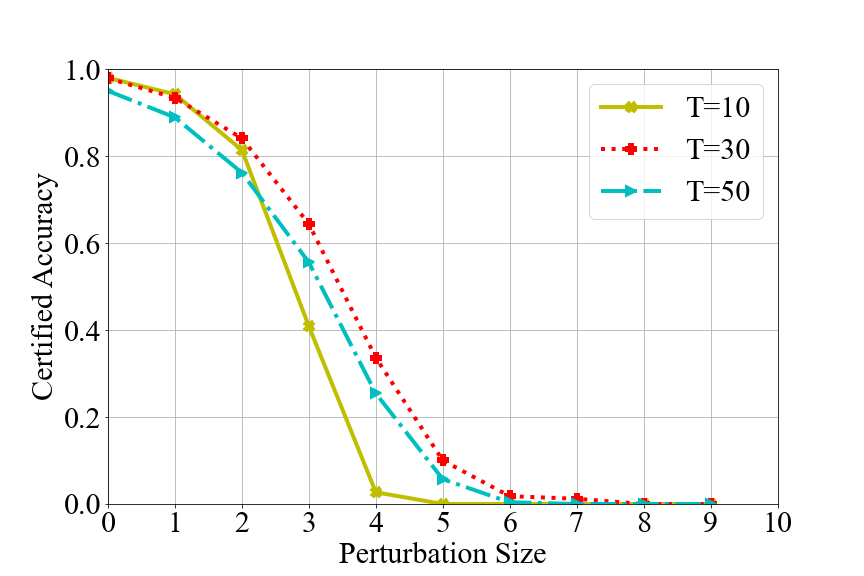}}\hfill
\subfloat[MUTAG]{\includegraphics[width=0.25\textwidth]{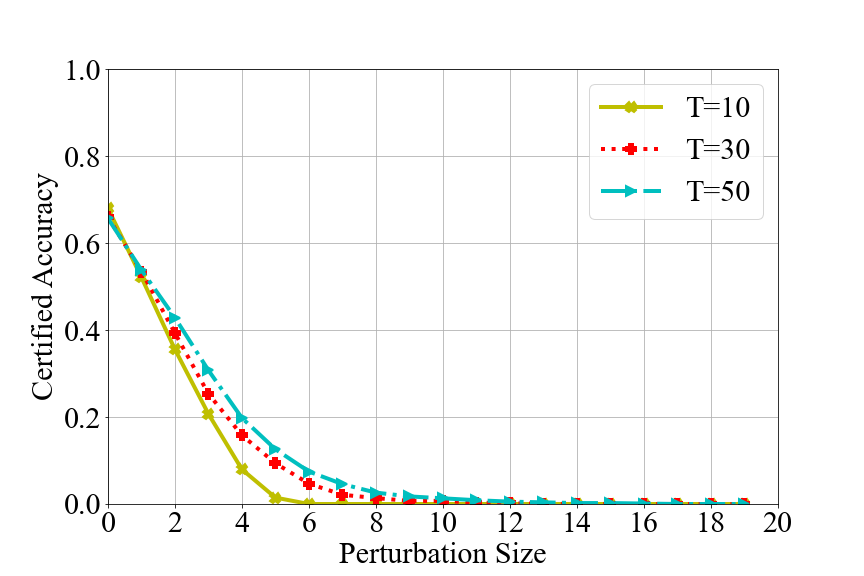}}\hfill
\subfloat[PROTEINS]{\includegraphics[width=0.25\textwidth]{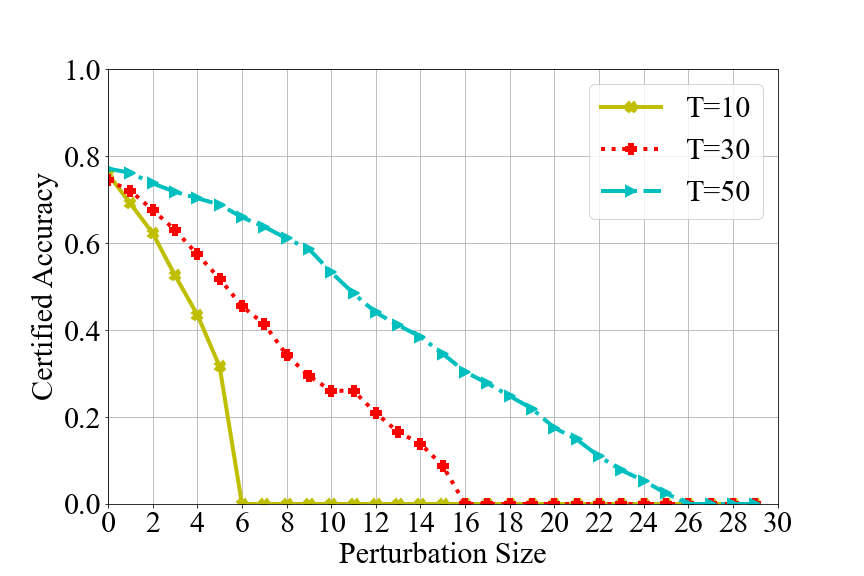}}\hfill
\subfloat[DD]{\includegraphics[width=0.25\textwidth]{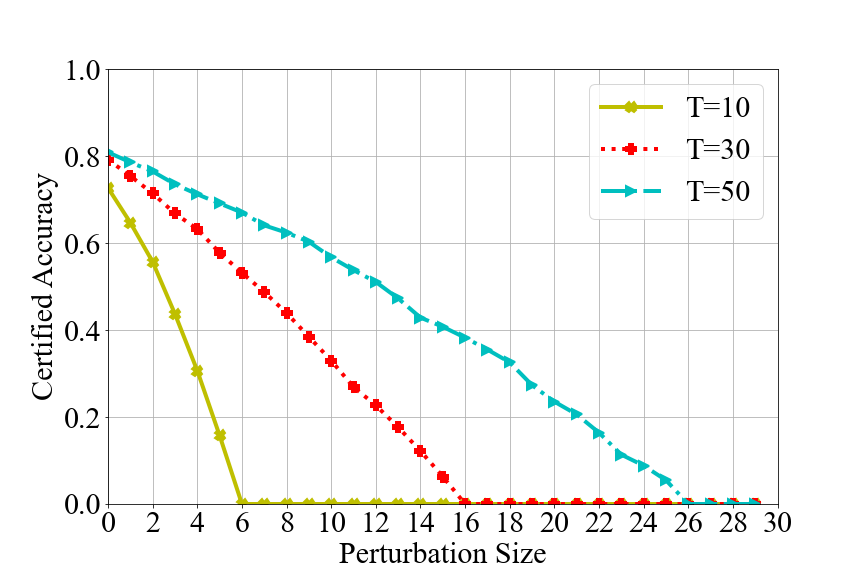}}\\
\caption{Certified graph accuracy of our {\nameE} w.r.t. the number of subgraphs $T$.
}
\label{fig:graph-EC-T}
\vspace{-3mm}
\end{figure*}

\begin{figure*}[!t]
\centering
\subfloat[AIDS]{\includegraphics[width=0.25\textwidth]{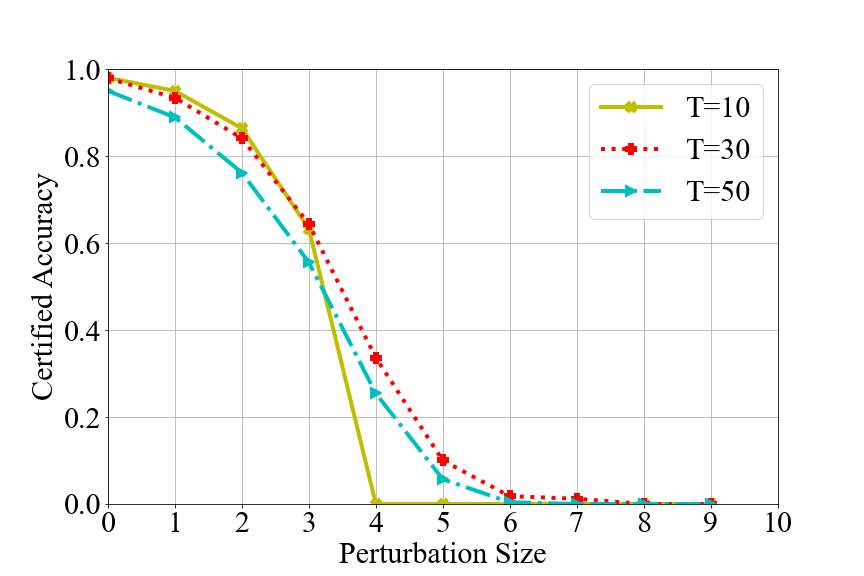}}\hfill
\subfloat[MUTAG]{\includegraphics[width=0.25\textwidth]{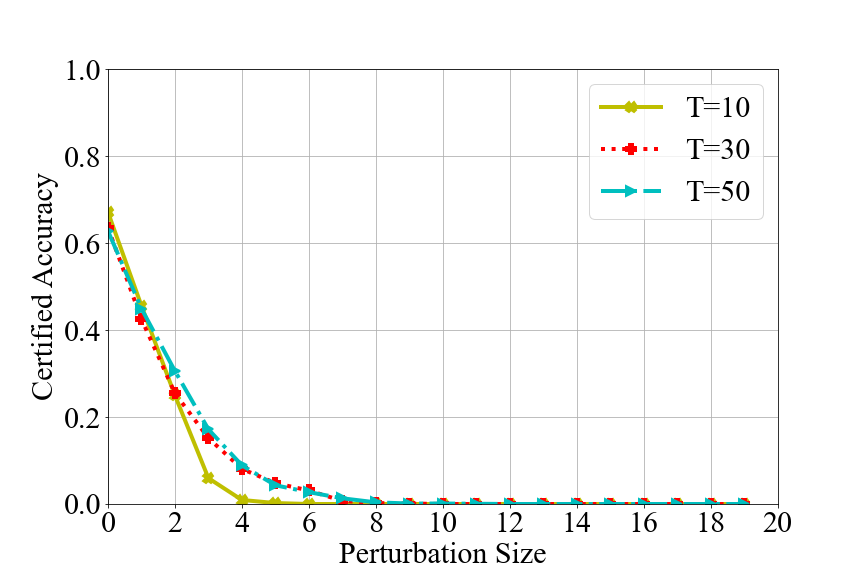}}\hfill
\subfloat[PROTEINS]{\includegraphics[width=0.25\textwidth]{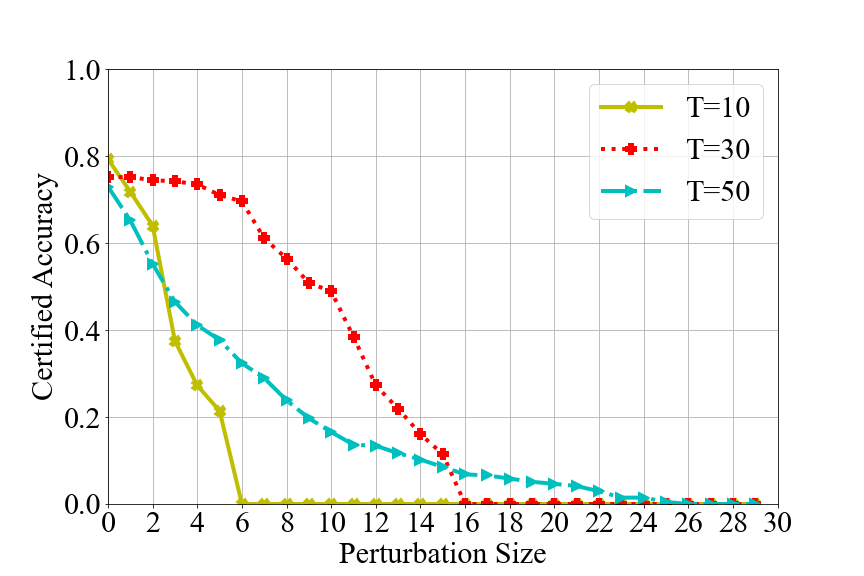}}\hfill
\subfloat[DD]{\includegraphics[width=0.25\textwidth]{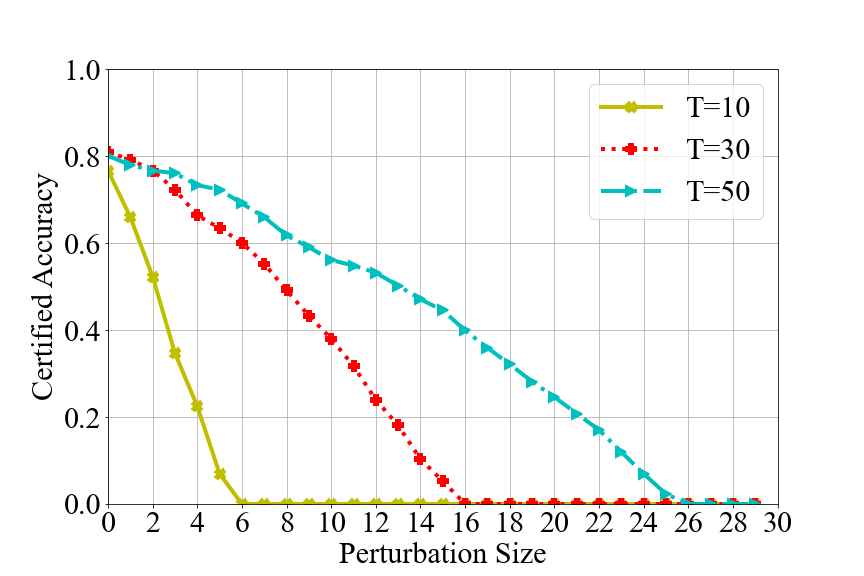}}\\
\caption{Certified graph accuracy of our {\nameN} w.r.t. the number of subgraphs $T$.}
\label{fig:graph-NC-T}
\vspace{-2mm}
\end{figure*}

\begin{table}[!t]
\vspace{+2mm}
    \footnotesize
    \centering 
    \addtolength{\tabcolsep}{-2pt}
    \renewcommand\arraystretch{0.9}
    \begin{tabular}{c|c|c|c|c}
     \toprule
          {\bf Node Classification}&{\bf Ave degree}&{$|\mathcal{V}|$}&$|\mathcal{E}|$&$|\mathcal{C}|$ \\
         \Xhline{0.9pt}
       Cora-ML&5.6&2, 995&8,416&7\\
         \cline{1-5} 
         Citeseer&2.8&3,327&4,732&6\\
         \cline{1-5} 
         Pubmed&4.5&19,717&44,338&3\\
         \cline{1-5} 
         Amazon-C&71.5&13,752&491,722&10\\
         \cline{1-5} 
        {Amazon2M}&50.5&2,449,029&61,859,140&47 \\
         \Xhline{1.2pt} 
       {\bf Graph Classification}&$|\mathcal{G}|$&$|\mathcal{V}|_{avg}$&$|\mathcal{E}|_{avg}$&$|\mathcal{C}|$ \\
         \Xhline{0.9pt}
         {AIDS}&2,000&15.7&16.2&2\\
         \cline{1-5} 
        MUTAG&4,337&30.3&30.8&2\\
         \cline{1-5} 
         PROTEINS&1,113&39.1&72.8&2\\
         \cline{1-5} 
        DD&1,178&284.3&715.7&2\\
         \cline{1-5} 
        {Big-Vul} &18,103 &35.5&117.3&2 \\
       \bottomrule
    \end{tabular}
    \caption{Datasets and their statistics.}
    \label{tab:datasets}
    \vspace{-2mm}
\end{table}

\vspace{+0.05in}
\noindent {\bf Compared baselines:} 
As  {\name} encompasses existing defenses as special cases, we can compare {\name} with them against less types of perturbation. 
Here, we choose the state-of-the-art Bi-RS~\cite{lai2023nodeawarebismoothingcertifiedrobustness} 
and GNNCert~\cite{xia2024gnncert} for comparison.

\begin{itemize}[leftmargin=*]

\vspace{-2mm}
\item {\bf Bi-RS:} It certifies 
GNN for \emph{node classification} against node inject attacks with a probabilistic guarantee. 
During training, 
Bi-RS augments the graph with 
$N_1$ noisy graphs from a smoothing distribution (defined in its Eqn.3) and trains the node classifier with both clean graphs and their noisy ones. 
During certification, Bi-RS utilizes Monte-Carlo sampling to compute the certified perturbation size. Given 
a graph and the trained node classifier, Bi-RS first generates $N_2$ noisy graphs for the given graph and then derives the robustness guarantee for each target node on the noisy graphs that is correct with a probability $1-\alpha$. 
Note that ensuring a smaller $\alpha$ needs more samples.
Bi-RS sets $N_2=50,000$ and $\alpha=0.01$. In our experiment, 
we also set $N_1=T$.

\vspace{-2mm}
\item {\bf GNNCert:} It is the state-of-the-art certified defense (with a deterministic guarantee) of GNN for \emph{graph classification} against edge manipulation, and both edge  and node feature manipulation (more details see Section~\ref{sec:GNNCert}). 
We denote the two variants as GNNCert-E and GNNCert-EN, respectively. 
During training,  GNNCert-E and  GNNCert-EN use the extra $T_e$ and $T_e \cdot T_n$ subgraphs for training the base graph classifier. During certification, GNNCert-E and  GNNCert-EN also use the same number of subgraphs.  
\emph{We highlight that, for edge manipulation, GNNCert-E has the same bound as our {\nameE} under edge-centric graph division. This is because the generated subgraphs of both defenses are exactly the same, and so does the voting graph classifier when using the same base GNN classifier.}  

\end{itemize}

\vspace{+0.05in}
\noindent {\bf Parameter setting:} {\name} has two hyperparameters: the hash function $h$ and the number of subgraphs $T$. By default, we use MD5 as the hash function and set $T=30,300$ respectively for node and graph classification, considering their different graph sizes. 
 We also study the impact of them.

\subsection{Experiment Results}

\subsubsection{{\name} against Arbitrary Perturbation}

\noindent {\bf Main results:} Figures~\ref{fig:node-EC-T}-\ref{fig:node-NC-T} show the certified node accuracy and Figures~\ref{fig:graph-EC-T}-\ref{fig:graph-NC-T} show the certified graph accuracy at perturbation size $m$ w.r.t. $T$ under the two graph division strategies, respectively.
We have the following observations. 

\begin{itemize}[leftmargin=*]
\vspace{-2mm}
\item Both {\nameE} and {\nameN} can tolerate the perturbation size up to 200 and 25, on the node classification and graph classification datasets, respectively. This means {\nameE} can defend against a total of 200 (25) arbitrary edges, while {\nameN}  against a total of 200 (25) arbitrary edges and nodes caused by the arbitrary perturbation,  on the node (graph) classification datasets, respectively. 
Note that node classification datasets have several orders of more nodes/edges than graph classification datasets, hence {\name} can tolerate more perturbations on them.

\vspace{-2mm}
\item $T$ acts as the robustness-accuracy tradeoff. That is, a larger (smaller) $T$ yields a higher (lower) certified perturbation size, but a smaller (higher) normal accuracy ($m=0$). 

\vspace{-2mm}
\item In {\nameN}, the guaranteed perturbed nodes can 
have an infinite number of edges. This thus implies  {\nameN} produces better   robustness than {\nameE} against the perturbed  edges by node/node feature manipulation. 
\end{itemize}

\vspace{+0.05in}
\noindent {\bf Impact of hash function:} 
Figure~\ref{fig:node-EC-hash}-Figure~\ref{fig:graph-NC-hash} in Appendix 
show the certified node/edge accuracy of {\nameE} and  {\nameN} with different hash functions. We observe that our certified accuracy and certified perturbation size are almost the same in all cases. This reveals  {\name} is insensitive to hash functions, and \cite{xia2024gnncert} draws a similar conclusion. 

\vspace{+0.05in}
\noindent {\bf Impact of base GNN classifiers:} Figures~\ref{fig:node-EC-T-GSAGE}-\ref{fig:graph-NC-T-GSAGE} and 
Figures~\ref{fig:node-EC-T-GAT}-\ref{fig:graph-NC-T-GAT} in Appendix show the certified accuracy at perturbation size using GSAGE and GAT as the base classifier, respectively. We have similar observations as those results with GCN. For instance, $T$ trade offs robustness and accuracy.

\begin{table}[!t]
\centering
\footnotesize
\addtolength{\tabcolsep}{-3.5pt}
\caption{Node/graph accuracy of normally trained GNN and of {\name} with GNN trained on the subgraphs.}
\begin{tabular}{|c|c|c|c|c|c|c|c|c|c|}
\hline
\multirow{2}{*}{\bf Dataset} & \multirow{2}{*}{\bf GCN} & \multicolumn{2}{c|}{\name} & \multirow{2}{*}{\bf GSAGE} & \multicolumn{2}{c|}{\name} & \multirow{2}{*}{\bf GAT} & \multicolumn{2}{c|}{\name} \\ \cline{3-4}\cline{6-7}\cline{9-10}
&&-E&-N&&-E&-N&&-E&-N
\\ \hline
{\bf Cora-ML} &0.73&0.70&0.68&0.67& 0.67 & 0.68 &0.74&  0.68&0.69 \\ \hline
{\bf Citeseer} &0.66&0.65&0.67&0.64& 0.63 & 0.64 &0.66& 0.65&0.66 
  \\ \hline
{\bf Pubmed} &0.86&0.81&0.82&0.84&  0.84&0.84&0.85& 0.84& 0.84 \\ \hline
{\bf Amazon-C}&0.81&0.76&0.76&0.80&0.77  &0.75 &0.78& 0.74& 0.74 \\ \hline \hline
{\bf AIDS}&0.99&0.98&0.96&0.97& 0.96 &0.97 &0.96&0.98& 0.98  \\ \hline
{\bf MUTAG}&0.71&0.66&0.65&0.70& 0.66 &0.67 &0.71& 0.67&0.66 \\ \hline
{\bf Proteins}&0.75&0.75&0.75&0.80& 0.79 &  0.77&0.82
&0.77 & 0.77\\ \hline
{\bf DD}&0.80 &0.79
&0.81&0.81& 0.80 & 0.81&0.81& 0.77& 0.80\\ \hline
\end{tabular}
\label{tbl:normalacc}
\end{table}

\vspace{+0.05in}
\noindent {\bf Impact of subgraphs on the certified accuracy:} 
We test the certified accuracy of (not) using subgraphs to train the GNN classifier.  
Figures~\ref{fig:node-EC-w-wo}-\ref{fig:graph-EC-w-wo} in Appendix show the comparison results under the default $T$ for node and graph classification.    
The results show training with subgraphs can enhance the certified robustness of {\name}, especially on large datasets. This is because training and certification both involve raw graphs and the subgraphs, making their distributions similar.

\vspace{+0.05in}
\noindent {\bf Impact of subgraphs on the normal accuracy:} 
We test the normal accuracy of (not) using subgraphs to train the GNN classifier.  
Table~\ref{tbl:normalacc} shows the comparison results of the test node/graph accuracy of the normally trained GNN without sbugraphs and  {\name} with GNN trained on the subgraphs. 
We observe that the accuracy of {\name} is 5\% smaller than that of normally trained GNN in almost all cases, and in some cases even larger. This implies the augmented subgraphs for training marginally affects the normal test accuracy.

\begin{figure*}[!t]
\centering
\subfloat[Cora-ML]{\includegraphics[width=0.25\textwidth]{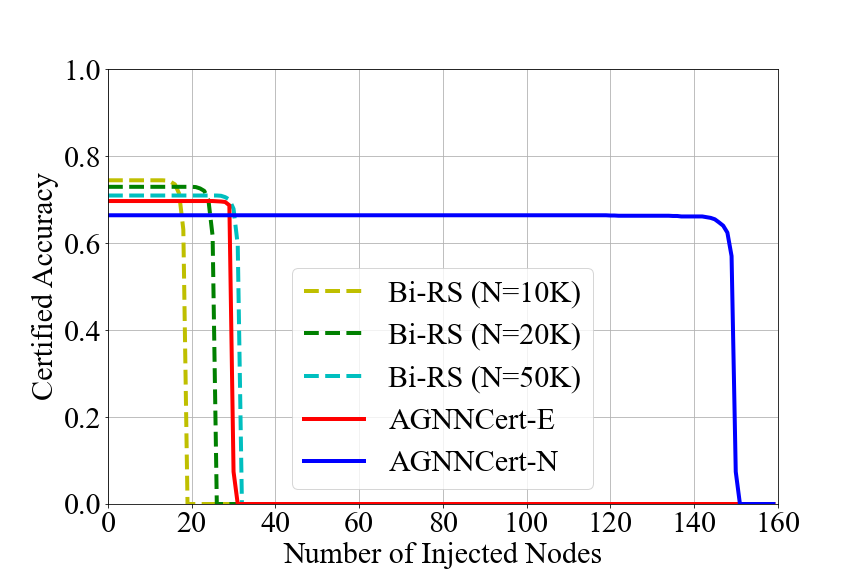}}\hfill
\subfloat[Citeseer]{\includegraphics[width=0.25\textwidth]{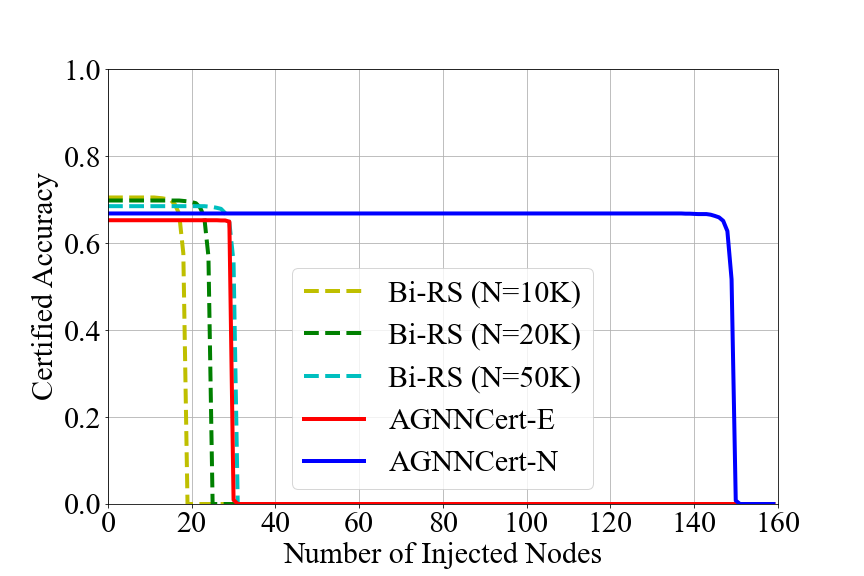}}\hfill
\subfloat[Pubmed]{\includegraphics[width=0.25\textwidth]{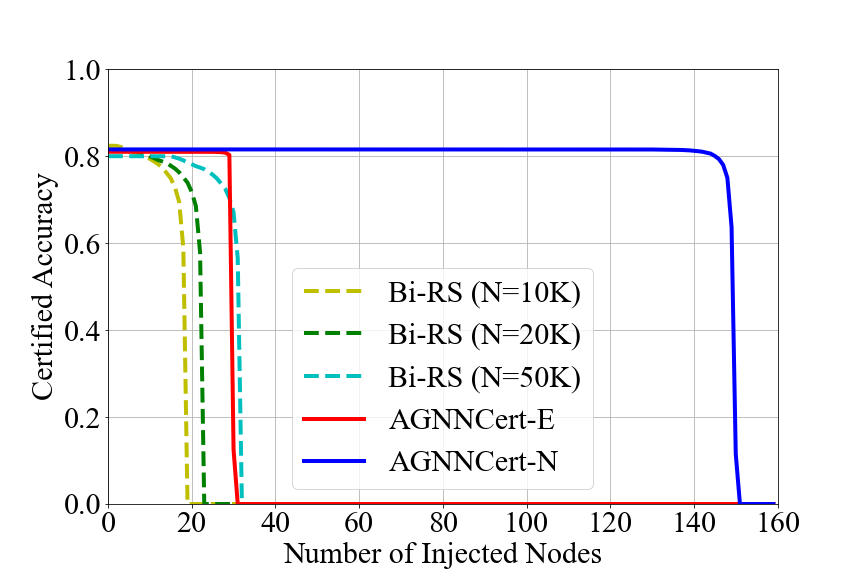}}\hfill
\subfloat[Amazon-C]{\includegraphics[width=0.25\textwidth]{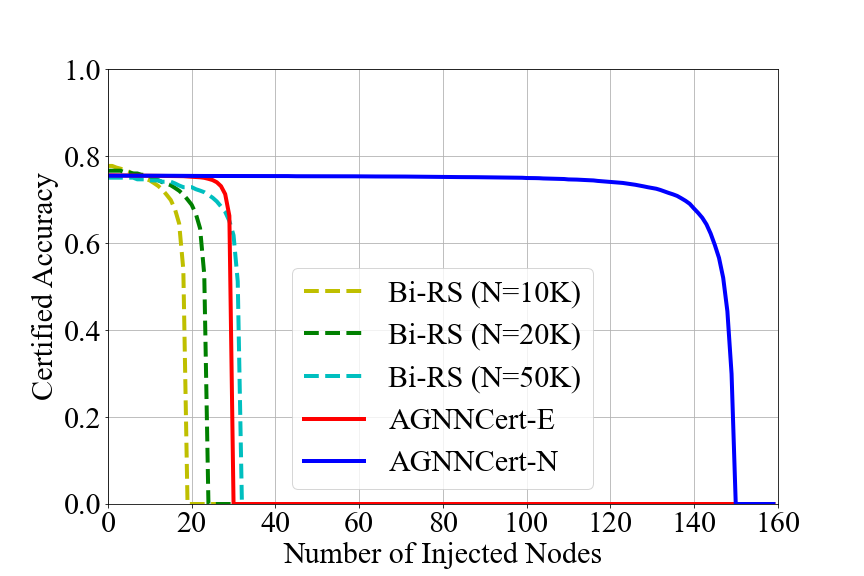}}\\
\caption{Certified node accuracy of {\name} and Bi-RS against node inject attacks. 
}
\label{fig:ours-E-vs-biRS}
\vspace{-4mm}
\end{figure*}

\begin{figure*}[!t]
\centering
\subfloat[AIDS]{\includegraphics[width=0.25\textwidth]{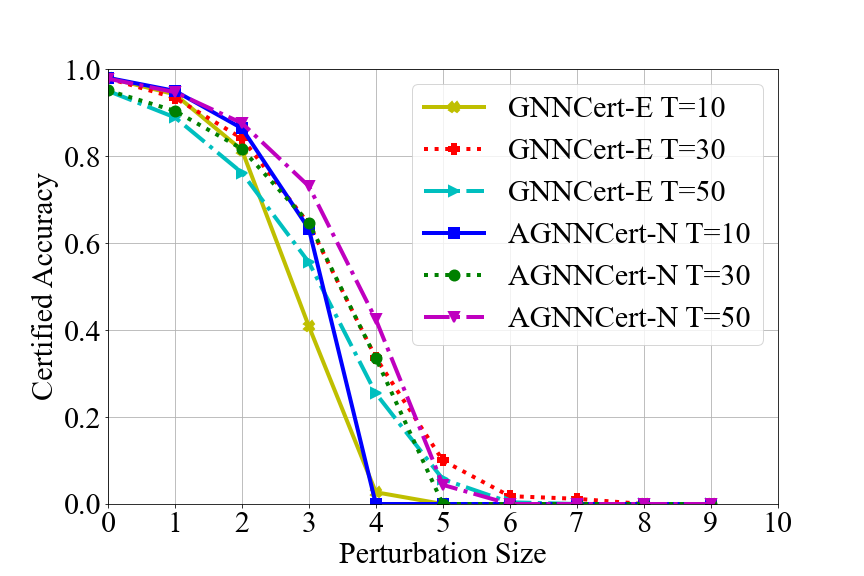}}\hfill
\subfloat[MUTAG]{\includegraphics[width=0.25\textwidth]{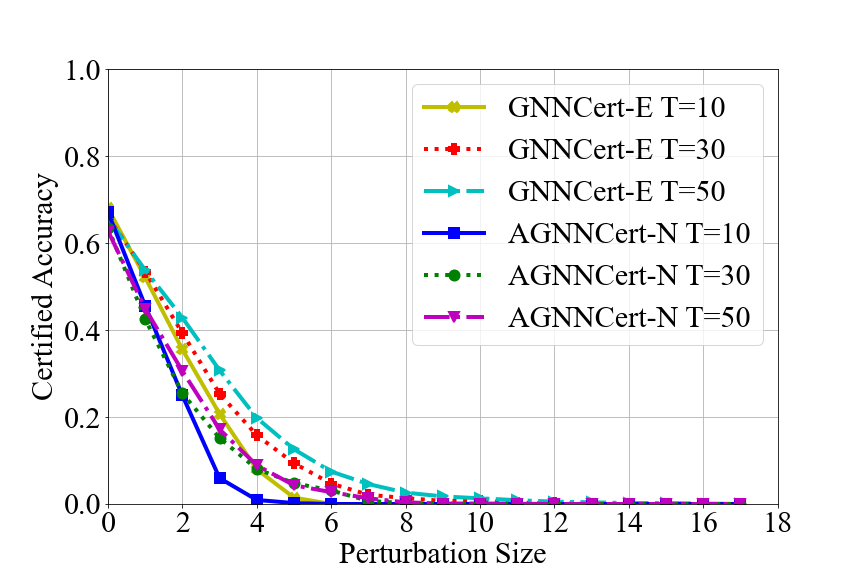}}\hfill
\subfloat[PROTEINS]{\includegraphics[width=0.25\textwidth]{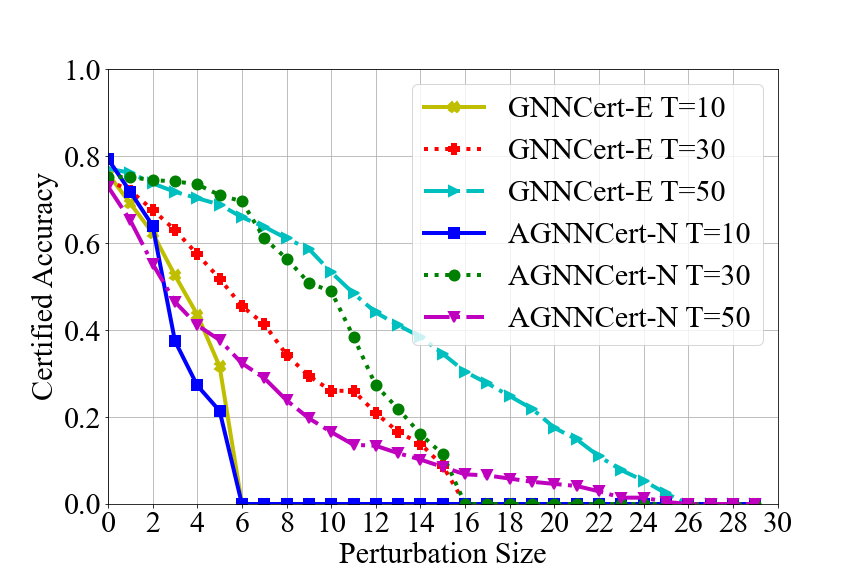}}\hfill
\subfloat[DD]{\includegraphics[width=0.25\textwidth]{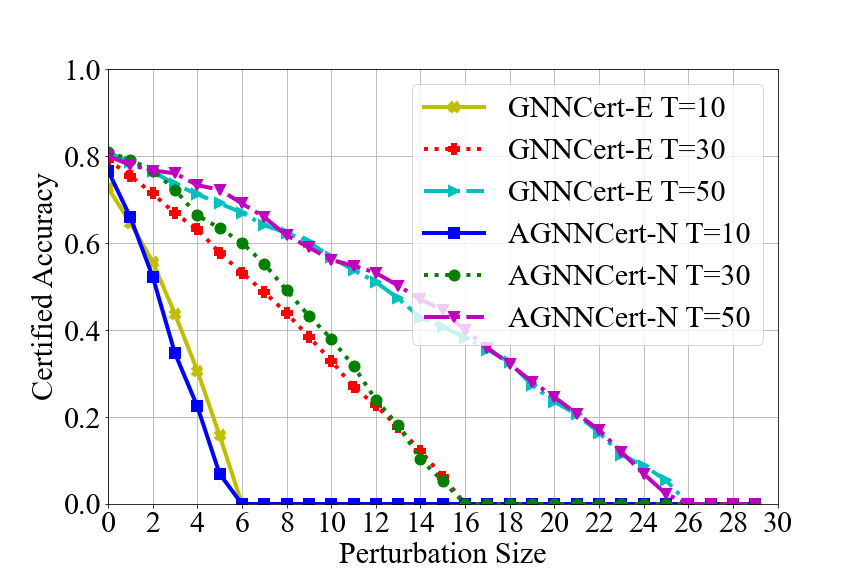}}\\
\caption{Certified graph accuracy of {\nameN} and GNNCert-E against edge manipulation.}
\label{fig:ours-vs-gnncert-edge}
\vspace{-4mm}
\end{figure*}

\begin{figure*}[!t]
\centering
\subfloat[AIDS]{\includegraphics[width=0.25\textwidth]{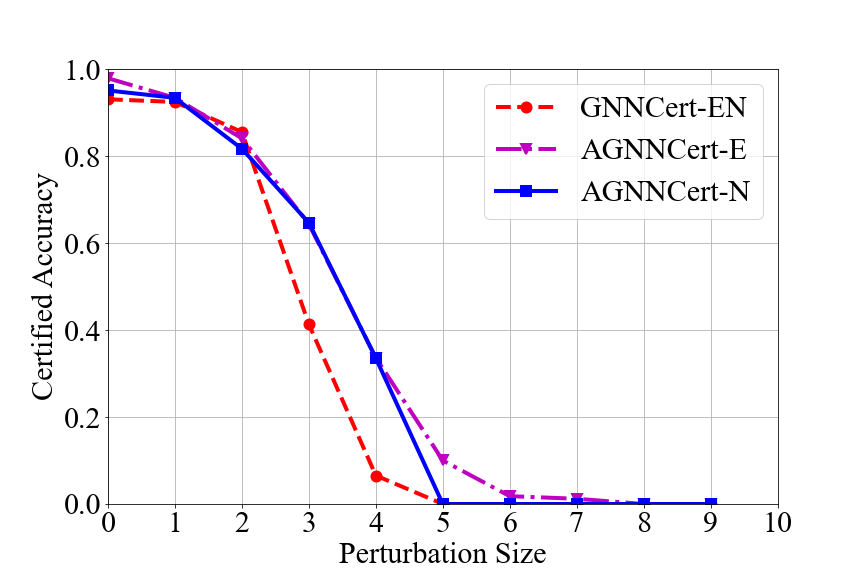}}\hfill
\subfloat[MUTAG]{\includegraphics[width=0.25\textwidth]{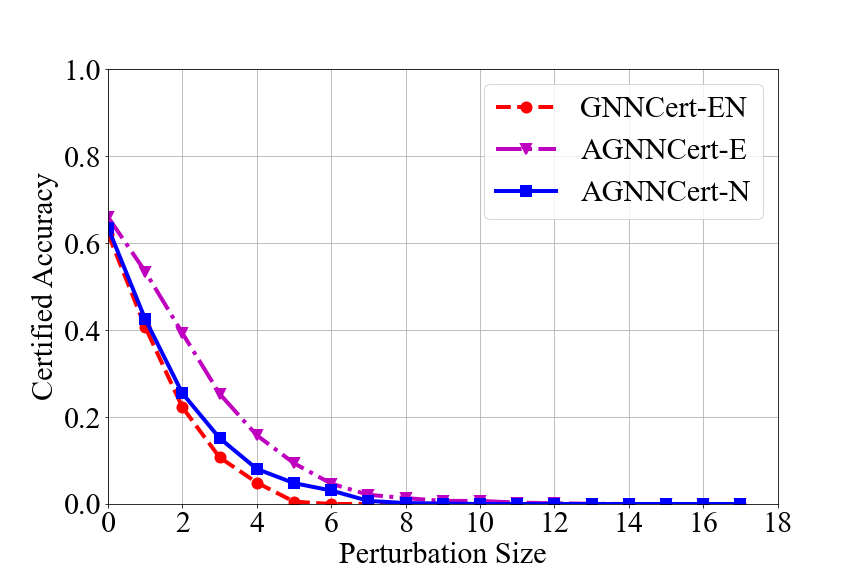}}\hfill
\subfloat[PROTEINS]{\includegraphics[width=0.25\textwidth]{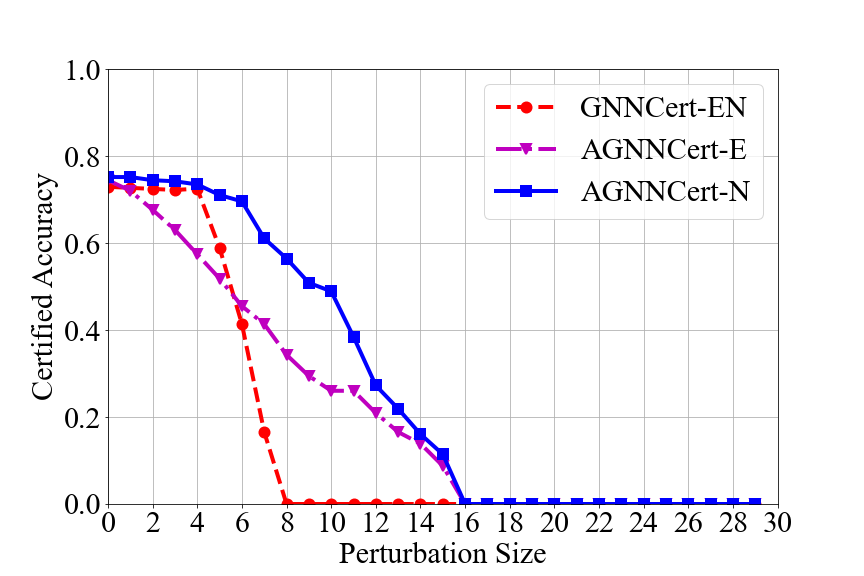}}\hfill
\subfloat[DD]{\includegraphics[width=0.25\textwidth]{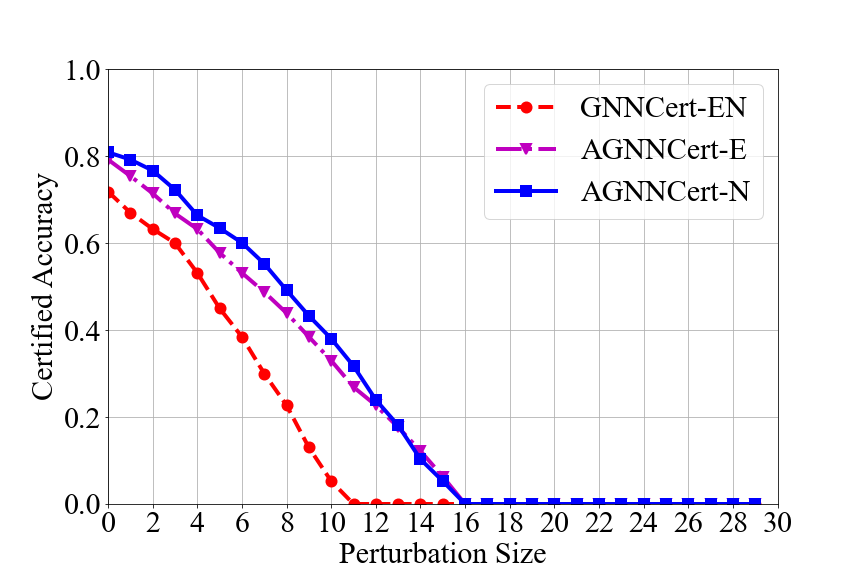}}\\
\caption{Certified graph accuracy of {\name} and GNNCert-EN against edge and node feature manipulation.}
\label{fig:ours-vs-gnncert-edgenode}
\vspace{-2mm}
\end{figure*}

\subsubsection{Comparing {\name} with Bi-RS and GNNCert}

\noindent {\bf Comparing {\name} with Bi-RS for node classification against node injection attacks:} 
We first add some details of Bi-RS.
Bi-RS assumes the number of injected nodes is $\rho$ and each node can connect at most $\tau$ edges, so the total perturbed edges is $\rho \cdot \tau$. It also involves two hyperparameters $p_e$ and $p_n$, which means the probability of deleting an edge and deleting a node (and all its connected edges), respectively. These parameters are used to derive the certified perturbation size (see its Eqn 5). 
In the experiment, we follow Bi-RS by setting $\tau=5$ and pick its best result from 9 combinations with $p_e=\{0.7,0.8,0.9\}$ and $p_n=\{0.7,0.8,0.9\}$. 
Figure~\ref{fig:ours-E-vs-biRS} shows the comparison results.

\begin{itemize}[leftmargin=*]
\vspace{-2mm}
\item {\bf {\nameE} vs Bi-RS:} 
We first mention the number of injected nodes in {\nameE} is calculated by dividing the bounded number of edges in Equation~\ref{eqn:cpz_edge} by $\tau$. We can see the two methods have comparable certified node accuracy w.r.t. the number of injected nodes, which indicates {\nameE} is already as effective as Bi-RS.  
Further, we highlight our {\nameE}'s theoretical result is \emph{deterministic} and \emph{far more general}---it bounds the total number of perturbed edges induced by the node inject attack, where the combination of the number of injected nodes and the number of incident edges for each injected node is arbitrary.  

\vspace{-2mm}
\item {\bf {\nameN} vs Bi-RS:} We can see {\nameN} has much better certified node accuracy than Bi-RS w.r.t. the number of injected nodes (under $\tau=5$). Furthermore, we highlight that each bounded node in {\nameN} can inject as many (even infinite) edges as possible. Hence, the total number of bounded edges in {\nameN} could be infinite, which is infinitely higher than Bi-RS's bound when using the total perturbed edges as the evaluation metric.  

\vspace{-2mm}
\end{itemize}

\vspace{+0.05in}
\noindent {\bf Comparing {\name} with GNNCert for graph classification against edge manipulation:} 
Recall that, when using the same hash function and same number of subgraphs in both defenses, {\nameE} and GNNCert-E produce the same subgraphs and same voting graph classifier. Hence, their certified graph accuracy/perturbation size are same.  
Here, we compare {\nameN} with GNNCert-E, and results are in Figure~\ref{fig:ours-vs-gnncert-edge}.    
{We observe both methods have close certified accuracy/perturbation size, implying they have comparable robustness guarantee  
against edge manipulation}.

\begin{table}[!t]
\centering
\caption{Big-O complexity comparison for defense training and certification. {We also include the base GNN for completeness.} We do not include other complexity factors in  training and certification, as they are similar in all defenses. In practice, $N_2$ can be as large as $100,000$; $N_1, T_e, T_n$ and $T$ have values $\le 100$. Hence $N_2 \gg N_1 \simeq T_e \simeq T_n \simeq T$.} 
\begin{tabular}{lcc}
\toprule
{\bf Defenses} & {\bf Training} & {\bf Certification} \\
\midrule
{\bf GNN} &  O($1$) & O($1$)\\
{\bf Bi-RS} &  O($N_1$) & O($N_2$)\\
{\bf GNNCert-E} &  $O(T_e)$ & $O(T_e)$\\
{\bf GNNCert-EN} &  $O(T_e \cdot T_n)$ & $O(T_e \cdot T_n)$\\
{\bf {\nameE}} & $O(T)$ & $O(T)$\\
{\bf {\nameN}} &  $O(T)$ & $O(T)$\\
\bottomrule
\label{computation-cost-training-testing}
\end{tabular}
\vspace{-8mm}
\end{table}

\vspace{+0.05in}
\noindent {\bf Comparing {\name} with GNNCert against edge AND node feature manipulation:} 
As analyzed in Section~\ref{sec:GNNCert}, the initial  guarantee of GNNCert is for edge manipulation \emph{or} node feature manipulation. To defend against both manipulations, it requires $T_e=T_n$. Figure~\ref{fig:ours-vs-gnncert-edgenode} shows the comparison results under $T_e=T_n=T$. 
We can see our  {\name} performs better than GNNCert-EN. For instance, on PROTEINS,  
{\nameE} can certify a total of 15 perturbed edges by both manipulations, and {\nameN} certifies a total of 15 edges and nodes whose features can be arbitrarily perturbed.
Instead, GNNCert-EN can only tolerate up to 7 edges and nodes.  
This may because, compared to {\name}, GNNCert-EN generates far more subgraphs ($T^2$) with each subgraph having less edges and many nodes in subgraphs do not have features (0 values), thus using much less information in the raw graph.

\begin{table*}[!t]\renewcommand\arraystretch{1}
\centering
\small
\addtolength{\tabcolsep}{-3.5pt}
\caption{Training and test time of provable defenses and undefended GNN on the evaluated datasets.}
\begin{tabular}{|c|c|c|c|c|c|c|c|c|c|c|}
\hline
\multicolumn{2}{|c|}{\bf Datasets} & {\bf Cora-ML}&{\bf Citeseer}&{\bf Pubmed} &{\bf Amazon-C}
&{\bf Datasets}& {\bf AIDS} &{\bf MUTAG}& {\bf PROT.} &{\bf DD} \\ \hline
& {\bf GCN}&0.03s &0.03s &0.12s &0.31s&GCN&6.66s&14.82s &3.87s &6.45s 
\\
{Training Time}& {\bf Bi-RS}& 16.73s&22.21s &117.57s &98.10s & {\bf GNNCert-E} &114.90s &388.01s &107.72s & 171.34s
\\
{(per epoch)}& {\nameE}& 17.46s&21.44s &110.58s &102.31s &{\nameE}&100.55s & 389.08s&95.70s &163.27s \\
& {\nameN}& 18.59s&22.47s &102.26s &96.55s &{\nameN}&101.94s & 400.97s&98.61s &151.18s 
\\\hline
\multirow{4}{*}{Test/Certification Time}& {\bf GCN} &0.01s &0.01s &0.02s & 0.08s&GCN&1.46s &2.66s &0.70s &1.02s 
\\
& {\bf Bi-RS} &1658s & 1943s&60589s &15792s &{\bf GNNCert-E} &22.15s &82.21s &26.38s &32.85s 
\\
& {\nameE}&7.35s &8.36s &44.91s & 35.29s&{\nameE}&24.34s &82.68s &23.05s &33.14s 
\\
& {\nameN}&7.45s &8.40s &42.69s & 36.41s&{\nameN}&22.57s &86.15s &25.45s &32.85s 
\\\hline
\end{tabular}
\label{tbl:exp_time}
\vspace{-2mm}
\end{table*}

\begin{figure*}[!t]
\centering
\subfloat[Amazon2M: {\nameN}]{\includegraphics[width=0.25\textwidth]{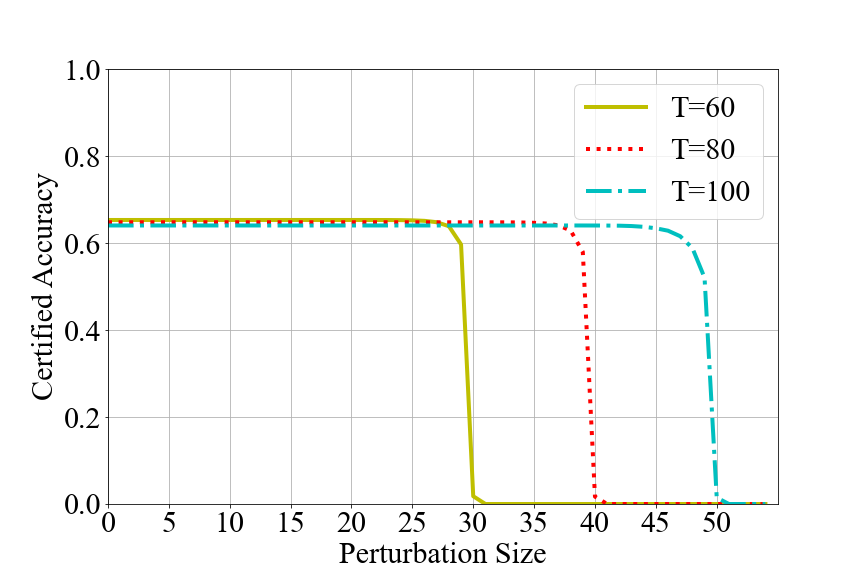}}
\subfloat[Amazon2M: {\nameE}]{\includegraphics[width=0.25\textwidth]{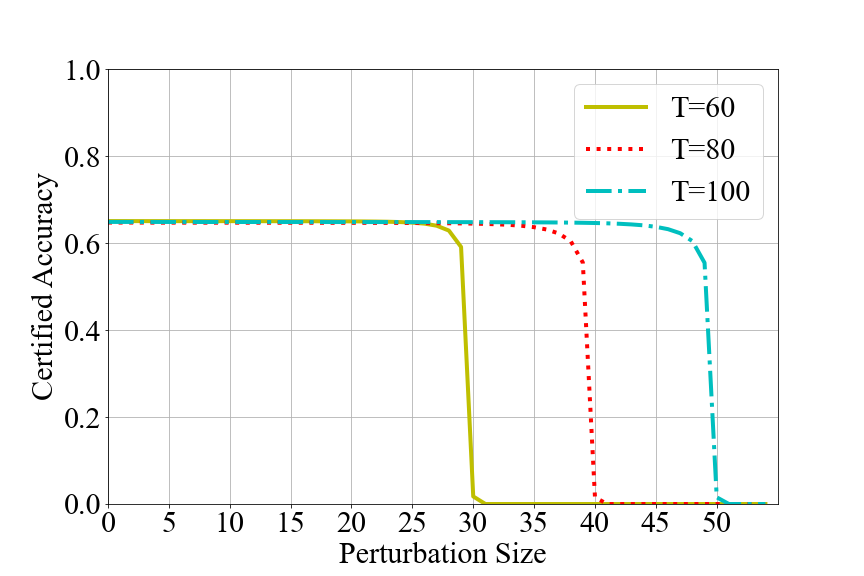}}
\subfloat[Big-Vul: {\nameN}]{\includegraphics[width=0.25\textwidth]{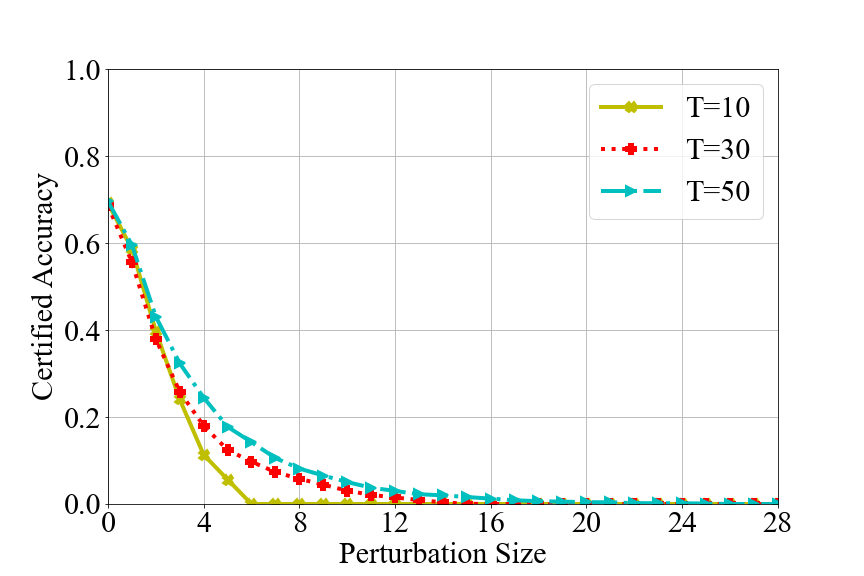}}\hfill
\subfloat[Big-Vul: {\nameE}]{\includegraphics[width=0.25\textwidth]{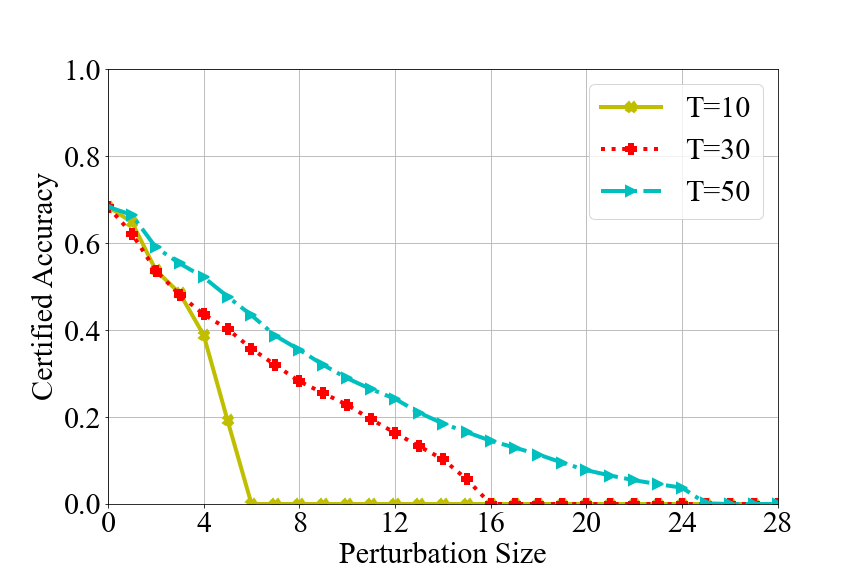}}\\
\caption{Certified node/graph accuracy of {\name} w.r.t. the number of subgraphs $T$ on Amazon2M and Big-Vul.
}
\label{fig:realworld}
\vspace{-2mm}
\end{figure*}

\noindent {\bf Comparing the computational complexity {\bf and runtime} of the defenses:} 
Table~\ref{computation-cost-training-testing} shows the Big-O complexity of the compared defenses and the base GNN for training and certification/testing. We only show the factor on the augmented graphs as other factors are similar in all methods.   
We observe that: 1) {As $N_1 \simeq T_e \simeq T_n \simeq T$,  all defenses have close Big-O complexity for training (except GNNCert-EN)}. 
2) GNNCert-E has a similar training and certification complexity as ours, but it can only defend against the edge manipulation. 3) Bi-RS is the least efficient for certification due to needing vast samples to ensure high confidence guarantees. 
{4) All defenses are $T$ slower than the base GNN in training and certification.}
{We also record the runtime in Table~\ref{tbl:exp_time} and these defenses' runtime matches the observations from the Big-O analysis.}

\section{Evaluations on Real-World Graph Datasets}
\label{sec:realworld}

In this section, we will evaluate {\name} on two real-world  graph datasets, i.e., Amazon2M co-purchasing dataset \cite{clusterGCN} for node classification and Big-Vul code vulnerability dataset \cite{big_vul} for graph classification.  

\subsection{Experimental Settings}

Amazon2M dataset is a network representation of products from Amazon, where nodes signify products, and edges indicate two products are frequently purchased together. 
This dataset consists of 2,449,029 nodes and 61,859,140 edges and is used for node classification --- each node has 100 features and is labeled as one of 47 products and the task is to classify products. 
We divide nodes into 30\% for training, 20\% for validation, and 50\% for testing. 

Big-Vul is widely-used code vulnerability dataset, which comprises extensive source code vulnerabilities extracted from 348 open-source C/C++ GitHub projects, spanning from 2002 to 2019. It contains 188,636 C/C++ functions, 
including 10,900 vulnerable ones (covering 91 vulnerability types), 
and 7,203 benign ones. Following the recent work \cite{chu2024graph}, 
 we built code graphs by taking code statements as nodes, control-flow or data-flow
dependencies as edges and utilizing GraphCodeBERT’s\cite{guo2020graphcodebert} token embedding layer to initialize node features.  Afterwards, we labeled these code graphs as vulnerable or benign,   
and formed vulnerability detection problem as a binary graph classification task \cite{chu2024graph}. We divide the graphs into 80\% for training, 10\% for validation, and 10\% for testing.

 We use GCN as the base GNN in {\name} (MD5 as the hash function) to train Big-Vul, and cluster-GCN  \cite{clusterGCN} (a more computation- and memory- efficient variant of GCN) as a base GNN in {\name} to train the large-scale Amazon2M.

\subsection{Experimental Results}

{\bf Runtime and accuracy:} Table \ref{tab:real-graph} shows the  training and test time, and test accuracy of {\name} and the base GNN on Amazon2M ($T=80$) and Big-Vul ($T=30$).   
We observe that: 1) Test accuracies of {\name} and base GNN are close, indicating {\name} maintains the utility in these real-world graphs; 2) {\name} is about $T$ times slower than the base GNN, again consistent with the Big-O analysis in Table \ref{computation-cost-training-testing}.

\noindent {\bf Certified accuracy:} Figure~\ref{fig:realworld} reports the certified accuracies of {\name} on the two datasets. The results validate that {\name} is also an effective defense for safeguarding real-world GNN applications against graph perturbations. 
For instance, {\nameN} can tolerate up to 50 edges and nodes on Amazon2M with arbitrary perturbations; and {\name}-E can defend against 24 arbitrarily perturbed edges on Big-Vul.

\begin{table}[!t]
\caption{Runtime and test accuracy of {\name} and the base undefended GNN on Amazon2M (T=80) and Big-Vul (T=30). As {\nameE} and  {\nameN} have close runtime and test accuracy, we simplicity use {\name} for brevity.}
    \centering
    \footnotesize
    \addtolength{\tabcolsep}{-3pt}
    \begin{tabular}{|c|c|c|c|c|}
    \hline
         {\bf Dataset}&{\bf Method}&{\bf Train time/epoch} &{\bf Test time} & {\bf Test acc.}  \\
         \hline
         \multirow{2}{*}{\bf Amazon2M}& {\bf Cluster-GCN}&3.2s&1.1s&0.72\\
         &{\name}&287s&107s&0.68\\ \hline
         \multirow{2}{*}{\bf Big-Vul}& {\bf GCN}&27.8s&2.3s&0.70\\
         &{\name}&827s&65s&0.69\\ \hline
    \end{tabular}
    \label{tab:real-graph}
\end{table}

\section{Discussions and Limitations}
\label{sec:discussion}

{\bf {\name}'s performance with larger/powerful GNNs:}
The certified robustness result is determined by the gap between the most votes (for the correct label) and second-most votes obtained by a GNN on subgraphs.
Hence, a GNN making more accurate predictions on subgraphs exhibits better certified robustness. A more powerful/larger GNN may achieve better robustness, as it is expected to provide more accurate predictions. For instance, we test a 6-layer ResGCN \cite{li2019deepgcns} on Pubmed, and its certified accuracy is 2\% higher than that of the used 3-layer GCN under the same perturbation size.

\vspace{+0.05in}
\noindent {\bf Node-centric vs. edge-centric {\name}:} When defending against 
node perturbations, {\nameN} outperforms {\nameE} because {\nameN}  guarantees an infinite number of perturbed edges, whereas {\nameE}'s guarantee is bounded. 
However, when defending against edge manipulation attacks, it is hard to say which method is better, as we cannot ascertain which $M$ value (in Equation \ref{eqn:cpz_edge} for {\nameE} and Equation \ref{eqn:cpz_node} for {\nameN}) is larger, considering the two methods use distinct graph division strategies.

\vspace{+0.05in}
\noindent {\bf {\name} may be ineffective against training-time attacks on GNNs:} 
The proposed {\name} is primarily designed to robustify a \emph{clean} GNN model against \emph{test-time} attacks. Its effectiveness relies on the gap between the most-votes and second-most-votes be sufficiently large. However, if the GNN model is poisoned \cite{wang2023turning} or backdoored \cite{zhang2021backdoor,xi2021graph,yang2024distributed} during training (e.g., a compromised model downloaded from the internet), and our defense is unaware of it, the derived bound may be weakened as the poisoned/backdoored model could reduce this gap. We will leave this as future work.

\vspace{+0.05in}
\noindent {\bf {\name} may be ineffective on graph similarity or matching tasks:} 
{\name} takes a \emph{single} graph as input. 
However, certain security applications involve a pair of graphs, e.g., GNN-based (binary or source) code similarity analysis \cite{he2022illuminati,li2019graph,liu2023learning,kim2022revisiting,gao2024sigmadiff,marcelli2022machine} takes as input a pair of  (e.g., control-flow) graphs generated from the code, and they can be formalized as a graph similarity/matching problem. In this context, an adversary is able to manipulate the source code such that the respective code graph could be largely changed (e.g., many node indexes and edges are changed), while maintaining the code functionality. This attack would make it hard to obtain the one-to-one correspondence between subgraphs generated from the two source graphs. 
Hence, it is difficult to directly apply {\name} for certification in this setting.

\vspace{+0.05in}
\noindent {\bf Inefficiency of {\name} to large-scale graphs:} As shown in Table \ref{computation-cost-training-testing}, our {\name} has a training and certification complexity that is $T$ times of the base GNN's. This overhead becomes significant when applying {\name} to large graphs (see Table \ref{tab:real-graph}). 
We acknowledge it is important future work to speed up {\name}, while holding its theoretical results.

\section{Related Work}
\label{sec:related}

\noindent {\bf Adversarial attacks on GNNs:} Various  works~\cite{zugner2018adversarial,dai2018adversarial,wu2019adversarial,wang2019attacking,xu2019topology,sun2020adversarial,zhang2021backdoor,wan2021adversarial,zhang2022projective,ma2020towards,mu2021a,wang2022bandits,wang2023turning,chenunderstanding,ju2023let,wang2024efficient} show GNN classifiers are vulnerable to adversarial perturbations. Given a GNN (node/graph) classifier and a graph, an attacker could inject a few nodes~\cite{sun2020adversarial,ju2023let}, slightly modify the graph structure~\cite{zugner2018adversarial,dai2018adversarial,xu2019topology}, and/or perturb node features~\cite{zugner2018adversarial} such that the classifier makes wrong predictions for the perturbed graph (in graph classification) or target nodes (in node classification).    
For instance, \cite{sun2020adversarial} utilizes reinforcement learning techniques to design node injection attacks, while \cite{dai2018adversarial} designs graph perturbation attacks to both graph and node classification.  
Most attacks require the attacker fully/partially knows the GNN model (e.g., parameters, architecture), while \cite{mu2021a,wang2022bandits} relaxing this to  only have black-box access, i.e., only query the GNN model API. For example,  \cite{wang2022bandits} formulates this black-box attack to GNNs as an online optimization with bandit feedback. The original problem is NP-hard and they then propose an online attack based on (relaxed) bandit convex optimization which is proven to be {sublinear} to the query number. 

\vspace{+0.05in}
\noindent {\bf Defenses against attacks on GNNs:}
Many empirical defenses~\cite{wu2019adversarial,xu2019topology,zhu2019robust,entezari2020all,tao2021adversarial,zhao2021expressive} were proposed against the adversarial attacks on GNNs. However, these defenses do not have guaranteed performance under the worst-case setting, and were soon broken by adaptive/stronger  attacks~\cite{mujkanovic2022defenses}. Hence, we focus on certified defense in this work. 

Certified defenses~\cite{jin2020certified,jia2020certified,bojchevski2020efficient,wang2021certified,xia2024gnncert,lai2023nodeawarebismoothingcertifiedrobustness} design robust GNNs that guarantee the same predicted label on clean and perturbed graphs, when the perturbation size (e.g., number of perturbed edges, node features, or injected nodes) on the graph is bounded. 
\cite{bojchevski2020efficient} and~\cite{wang2021certified} generalized randomized smoothing (RS)~\cite{lecuyer2019certified,cohen2019certified,hong2022unicr}, the state-of-the-art certified defense against adversarial perturbations on the image domain, to the graph domain and certify any GNN against the edge perturbation. \cite{lai2023nodeawarebismoothingcertifiedrobustness} designs a node-aware Bi-RS certified defense against the node injection attack and achieve the state-of-the-art. 
Further, \cite{xia2024gnncert} extended randomized ablation~\cite{levine2020robustness}, a voting-based defense for image models, to build provably robust graph classifier against the node feature perturbation, edge perturbation, and combined edge and feature perturbations.   

However, all existing certified defenses face several  limitations. First, except \cite{xia2024gnncert} against edge and node feature perturbation, all can only 
certify 
one type of perturbation, e.g., edge perturbation. 
Second, they are only applied 
for a particular task such as node classification or graph classification, but not both. Adapting these defenses for both tasks would yield unsatisfactory guarantees as shown in our results in Section~\ref{sec:eval}. 
Third, their robustness guarantees are not 100\% (except \cite{xia2024gnncert}), implying  the guarantee could be inaccurate with certain probability. Our {\name} addresses all these limitations.

\vspace{+0.05in}
\noindent {\bf Voting-based certified defenses:} 
Voting is a versatile ensemble method in machine learning (ML)~\cite{dietterich2000ensemble} primarily for classification, and each method defines the voter for its own purpose. 
Recently, voting has been also used to robustify ML models against adversarial attacks, including adversarial image perturbation~\cite{levine2020deep},  graph perturbation~\cite{xia2024gnncert,yang2024distributed,li2025provably}, image patch perturbation~\cite{levine2020randomized,xiang2021patchguard}, text perturbation~\cite{pei2023textguard,zhang2024text}, and data poisoning attacks~\cite{jia2021intrinsic,jia2022certified}.
The key steps of this type of defense are: divide an input data (e.g., an image, a graph, or a sentence) into a set of sub-data, build a voting classifier to predict all sub-data (each prediction is a vote), and derive the robustness guarantee for the voting classifier. 
The essential requirement is to ensure only a bounded number of predictions are changed with a bounded adversarial perturbation. 
The key difference among these defenses is they create problem-dependent sub-data and voters for the majority voting.

\section{Conclusion}
\label{sec:conclusion}

We study the robustness of GNNs against adversarial attacks. Particularly, we develop {\name}, the first certified defense for GNNs against arbitrary perturbations (on nodes, edges, and node features) with deterministic guarantees. 
{\name} designs novel graph division strategies and  leverages the message-passing mechanism in GNNs for deriving the robustness guarantee. 
The universality of  {\name} makes it encompass existing certified defenses as special cases.  
Evaluation results validate  {\name}'s effectiveness and efficiency against arbitrary perturbations on GNNs and superiority over the state-of-the-art certified defenses. 

\vspace{+0.05in}
\noindent {\bf Acknowledgement:} We sincerely thank all the anonymous reviewers and our shepherd for their valuable feedback and constructive comments. We also extend our gratitude to Kexin Pei and Yuede Ji for providing the real-world code vulnerability dataset and conducting the evaluations.
This work is partially supported by the National Science Foundation (NSF) under grant Nos. ECCS-2216926, CNS-2241713, CNS-2331302, CNS-2339686, and the Cisco Research Award. 
Any opinions, findings and conclusions or recommendations expressed in this material are those of the author(s) and do not necessarily reflect the views of the funding agencies.

\section{Ethics Considerations}

This research strictly adheres to ethical guidelines and responsibilities, ensuring compliance with established standards.

\vspace{+0.1in}
\noindent {\bf 1) Identification of Stakeholders}
\vspace{+0.05in}

\noindent {\bf Researchers:} Those advancing the field by building upon this work, focusing on both defending GNNs against adversarial attacks and exploring trustworthy GNNs (e.g., against training-time poisoning attacks and both training- and test-time backdoor attacks) in general.

\noindent {\bf Developers and Practitioners of AI Systems:} Individuals and organizations implementing or applying provably robust GNNs in real-world graph-related applications such as fraud detection in social networks, web, online auction networks, intrusion detection, and software vulnerability detection.

\noindent {\bf End-users:} People interacting with GNN-powered systems, including users of social networks, recommender systems, or financial platforms.

\noindent {\bf Society at Large:} Individuals impacted by ethical considerations and risks associated with deploying AI technologies, especially in domains leveraging GNNs (e.g., social networks, healthcare, finance).

\vspace{+0.1in}
\noindent {\bf 2) Potential Risks for Stakeholders and Mitigations}
\vspace{+0.05in}

\noindent {\bf For Researchers.}
\emph{Potential Risk:} Adversaries may develop novel attacks that surpass the guaranteed bounds of the considered threat model (e.g., perturbations beyond the certified perturbation size). \emph{Mitigation:} With larger perturbations on graph data, those perturbed graphs might have significant differences with normal graphs. Therefore, researchers can leverage detection methods, such as structural-similarity based methods, to identify the perturbed graphs. Researchers can also collaborate with ethics experts to ensure that the research aligns with best practices for responsible AI development.

\noindent {\bf For Developers and Practitioners.}
\emph{Potential Risk:}  The proposed defense method may not generalize well to other graph learning applications that are different from the considered applications.
\emph{Mitigation:} Comprehensive empirical validation across diverse graph datasets and real-world scenarios ensures robustness. Clear communication of limitations will help developers manage risks effectively.

\noindent {\bf For End-users.}
\emph{Potential Risk:}  Robust GNN mechanisms might inadvertently compromise data privacy or produce biased outcomes.
\emph{Mitigation:} Incorporating privacy-preserving (such as differential privacy and cryptographic methods) and fair training techniques enhances data security and fairness.

\noindent {\bf For Society.}
\emph{Potential Risk:} Misuse of robust GNNs in critical domains (e.g., healthcare, finance) could exacerbate social inequities, privacy breaches, or manipulation of vulnerable populations.
\emph{Mitigation:} Balancing AI security advancements with societal considerations (including fairness, transparency, and accountability) mitigates potential harm. Ethical implications for vulnerable populations will be addressed, prioritizing societal well-being.

\vspace{+0.1in}
\noindent {\bf 3) Considerations Motivating Ethical-Related Decisions}
\vspace{+0.05in}

\noindent {\bf Research Goal:} The primary objective is to enhance the robustness of GNNs against adversarial attacks while minimizing potential harm to stakeholders. Defense strategies are designed to be both practical and ethical.

\noindent {\bf Benefits and Harms:} \emph{Benefits:} Improved robustness 
of GNN systems reduces risks of adversarial manipulation and protecting users. \emph{Harms:} Potential empowerment of malicious actors and overestimating the effectiveness of defense methods.

\noindent {\bf Rights:} We are particularly concerned with privacy rights, as adversarial attacks can sometimes expose sensitive data or violate individuals' privacy. Our defense strategies aim to mitigate such risks, promoting the ethical use of GNNs while safeguarding individuals’ rights.

\vspace{+0.1in}
\noindent {\bf 4) Awareness of Ethical Perspectives}
\vspace{+0.05in}

\noindent We are aware that different members of the research community may hold differing views on the ethical implications of trustworthy AI. Some may prioritize transparency in revealing attack strategies to help build better defenses, while others may argue that such knowledge could be misused. In line with the principles of responsible AI research, we have opted to emphasize defense over offense, focusing on methods that mitigate risk without creating new avenues for harm.

\section{Open Science}

In compliance with the Open Science Policy, we have made our code, pretrained models, and data openly accessible at \url{https://github.com/JetRichardLee/AGNNCert}. Additionally, all artifacts have been published on the Zenodo platform \url{https://zenodo.org/records/14737141} to facilitate the reproduction of the research described in the paper. 

Through these efforts, we aim to contribute to the broader scientific community while upholding the highest standards of ethical conduct.

\bibliographystyle{plain}
\bibliography{refs}

\appendix
\section{Proof of Theorem~\ref{thm:suffcond}}
\label{app:suffcond}

We prove for node classification and it is identical for graph classification. 

Recall $y_a$ and $y_b$ are respectively the class with the most vote $c_{y_a}$ and with the second-most vote $c_{y_b}$ on predicting the target node $v$ in the subgraphs $\{G_i\}'s$. Hence, 
\begin{align}
&    c_{y_{a}}-\mathbb{I}(y_{a}>y_{b})\geq c_{y_{b}} \label{eqn:16} \\
 &   c_{y_{b}}-\mathbb{I}(y_{b}>y_{c})\geq c_{y_{c}}, \forall y_c \in \mathcal{Y}\setminus\{y_{a}\} \label{eqn:17}
\end{align}
where $\mathbb{I}$ is the indicator function, and we pick the class with a smaller index when there exist ties. 

Further, on the perturbed graph $G'$ after the attack, the vote $c_{y_{a}}'$ of the class $y_a$ and vote $c_{y_{c}}'$ of any other class $y_{c}\in \mathcal{Y}\setminus \{y_{a}\}$ satisfy the below relationship: 
\begin{equation}
\label{eqn:18}
c_{y_{a}}'\geq c_{y_{a}} - \sum_{i=1}^{T}\mathbb{I}(f(G_{i})_v\neq f(G'_{i})_v) 
\end{equation}
\begin{equation}
\label{eqn:19}
c_{y_{c}}'\leq c_{y_{c}} + \sum_{i=1}^{T}\mathbb{I}(f(G_{i})_v\neq f(G'_{i})_v)
\end{equation}

To ensure the returned label by the voting node classifier $\bar{f}$ does not change, i.e., $\bar{f}(G)_v = \bar{f}(G')_v, \forall G'$, we must have:
\begin{equation}
\label{eqn:20}
c_{y_{a}}'\geq c_{y_c}'+\mathbb{I}(y_{a}>y_{c}),\forall y_{c}\in \mathcal{Y}\setminus \{y_{a}\}
\end{equation}

Combining with Eqns \ref{eqn:18} and \ref{eqn:19}, the sufficient condition for Eqn~\ref{eqn:20} to satisfy is to ensure: 
\begin{equation}
c_{y_{a}} - \sum_{i=1}^{T}\mathbb{I}(f(G_{i})_v\neq f(G'_{i})_v)  \geq 
c_{y_{c}} + \sum_{i=1}^{T}\mathbb{I}(f(G_{i})_v\neq f(G'_{i})_v)
\end{equation}
Or, 
\begin{equation}
c_{y_{a}}\geq c_{y_{c}} + 2\sum_{i=1}^{T}\mathbb{I}(f(G_{i})_v\neq f(G'_{i})_v)+\mathbb{I}(y_{a}>y_{c}).
\end{equation}

Plugging Eqn~\ref{eqn:17}, we further have this condition:
\begin{equation}
\label{eqn:23}
c_{y_{a}} \geq c_{y_{b}}-\mathbb{I}(y_{b}>y_{c})+ 2\sum_{i=1}^{T}\mathbb{I}(f(G_{i})_v\neq f(G'_{i})_v)+\mathbb{I}(y_{a}>y_{c})
\end{equation}
We observe that:
\begin{equation}
\label{eqn:24}
\mathbb{I}(y_{a}>y_{b})\geq \mathbb{I}(y_{a}>y_{c})-\mathbb{I}(y_{b}>y_{c})
,\forall y_{c}\in \mathcal{Y}\setminus \{y_{a}\}\end{equation}
Combining Eqn~\ref{eqn:24} with Eqn~\ref{eqn:23}, we have:
\begin{equation}
c_{y_{a}} \geq c_{y_{b}}+2\sum_{i=1}^{T}\mathbb{I}(f(G_{i})_v\neq f(G'_{i})_v)+\mathbb{I}(y_{a}>y_{b})
\end{equation}
Let $M = {\lfloor c_{y_a}-c_{y_b}-\mathbb{I}(y_{a}>y_{b})\rfloor} / {2}$, hence 
$\sum\nolimits_{i=1}^{T}\mathbb{I}(f(G_{i})_v\neq f(G'_{i})_v) \leq M$.

\begin{figure*}[t]
    \centering
    \captionsetup[subfloat]{labelsep=none, format=plain, labelformat=empty}

    \subfloat[{\small (a) Edge-Centric Graph Division for Node Classification against edge deletion, node deletion and node feature manipulation}]{
    \includegraphics[width=\linewidth]{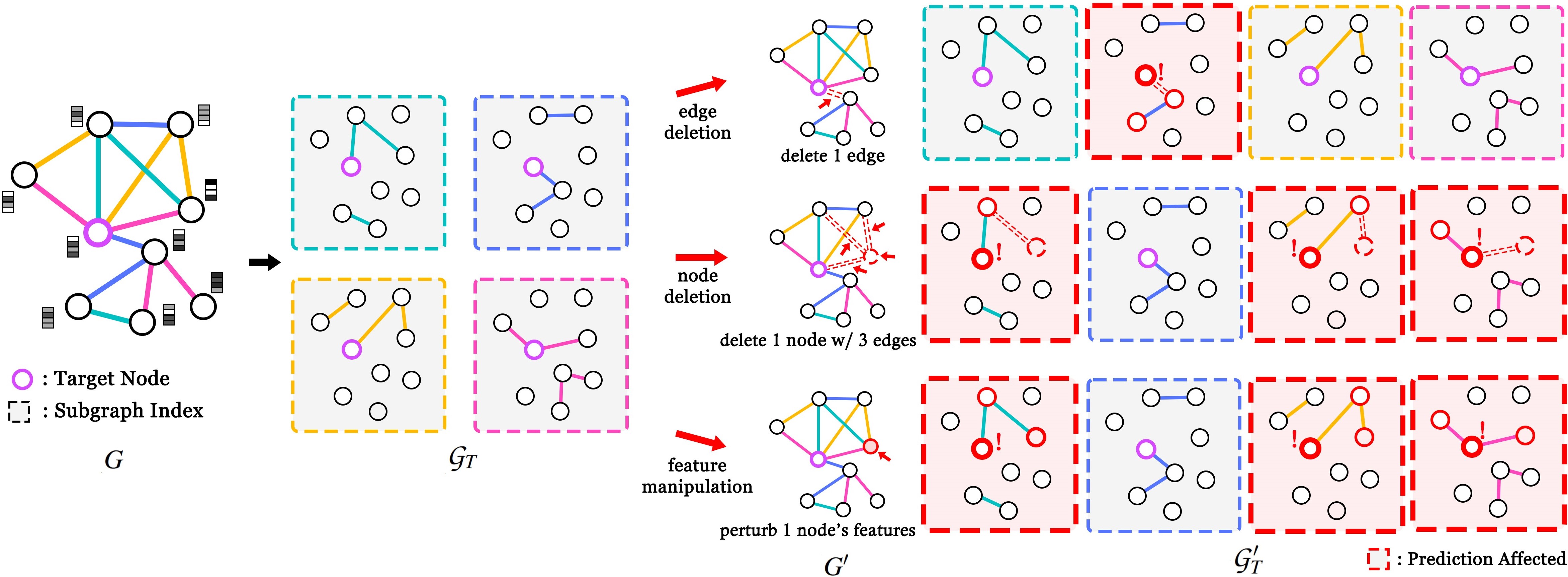}}
    \hspace{+10mm}
   
    \subfloat[{\small (b) Node-Centric Graph Division for Node Classification against edge deletion, node deletion and node feature manipulation}]{
    \includegraphics[width=\linewidth]{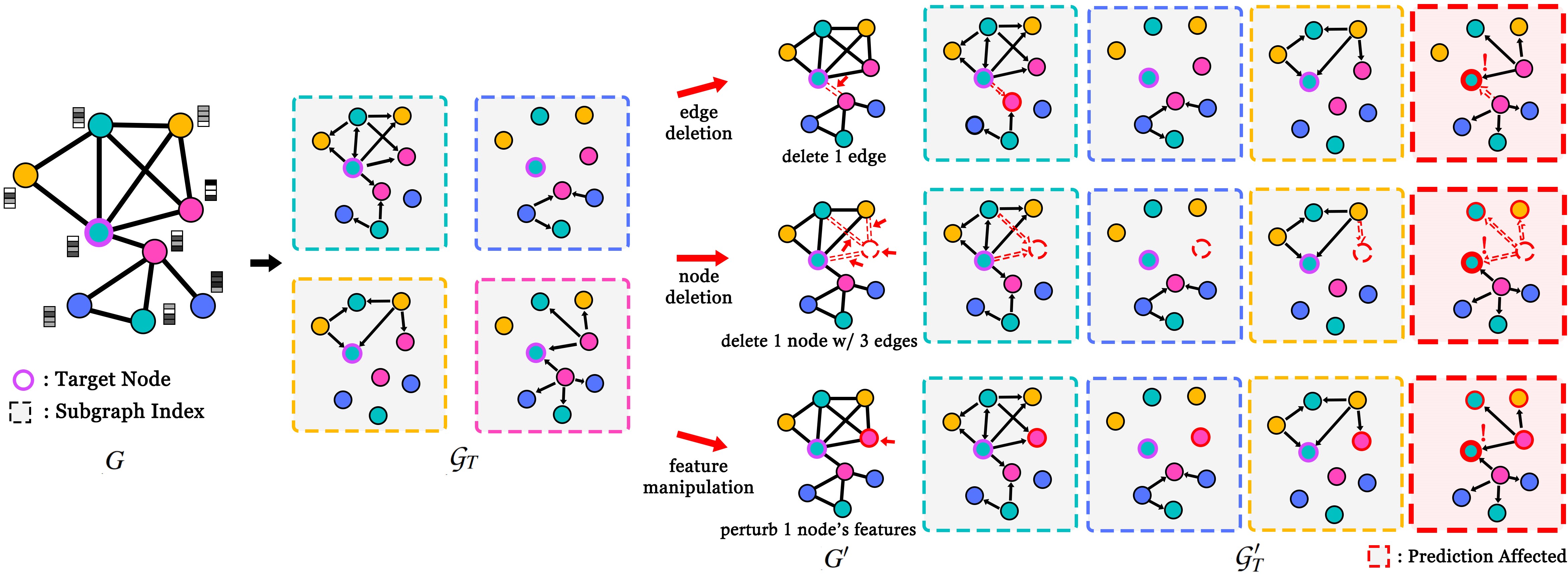}}
    \hspace{+10mm}
    \caption{Illustration of our edge-centric and node-centric graph division strategies for node classification against edge deletion, node deletion, and node feature manipulation. 
    {\bf To summarize:} 1 deleted edge  affects at most 1 subgraph prediction in both graph division strategies. In contrast, 1 deleted node with, e.g., $3$ incident edges can affect at most 3 subgraph predictions with edge-centric graph division, but at most 1 subgraph prediction with node-centric graph division.
    }
    \label{fig:subgraphs_NC_more}
   \vspace{-2mm}
\end{figure*}

\begin{figure*}[t]
    \centering
    \captionsetup[subfloat]{labelsep=none, format=plain, labelformat=empty}

    \subfloat[{\small (a) Edge-Centric Graph Division for Graph Classification against edge manipulation, node manipulation and feature manipulation}]{
    \includegraphics[width=0.9\linewidth]{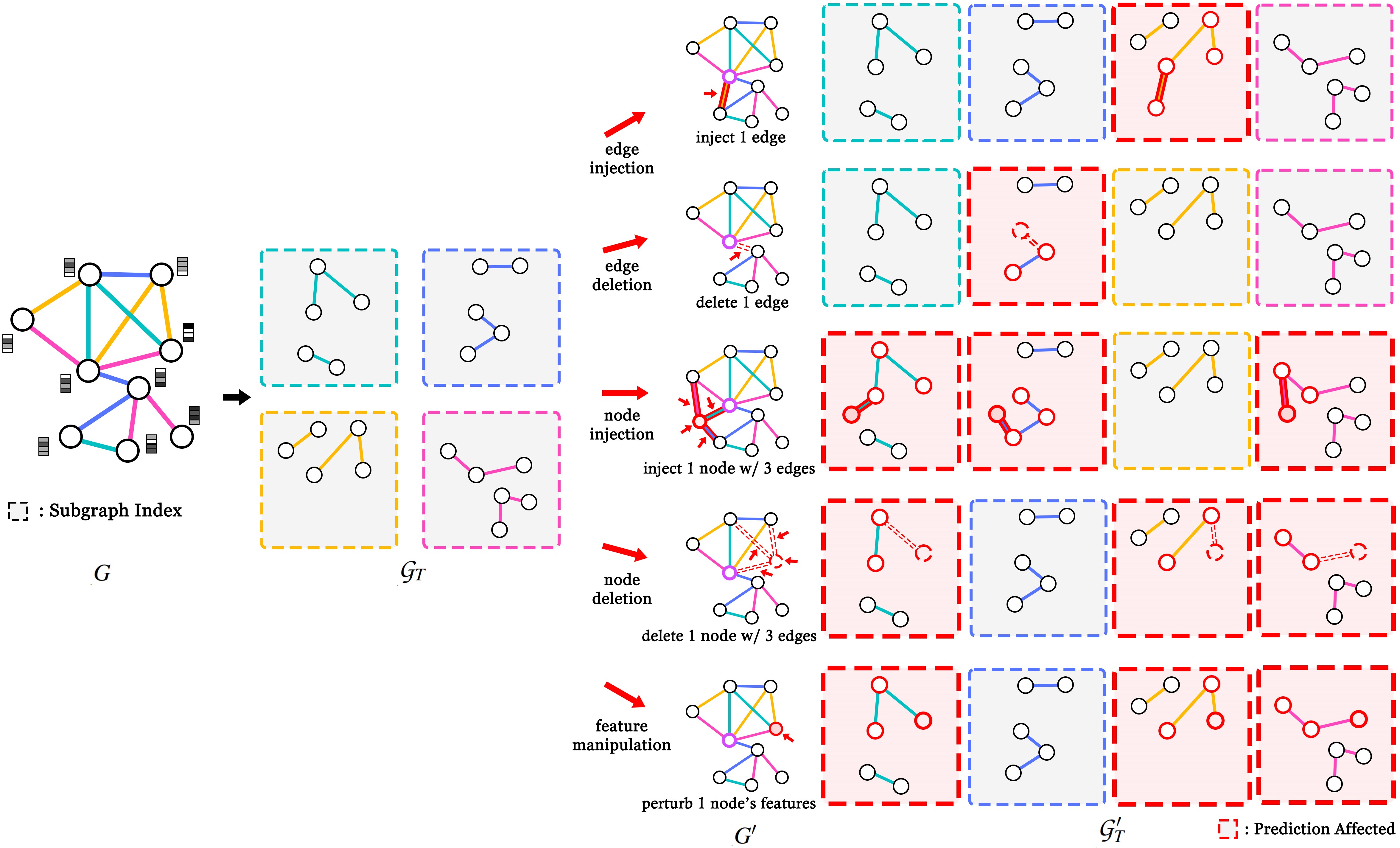}}
    \hspace{+20mm}
    
    \subfloat[{\small (b) Node-Centric Graph Division for Graph Classification against edge manipulation, node manipulation and feature manipulation}]{
    \includegraphics[width=0.9\linewidth]{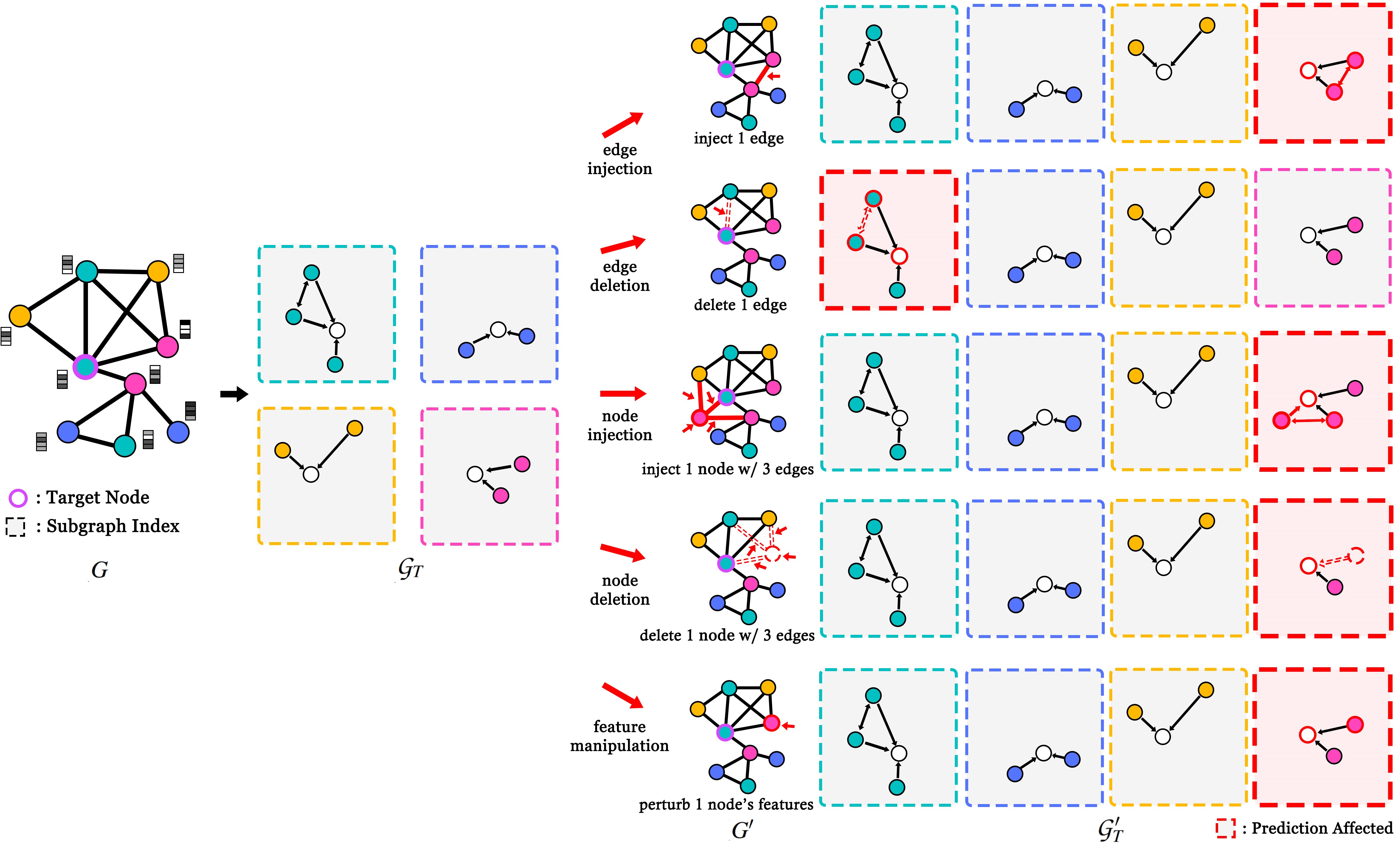}}
    \vspace{-2mm}
    \caption{Illustration of our edge-centric and node-centric graph division strategies for graph classification. The conclusion are similar to those for node classification.}
    \label{fig:subgraphs_GC}
    \vspace{-4mm}
\end{figure*}

\begin{figure*}[!t]
\centering
\subfloat[Cora-ML]{\includegraphics[width=0.25\textwidth]{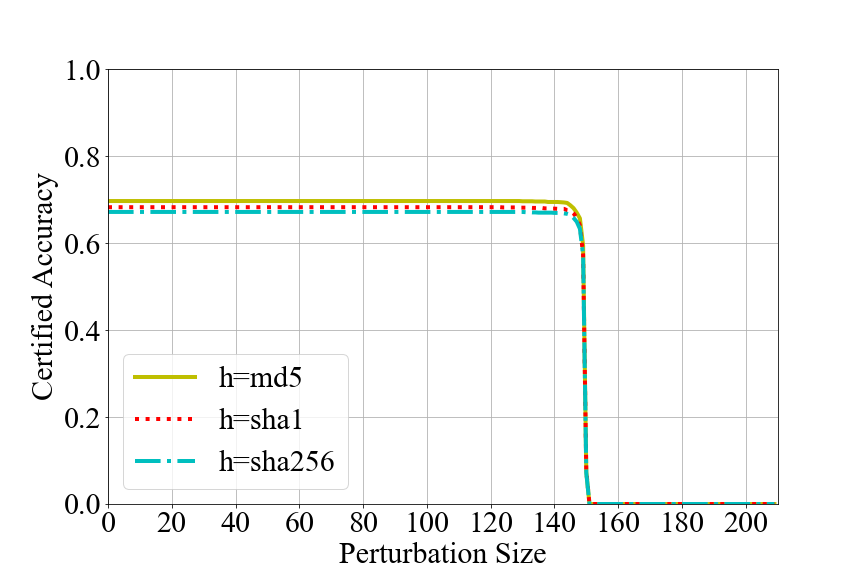}}\hfill
\subfloat[Citeseer]{\includegraphics[width=0.25\textwidth]{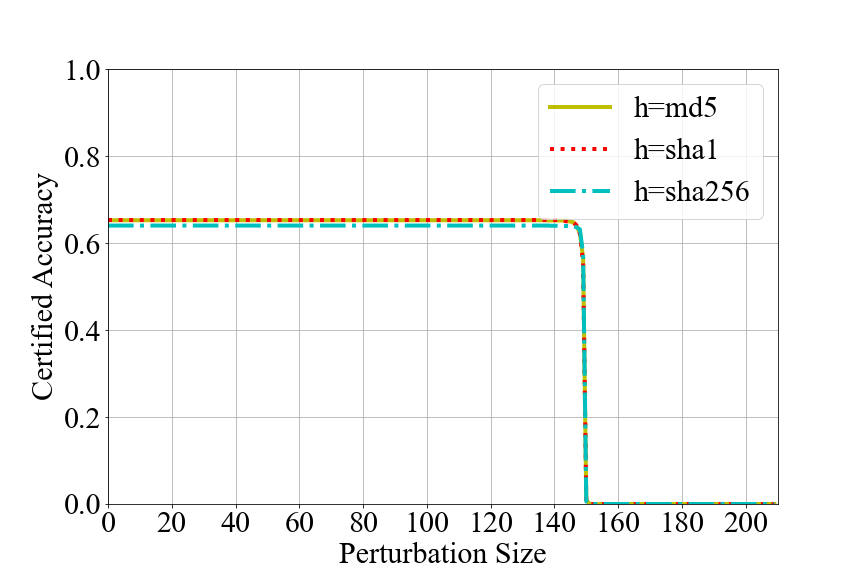}}\hfill
\subfloat[Pubmed]{\includegraphics[width=0.25\textwidth]{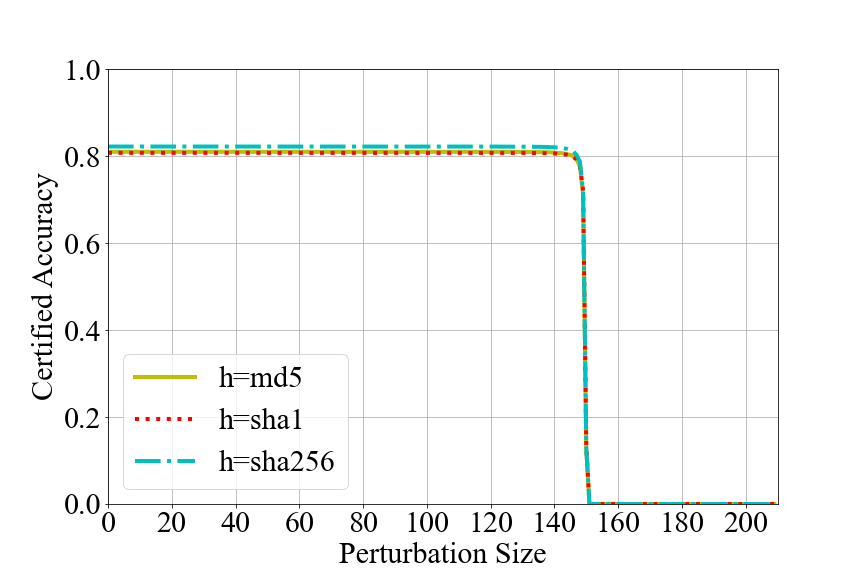}}\hfill
\subfloat[Amazon-C]{\includegraphics[width=0.25\textwidth]{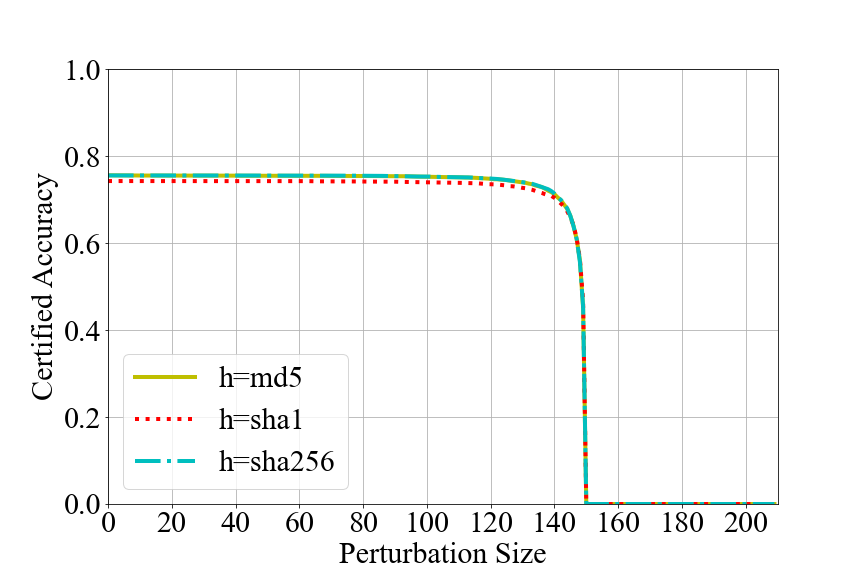}}\\
\vspace{-2mm}
\caption{Certified node accuracy of our {\nameE} w.r.t. the hash function $h$.}
\label{fig:node-EC-hash}
\vspace{-6mm}
\end{figure*}

\begin{figure*}[!t]
\centering
\subfloat[Cora-ML]{\includegraphics[width=0.25\textwidth]{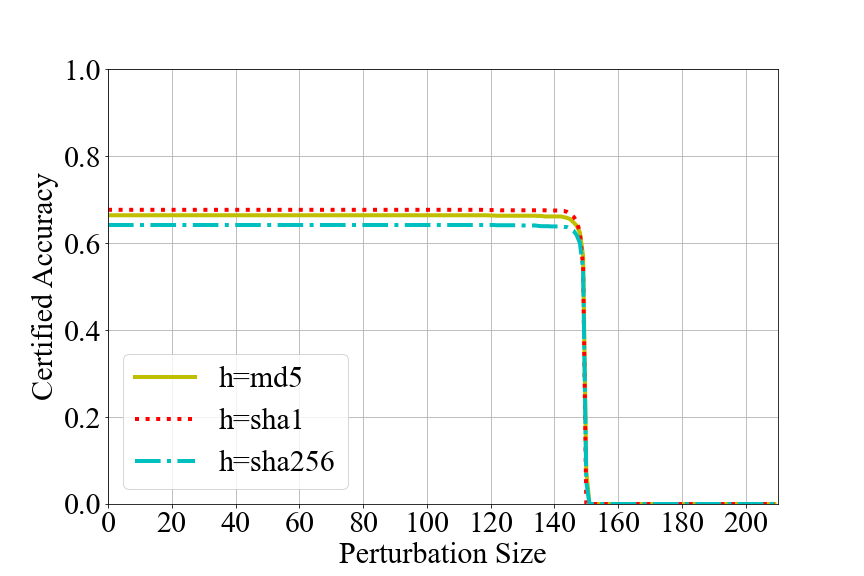}}\hfill
\subfloat[Citeseer]{\includegraphics[width=0.25\textwidth]{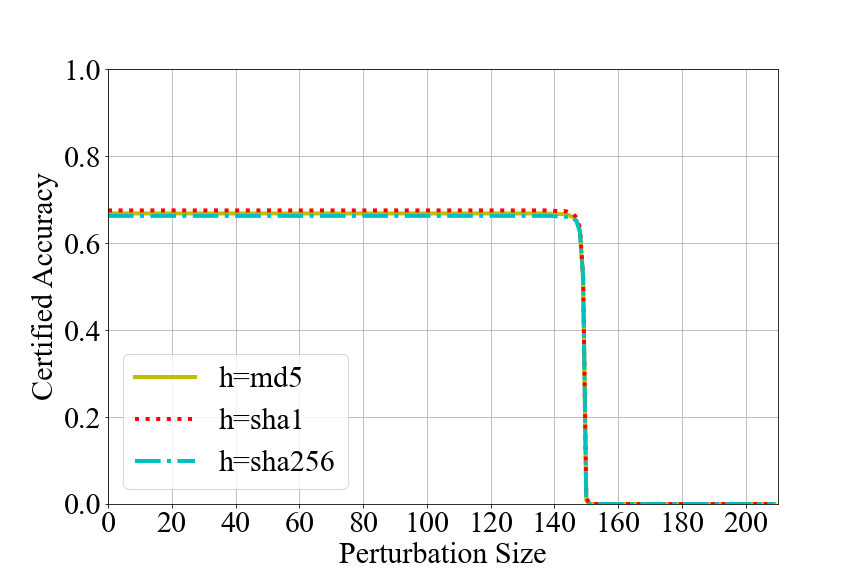}}\hfill
\subfloat[Pubmed]{\includegraphics[width=0.25\textwidth]{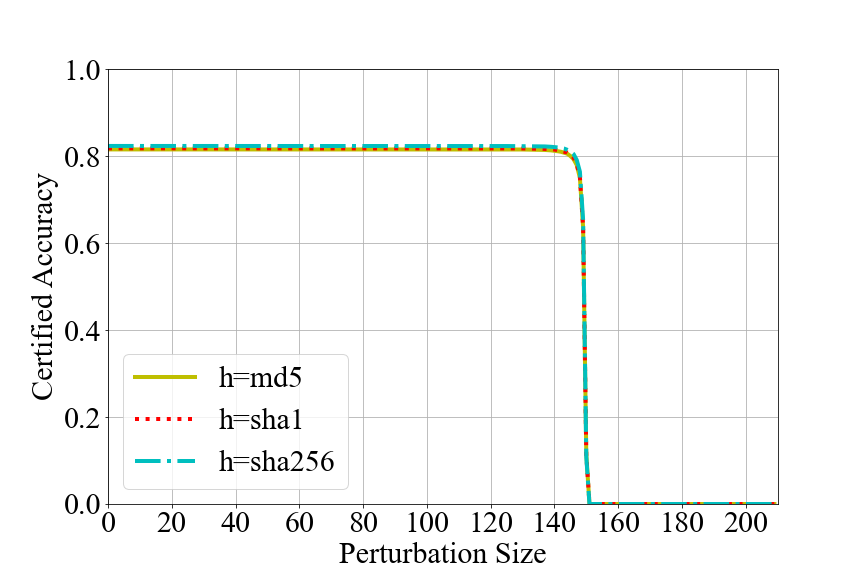}}\hfill
\subfloat[Amazon-C]{\includegraphics[width=0.25\textwidth]{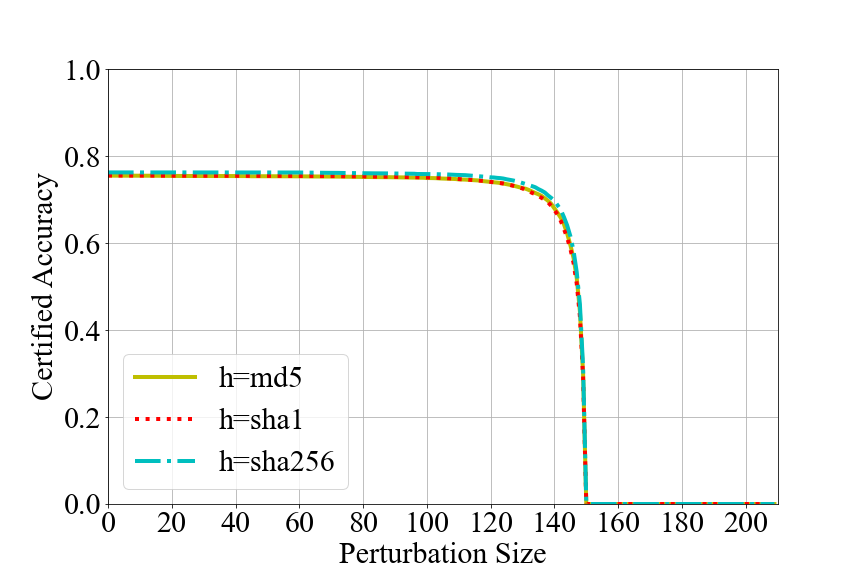}}\\
\vspace{-2mm}
\caption{Certified node accuracy of our {\nameN} w.r.t. the hash function $h$.}
\label{fig:node-NC-hash}
\vspace{-6mm}
\end{figure*}

\begin{figure*}[!t]
\centering
\subfloat[AIDS]{\includegraphics[width=0.25\textwidth]{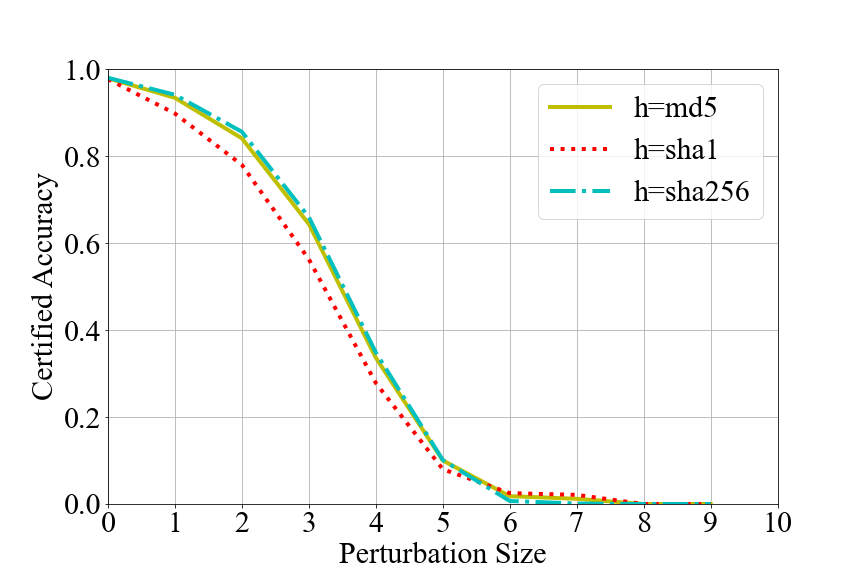}}\hfill
\subfloat[MUTAG]{\includegraphics[width=0.25\textwidth]{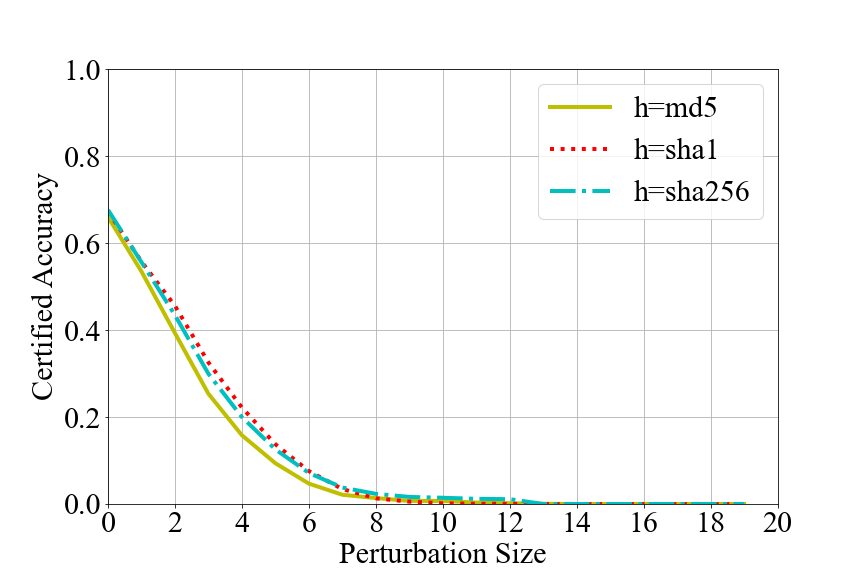}}\hfill
\subfloat[PROTEINS]{\includegraphics[width=0.25\textwidth]{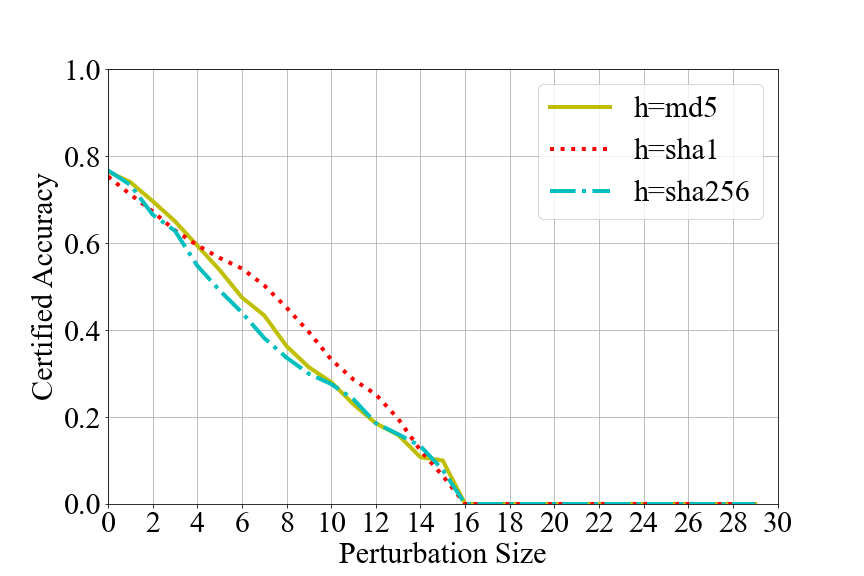}}\hfill
\subfloat[DD]{\includegraphics[width=0.25\textwidth]{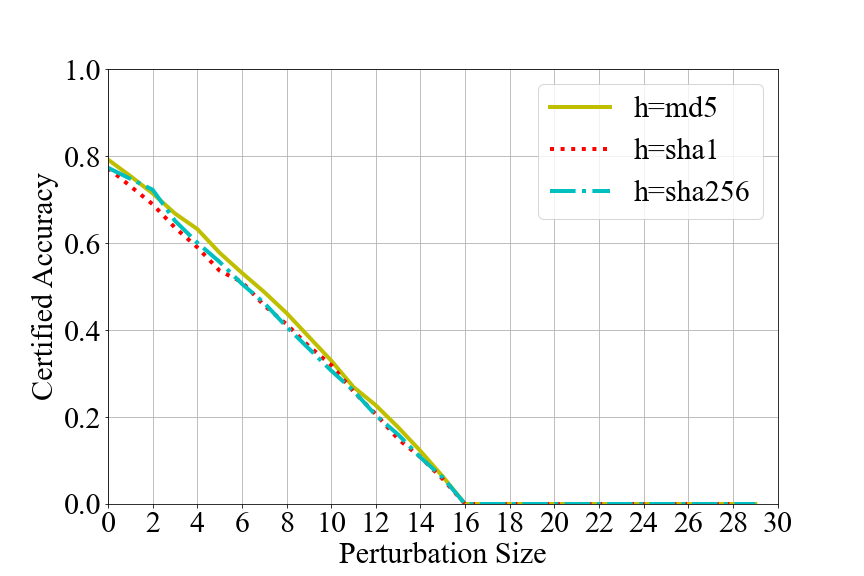}}\\
\vspace{-2mm}
\caption{Certified graph accuracy of our {\nameE} w.r.t. the hash function $h$.}
\label{fig:graph-EC-hash}
\vspace{-6mm}
\end{figure*}

\begin{figure*}[!t]
\centering
\subfloat[AIDS]{\includegraphics[width=0.25\textwidth]{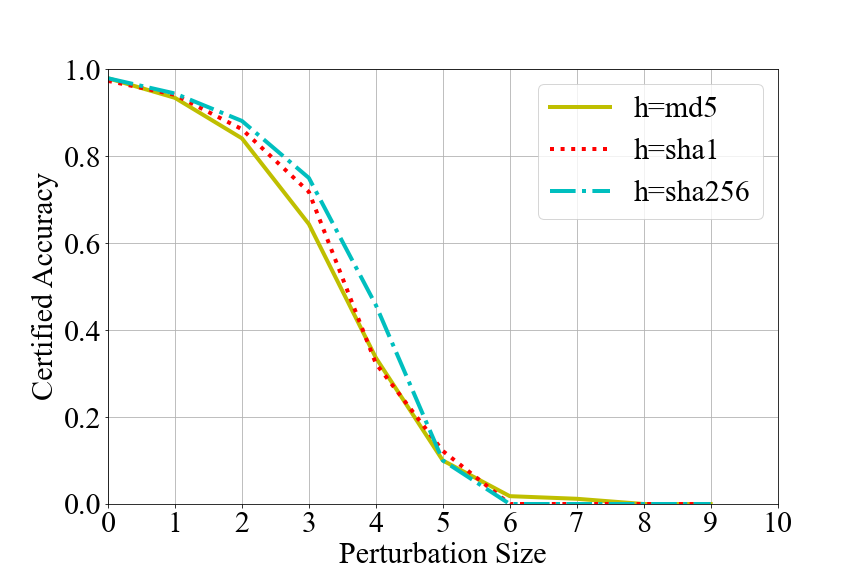}}\hfill
\subfloat[MUTAG]{\includegraphics[width=0.25\textwidth]{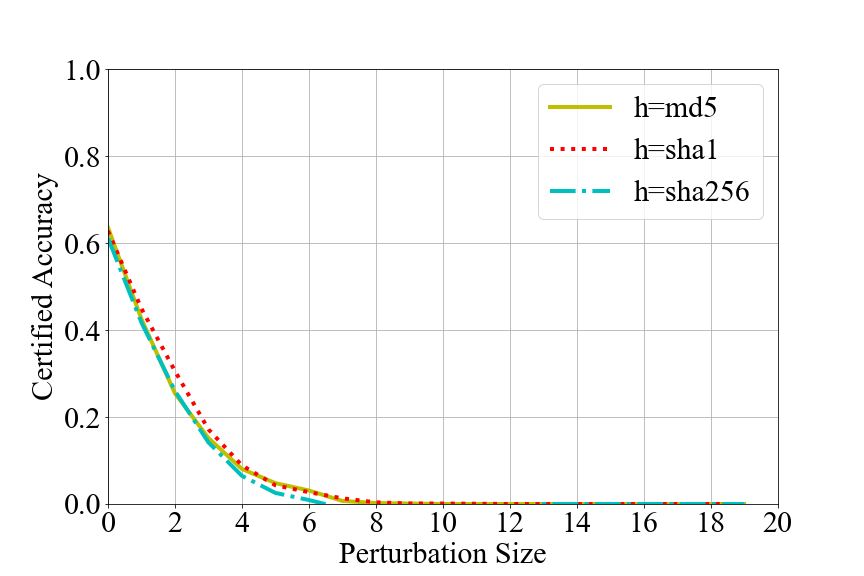}}\hfill
\subfloat[PROTEINS]{\includegraphics[width=0.25\textwidth]{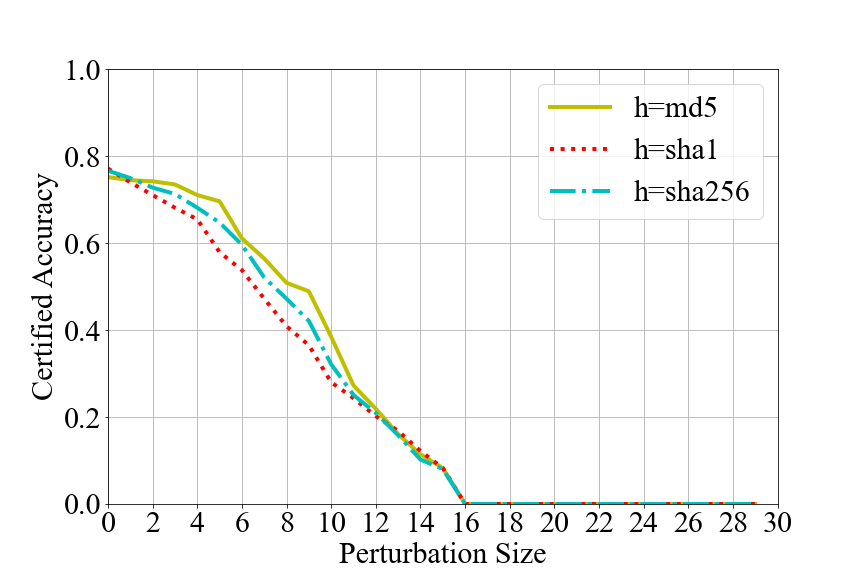}}\hfill
\subfloat[DD]{\includegraphics[width=0.25\textwidth]{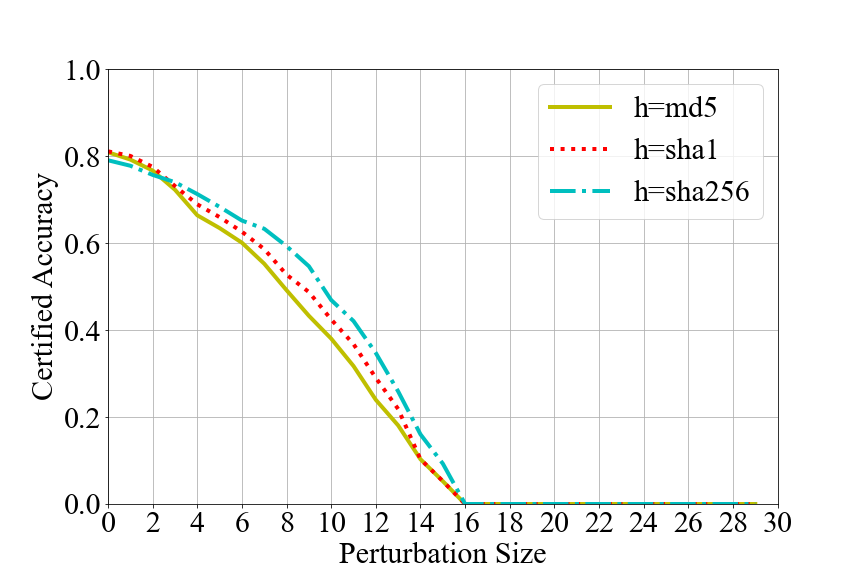}}\\
\vspace{-2mm}
\caption{Certified graph accuracy of our {\nameN} w.r.t. the hash function $h$.}
\label{fig:graph-NC-hash}
\vspace{-4mm}
\end{figure*}

\section{More Experimental Results}

\vspace{+0.05in}
\noindent 
Figure~\ref{fig:node-EC-hash}-Figure~\ref{fig:graph-NC-hash} show the certified node/edge accuracy of {\nameE} and  {\nameN} with different hash functions. 
We observe that our certified accuracy and certified perturbation size are almost the same in all cases. This reveals  {\name} is insensitive to hash functions, and \cite{xia2024gnncert} draws a similar conclusion. 

\vspace{+0.05in}
\noindent  Figures~\ref{fig:node-EC-T-GSAGE}-\ref{fig:graph-NC-T-GSAGE} show the results where we use GSAGE~\cite{hamilton2017inductive} as the base GNN classifier\footnote{During certification, we use all neighbors of a node,  instead of using randomly sampled nodes in the raw GSAGE, to maintain the divided subgraphs be deterministic.}, and Figures~\ref{fig:node-EC-T-GAT}-\ref{fig:graph-NC-T-GAT} the results where we use GAT~\cite{velivckovic2018graph} as the base GNN classifier. We can see they have similar certified node/graph classification at perturbation size as the model trained using GCN as the base classifier. 

\vspace{+0.05in}
\noindent Figure~\ref{fig:node-EC-w-wo}-Figure~\ref{fig:graph-EC-w-wo} show the certified node/graph classification with or without subgraphs for training the GNN classifier. We observe the certified accuracy can be much higher when the subgraphs are used for training. This is because certification also uses subgraphs.

\begin{figure*}[!t]
\centering
\subfloat[Cora-ML]{\includegraphics[width=0.25\textwidth]{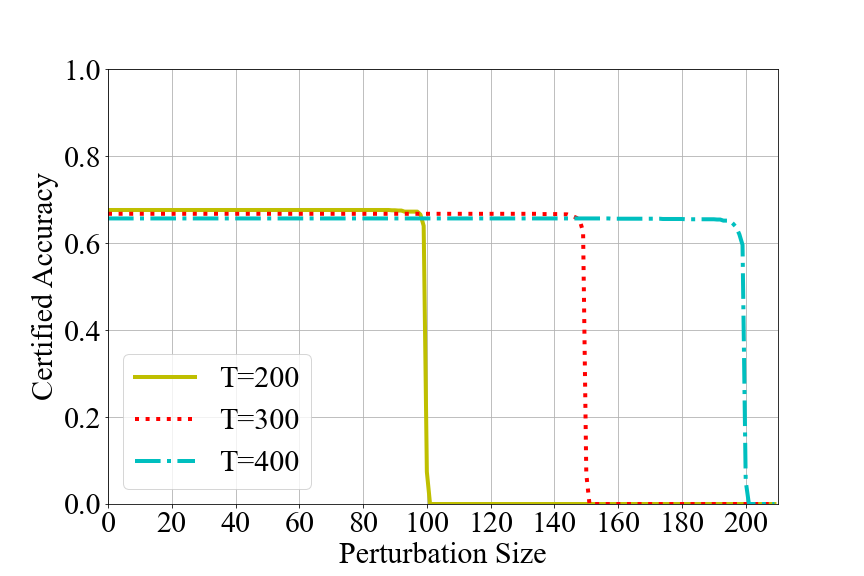}}\hfill
\subfloat[Citeseer]{\includegraphics[width=0.25\textwidth]{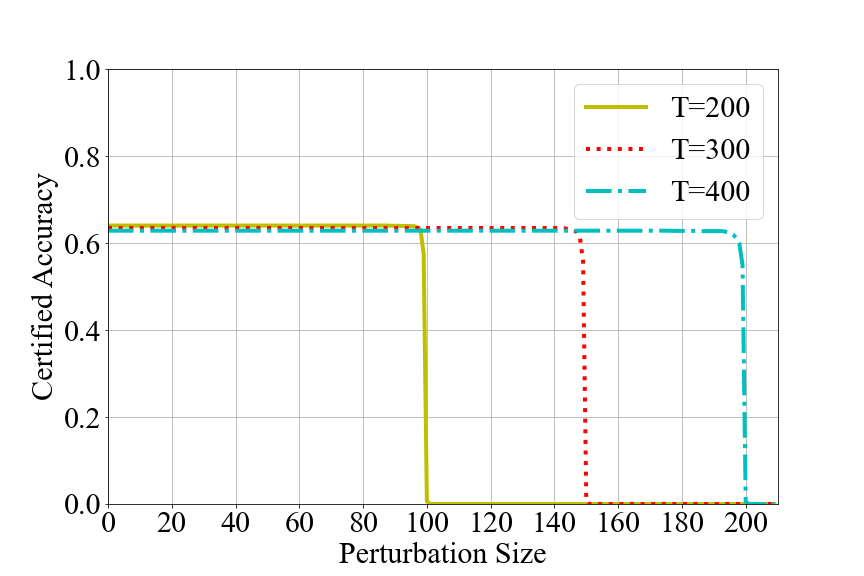}}\hfill
\subfloat[Pubmed]{\includegraphics[width=0.25\textwidth]{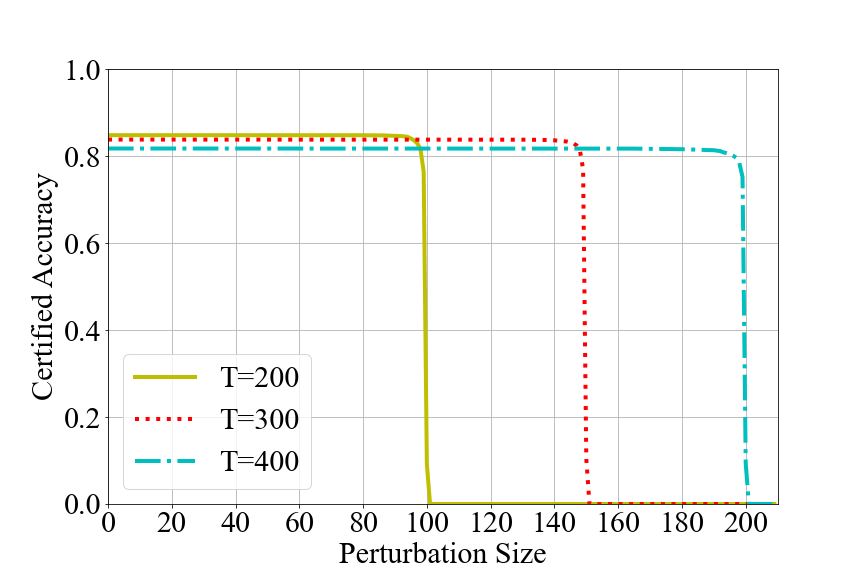}}\hfill
\subfloat[Amazon-C]{\includegraphics[width=0.25\textwidth]{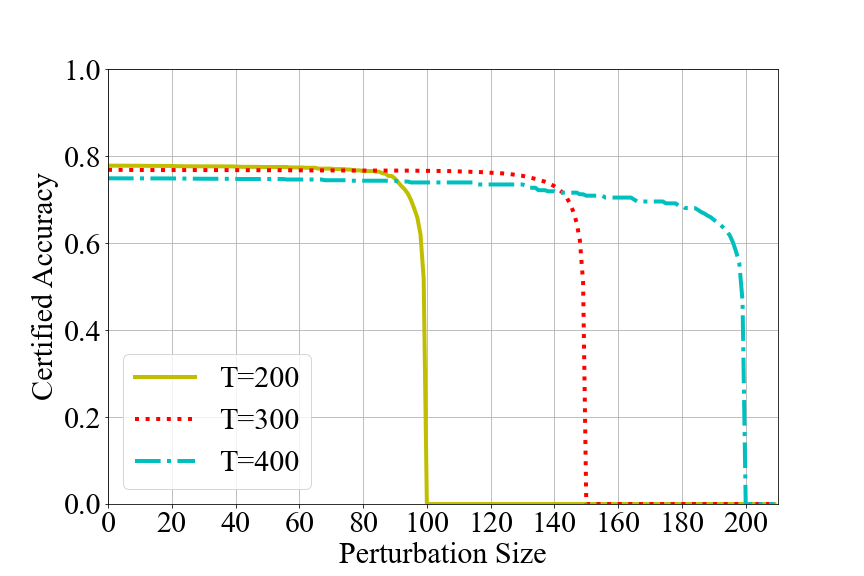}}\\
\caption{Certified node accuracy of our {\nameE} with GSAGE w.r.t. the number of subgraphs $T$.}
\label{fig:node-EC-T-GSAGE}
\vspace{-6mm}
\end{figure*}

\begin{figure*}[!t]
\centering
\subfloat[Cora-ML]{\includegraphics[width=0.25\textwidth]{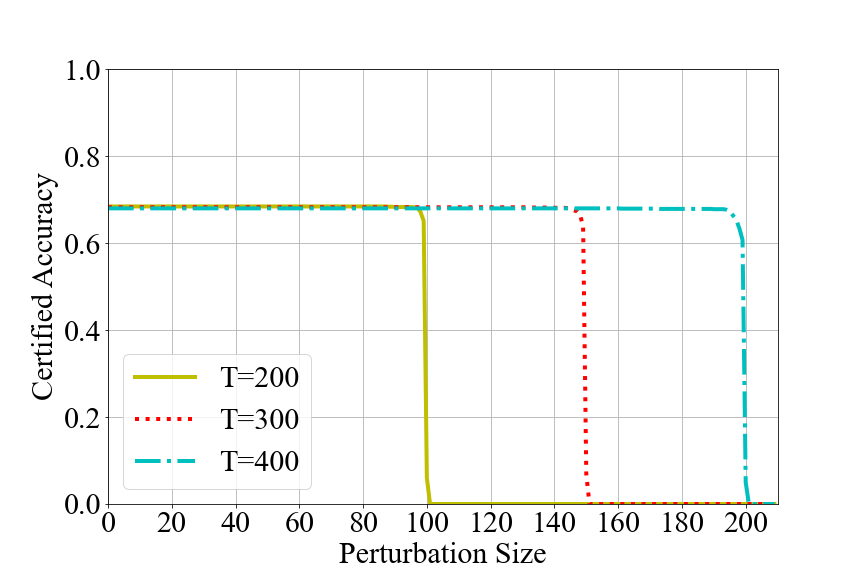}}\hfill
\subfloat[Citeseer]{\includegraphics[width=0.25\textwidth]{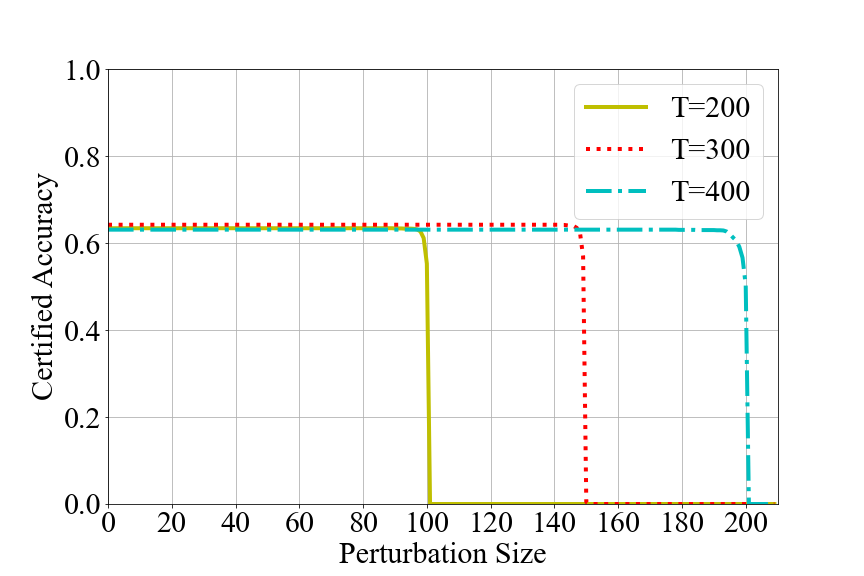}}\hfill
\subfloat[Pubmed]{\includegraphics[width=0.25\textwidth]{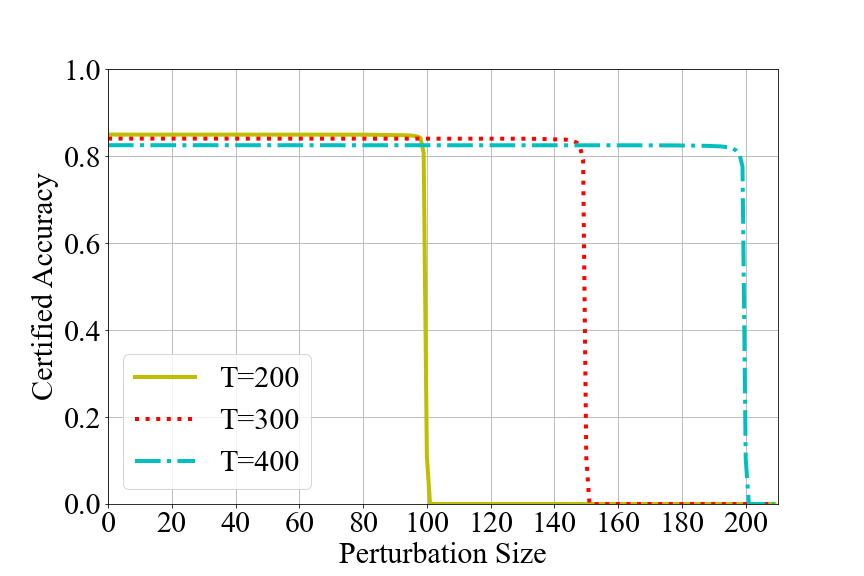}}\hfill
\subfloat[Amazon-C]{\includegraphics[width=0.25\textwidth]{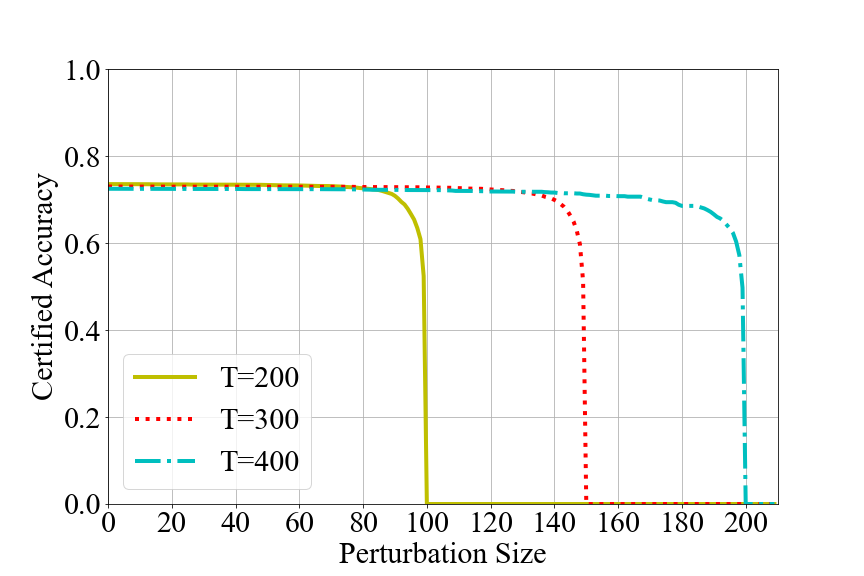}}\\
\caption{Certified node accuracy of our {\nameN} with GSAGE w.r.t. the number of subgraphs $T$.}
\label{fig:node-NC-T-GSAGE}
\vspace{-4mm}
\end{figure*}

\begin{figure*}[!t]
\centering
\subfloat[AIDS]{\includegraphics[width=0.25\textwidth]{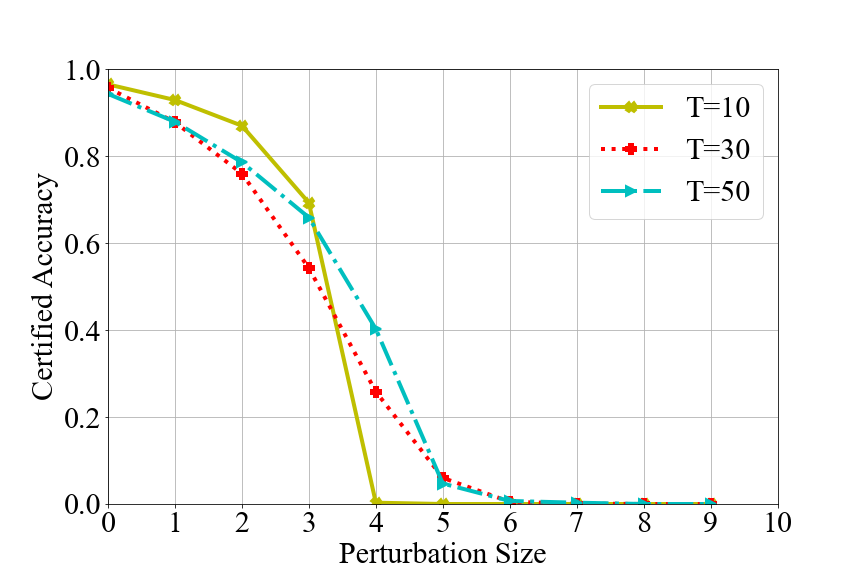}}\hfill
\subfloat[MUTAG]{\includegraphics[width=0.25\textwidth]{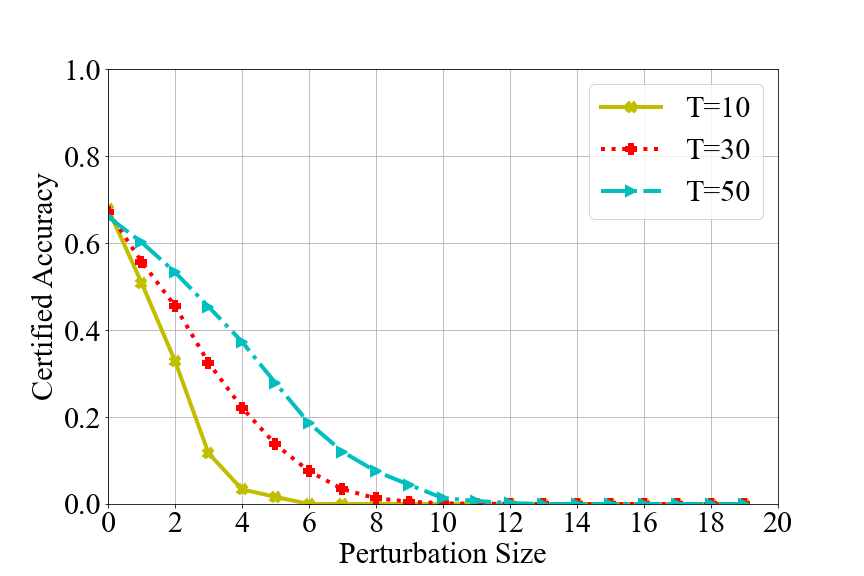}}\hfill
\subfloat[PROTEINS]{\includegraphics[width=0.25\textwidth]{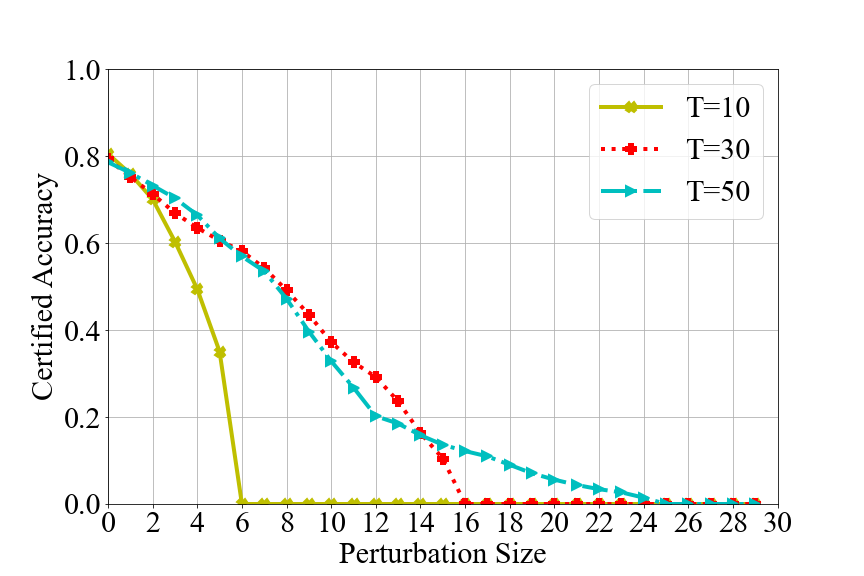}}\hfill
\subfloat[DD]{\includegraphics[width=0.25\textwidth]{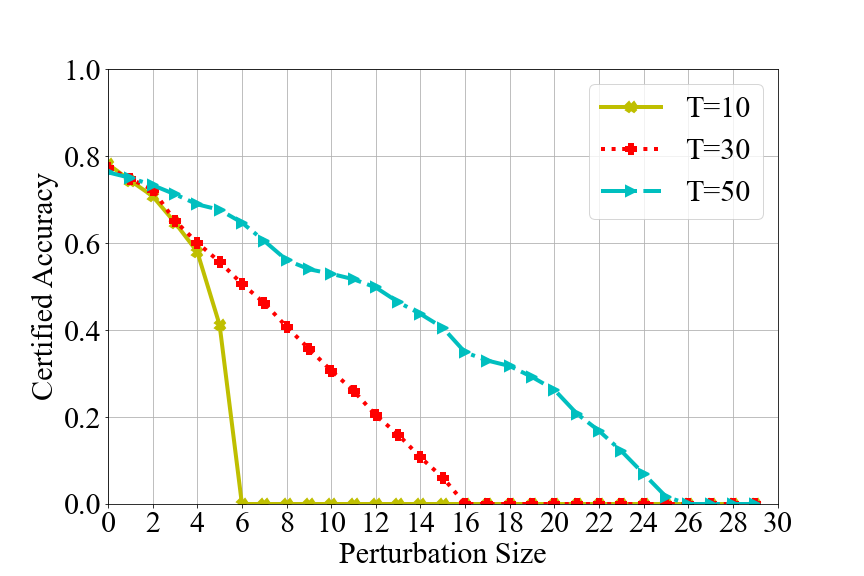}}\\
\caption{Certified graph accuracy of our {\nameE} with GSAGE w.r.t. the number of subgraphs $T$.}
\label{fig:graph-EC-T-GSAGE}
\vspace{-6mm}
\end{figure*}

\begin{figure*}[!t]
\centering
\subfloat[AIDS]{\includegraphics[width=0.25\textwidth]{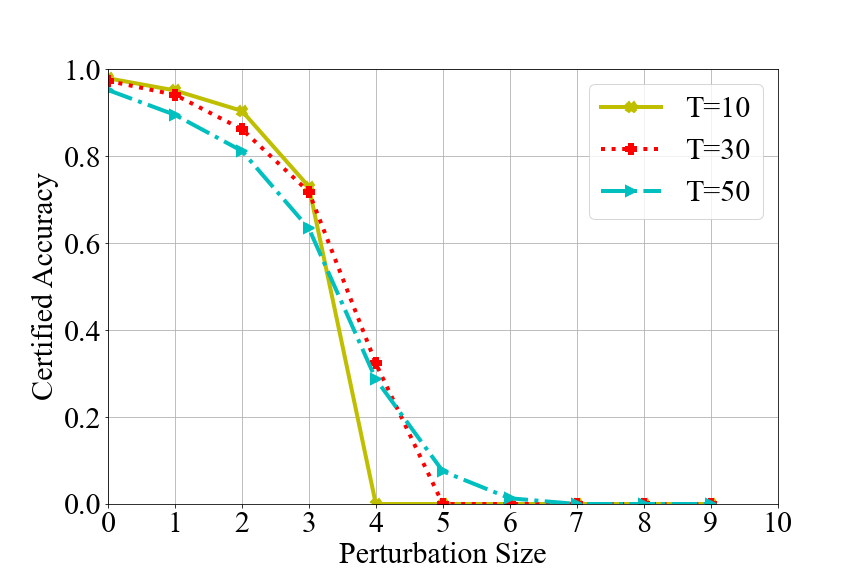}}\hfill
\subfloat[MUTAG]{\includegraphics[width=0.25\textwidth]{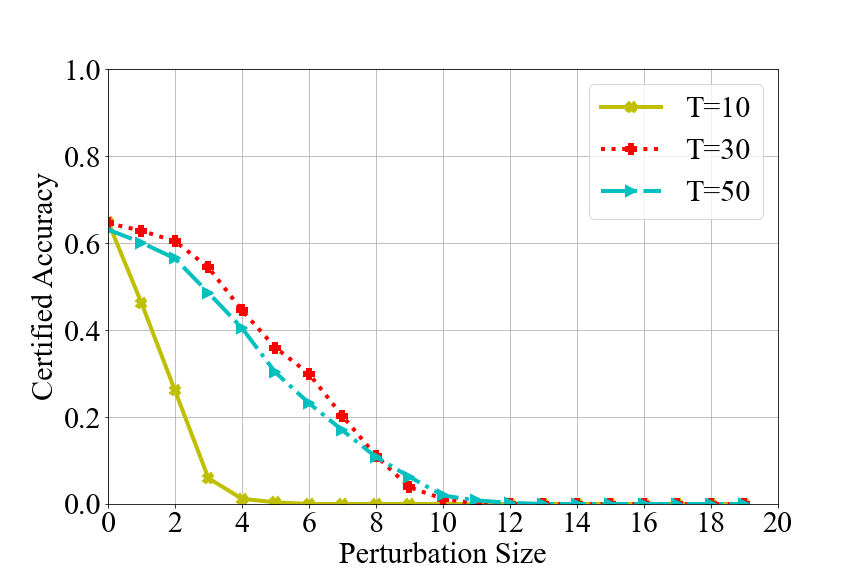}}\hfill
\subfloat[PROTEINS]{\includegraphics[width=0.25\textwidth]{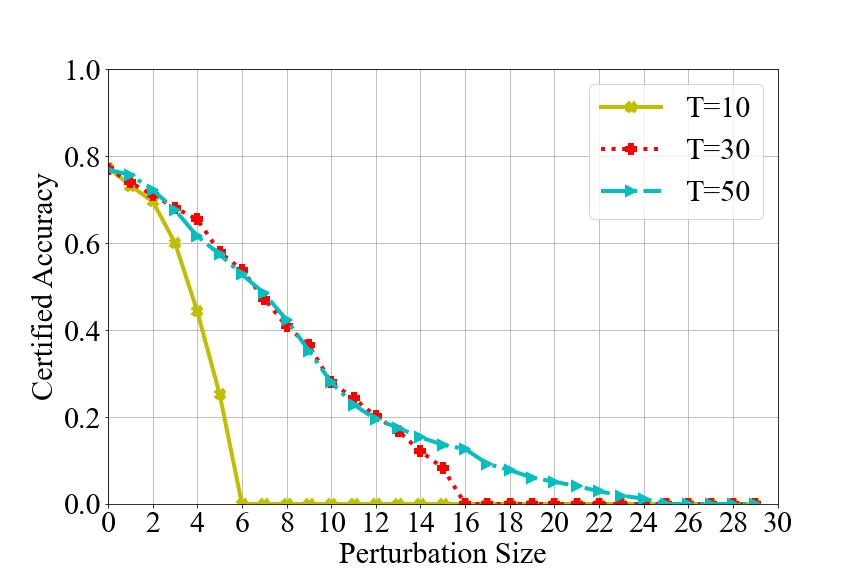}}\hfill
\subfloat[DD]{\includegraphics[width=0.25\textwidth]{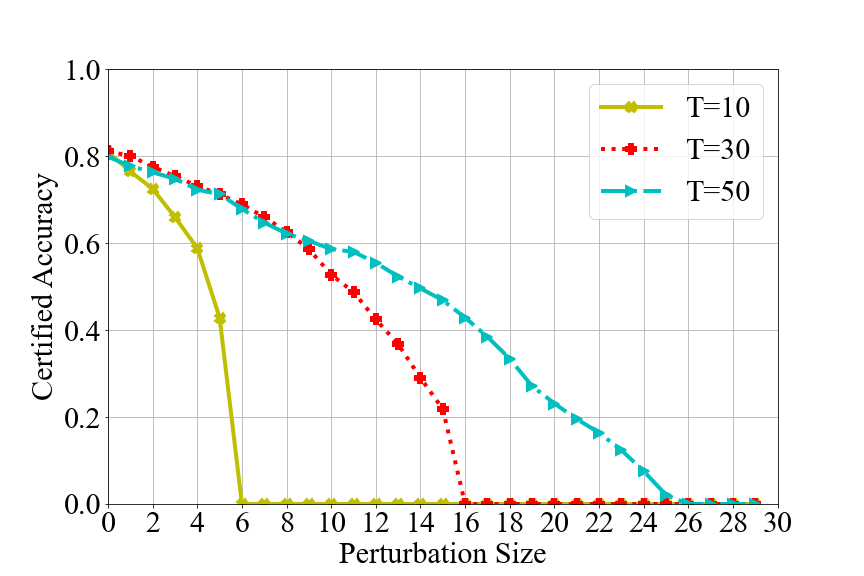}}\\
\caption{Certified graph accuracy of our {\nameN} with GSAGE w.r.t. the number of subgraphs $T$.}
\label{fig:graph-NC-T-GSAGE}
\vspace{-4mm}
\end{figure*}

\begin{figure*}[!t]
\centering
\subfloat[Cora-ML]{\includegraphics[width=0.25\textwidth]{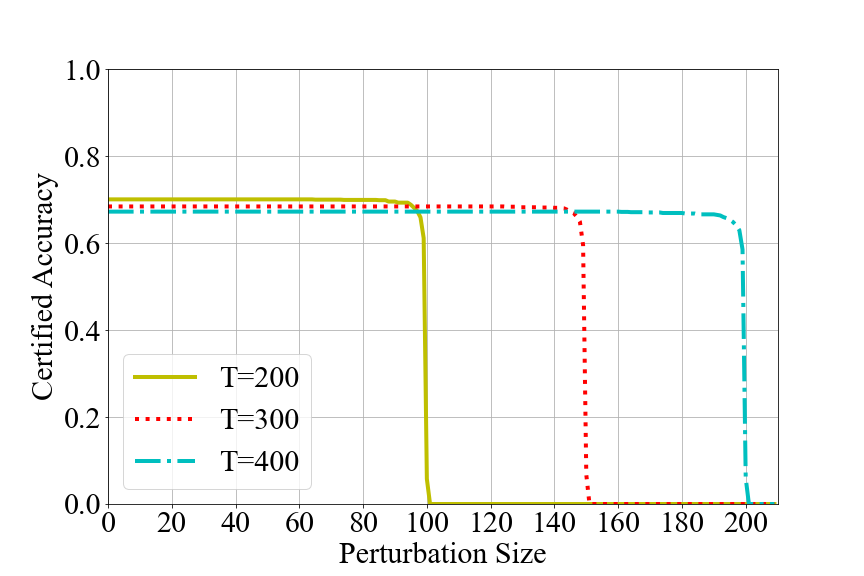}}\hfill
\subfloat[Citeseer]{\includegraphics[width=0.25\textwidth]{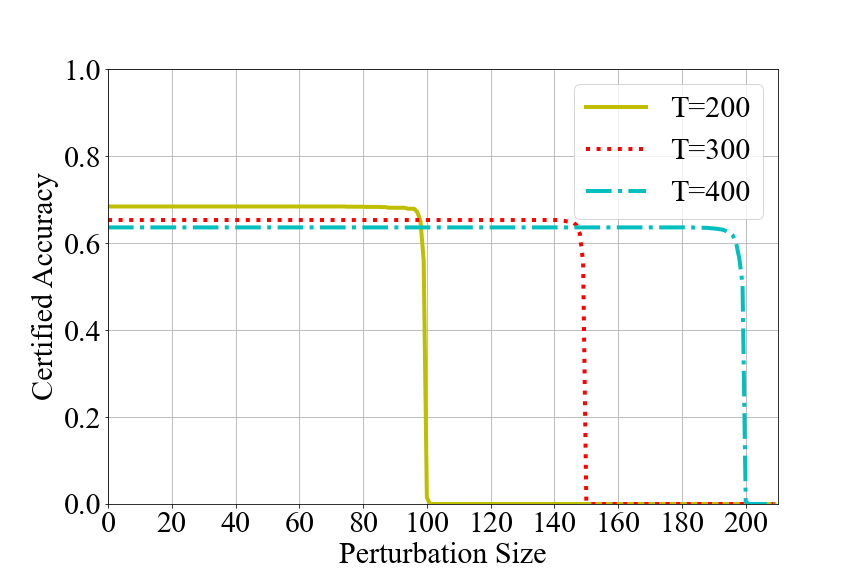}}\hfill
\subfloat[Pubmed]{\includegraphics[width=0.25\textwidth]{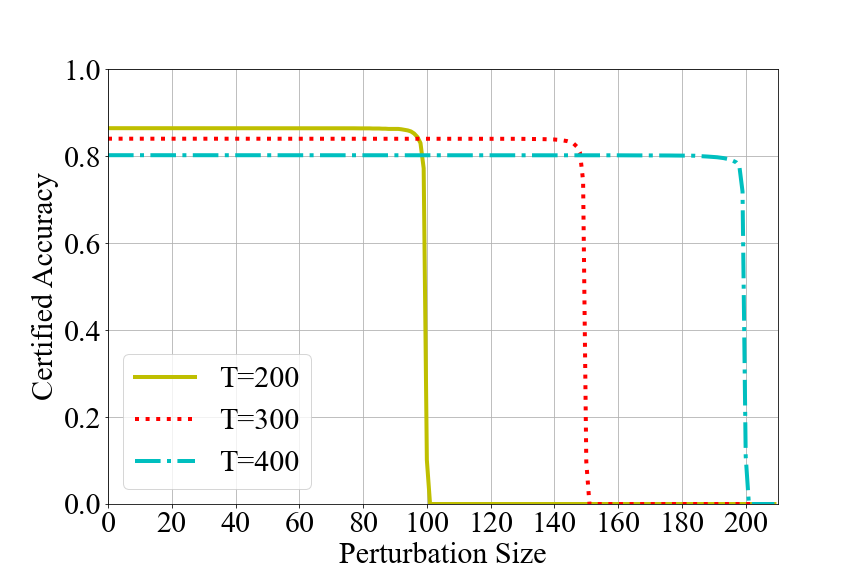}}\hfill
\subfloat[Amazon-C]{\includegraphics[width=0.25\textwidth]{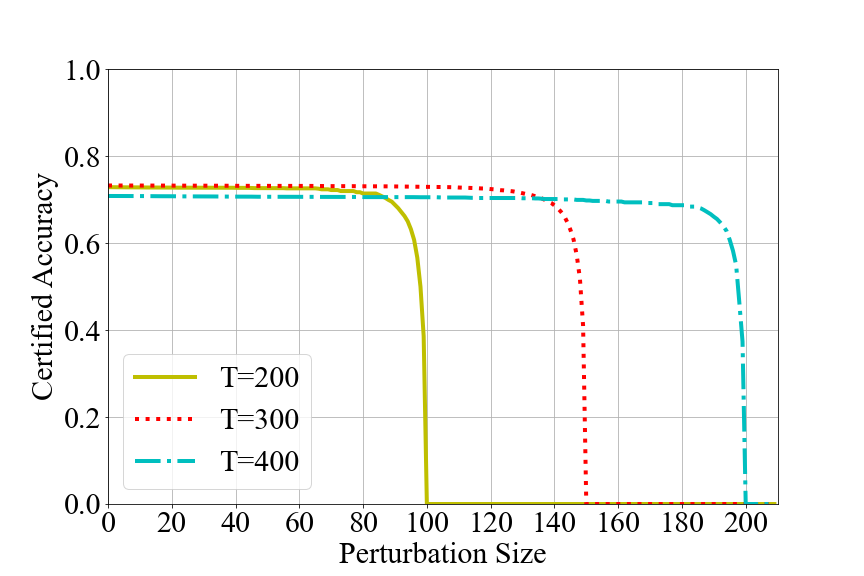}}\\
\caption{Certified node accuracy of our {\nameE} with GAT w.r.t. the number of subgraphs $T$.}
\label{fig:node-EC-T-GAT}
\vspace{-6mm}
\end{figure*}

\begin{figure*}[!t]
\centering
\subfloat[Cora-ML]{\includegraphics[width=0.25\textwidth]{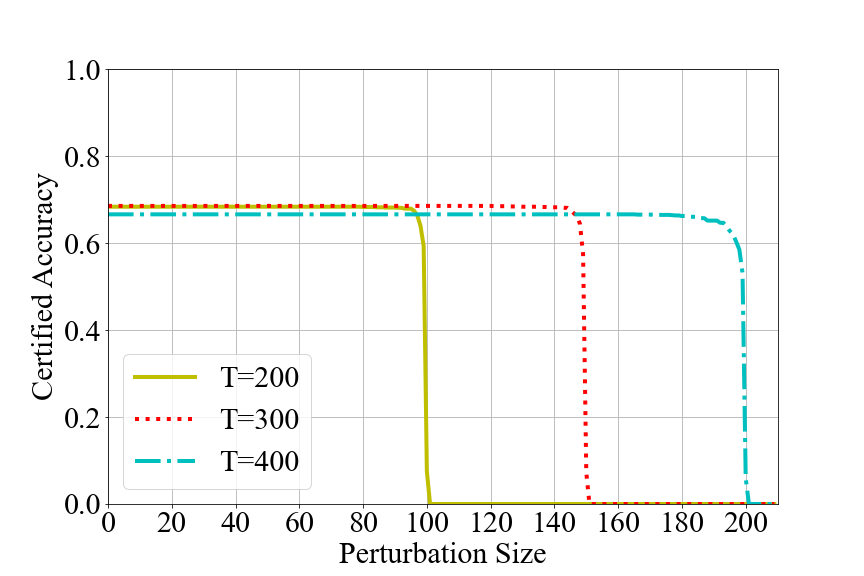}}\hfill
\subfloat[Citeseer]{\includegraphics[width=0.25\textwidth]{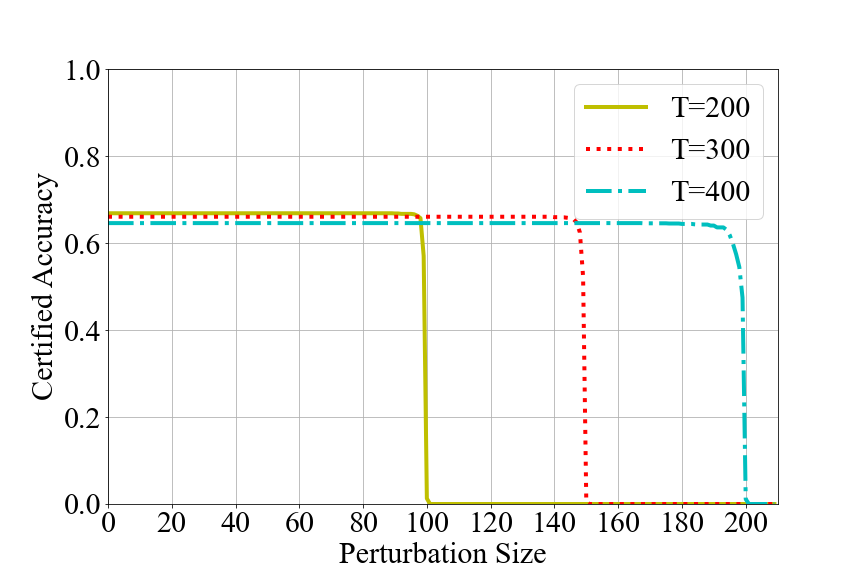}}\hfill
\subfloat[Pubmed]{\includegraphics[width=0.25\textwidth]{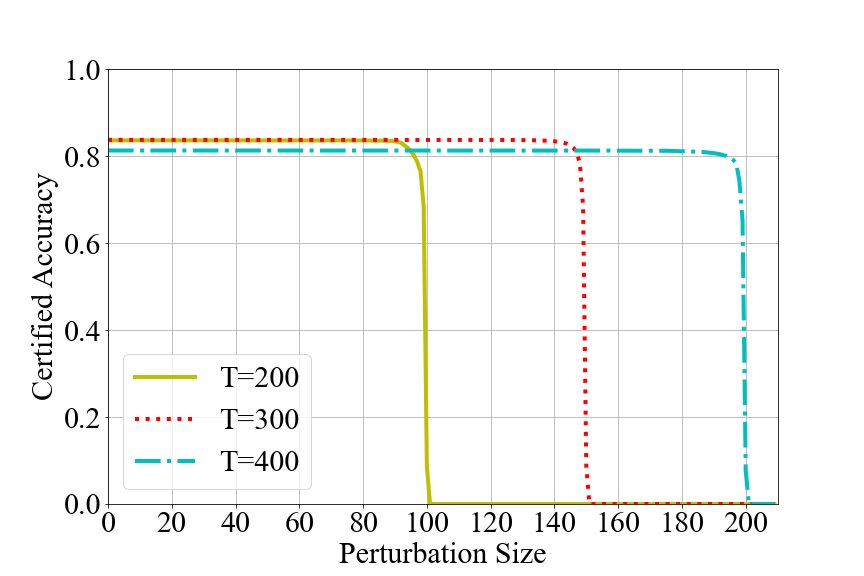}}\hfill
\subfloat[Amazon-C]{\includegraphics[width=0.25\textwidth]{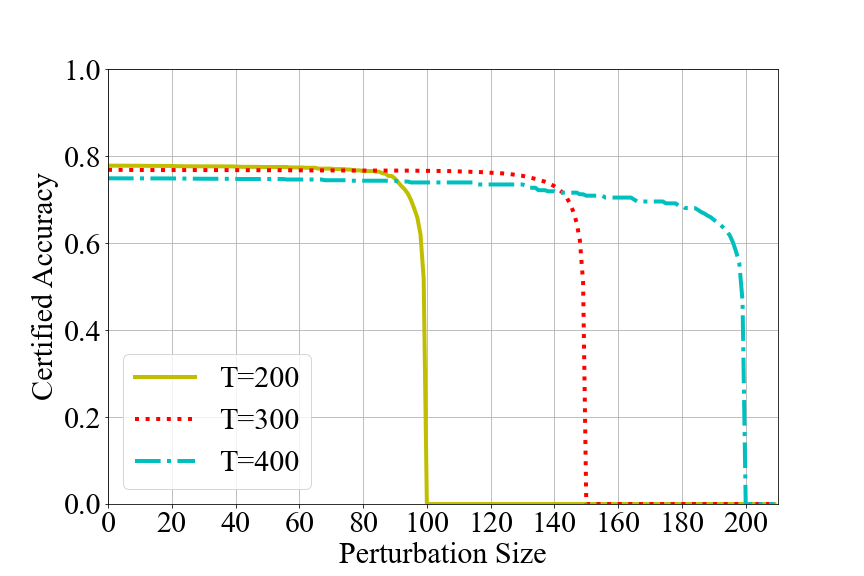}}\\
\caption{Certified node accuracy of our {\nameN} with GAT w.r.t. the number of subgraphs $T$.}
\label{fig:node-NC-T-GAT}
\vspace{-4mm}
\end{figure*}

\begin{figure*}[!t]
\centering
\subfloat[AIDS]{\includegraphics[width=0.25\textwidth]{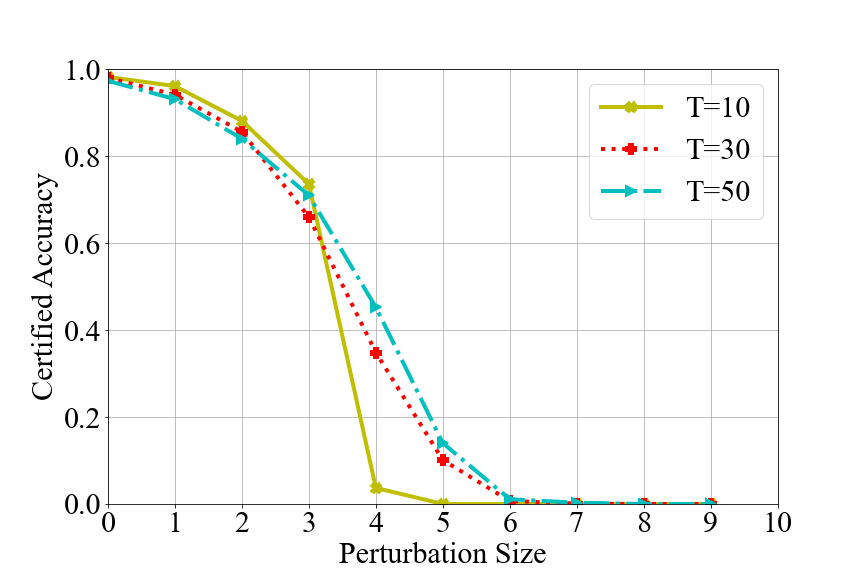}}\hfill
\subfloat[MUTAG]{\includegraphics[width=0.25\textwidth]{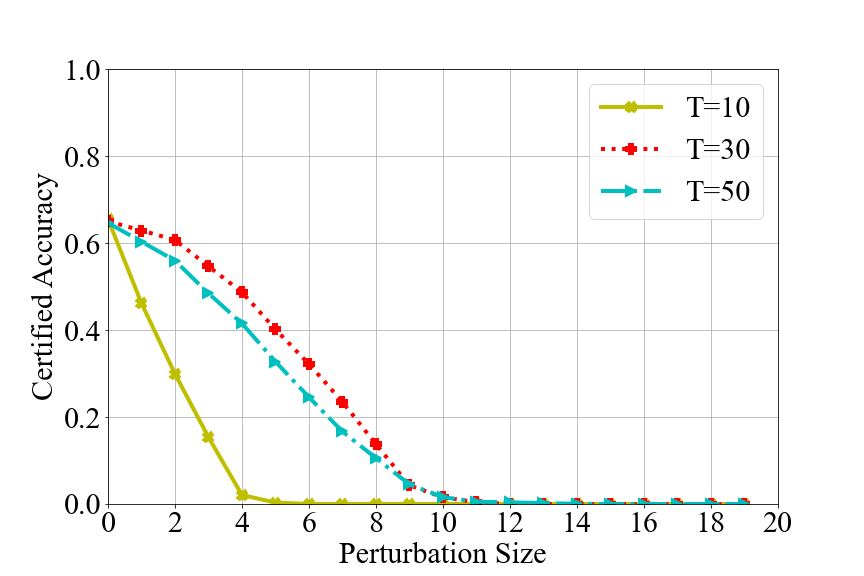}}\hfill
\subfloat[PROTEINS]{\includegraphics[width=0.25\textwidth]{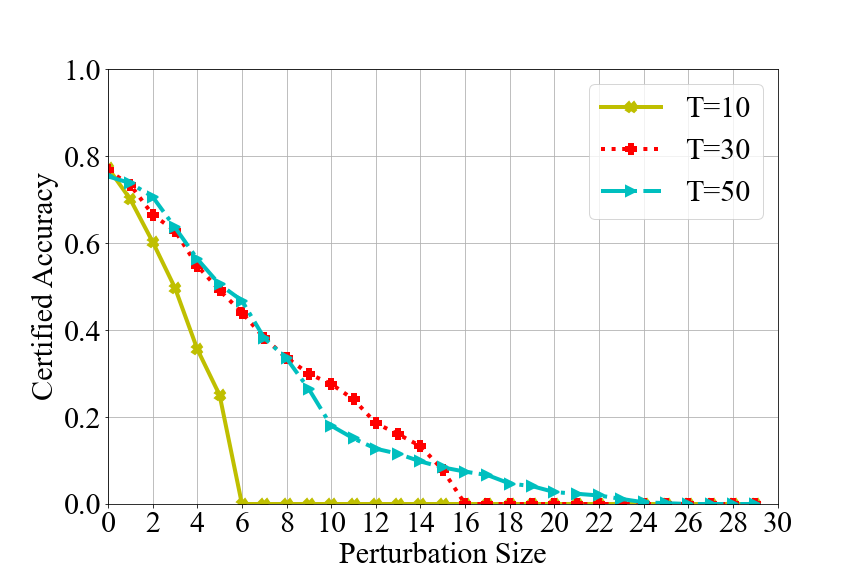}}\hfill
\subfloat[DD]{\includegraphics[width=0.25\textwidth]{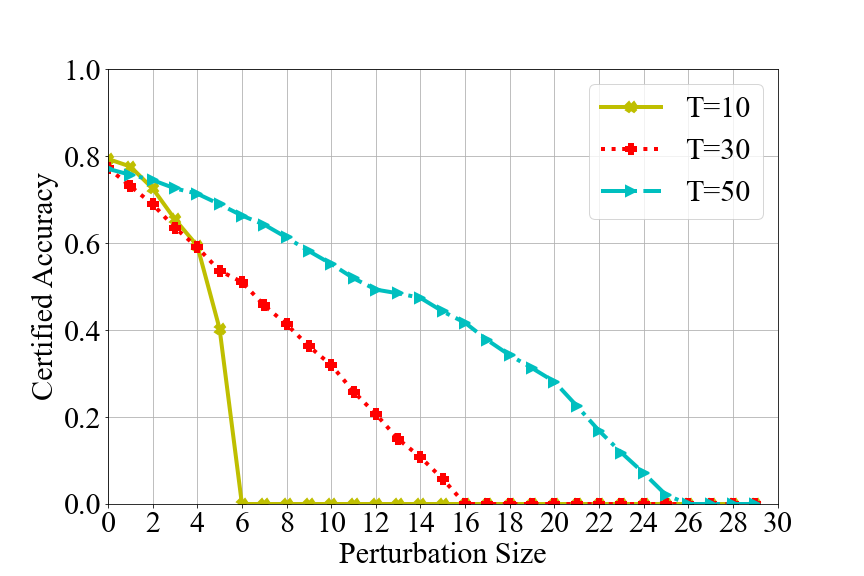}}
\\
\caption{Certified graph accuracy of our {\nameE} with GAT w.r.t. the number of subgraphs $T$.}
\label{fig:graph-EC-T-GAT}
\vspace{-4mm}
\end{figure*}

\begin{figure*}[!t]
\centering
\subfloat[AIDS]{\includegraphics[width=0.25\textwidth]{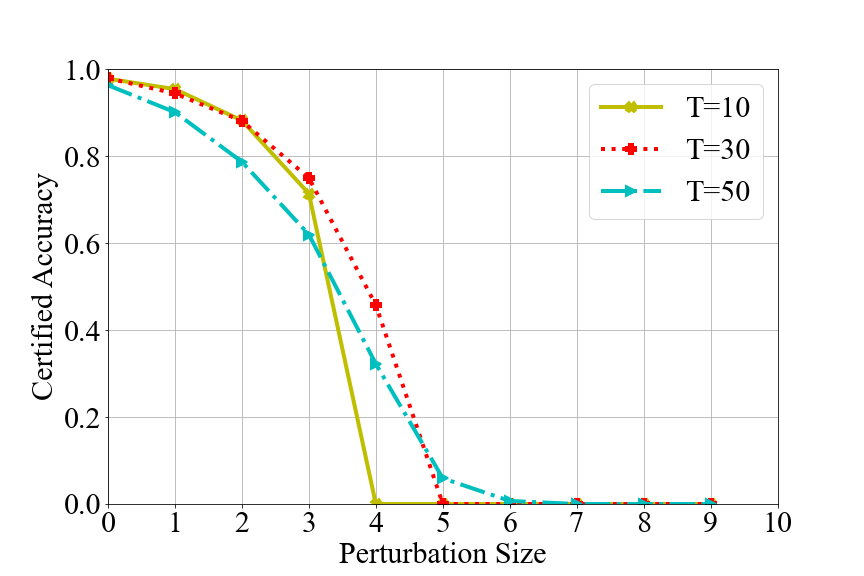}}\hfill
\subfloat[MUTAG]{\includegraphics[width=0.25\textwidth]{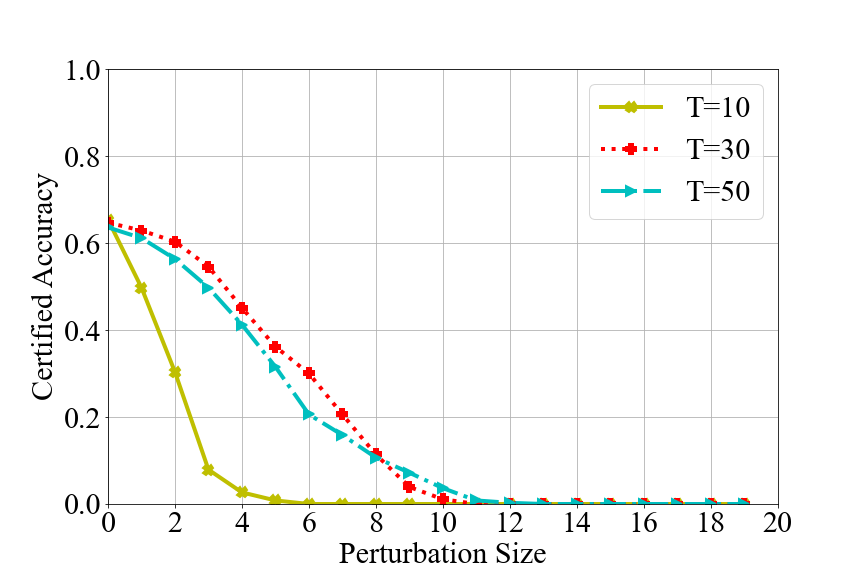}}\hfill
\subfloat[PROTEINS]{\includegraphics[width=0.25\textwidth]{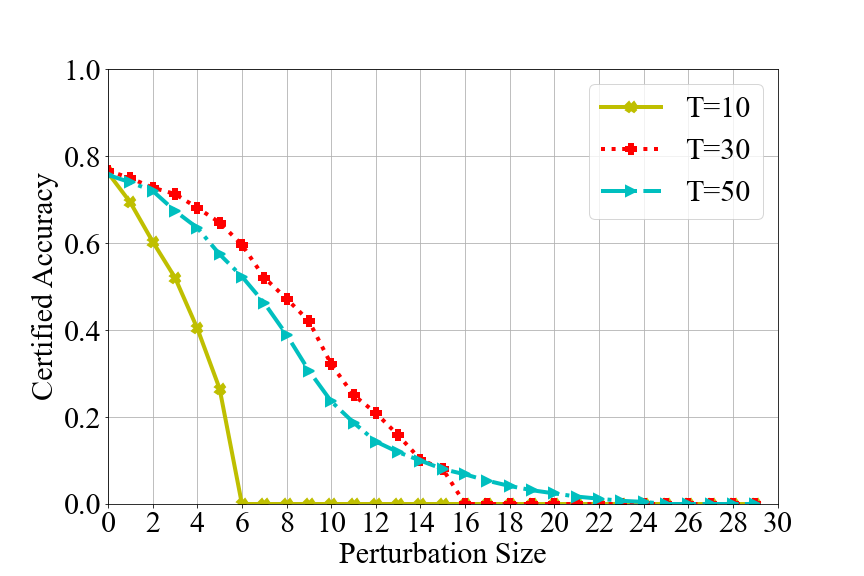}}\hfill
\subfloat[DD]{\includegraphics[width=0.25\textwidth]{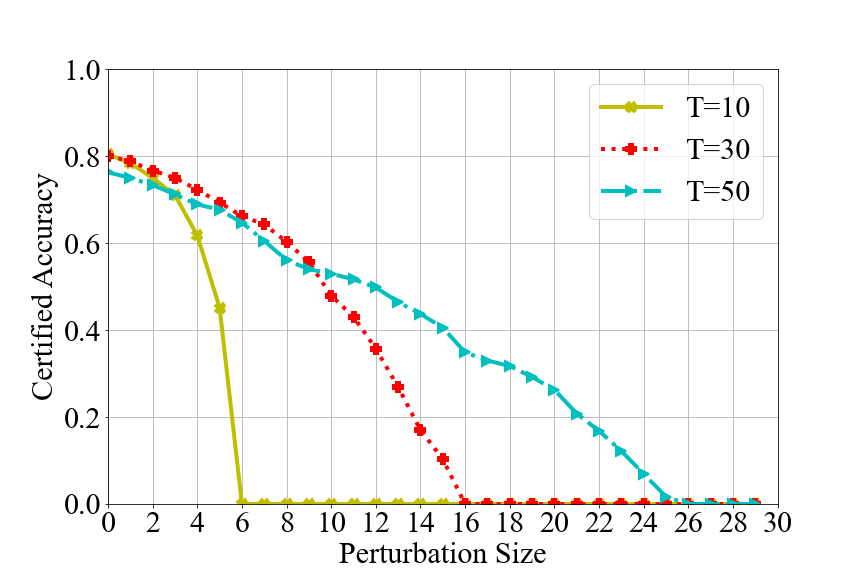}}\\
\caption{Certified graph accuracy of our {\nameN} with GAT w.r.t. the number of subgraphs $T$.}
\label{fig:graph-NC-T-GAT}
\vspace{-4mm}
\end{figure*}

\begin{figure*}[!t]
\centering
\subfloat[Cora-ML]{\includegraphics[width=0.25\textwidth]{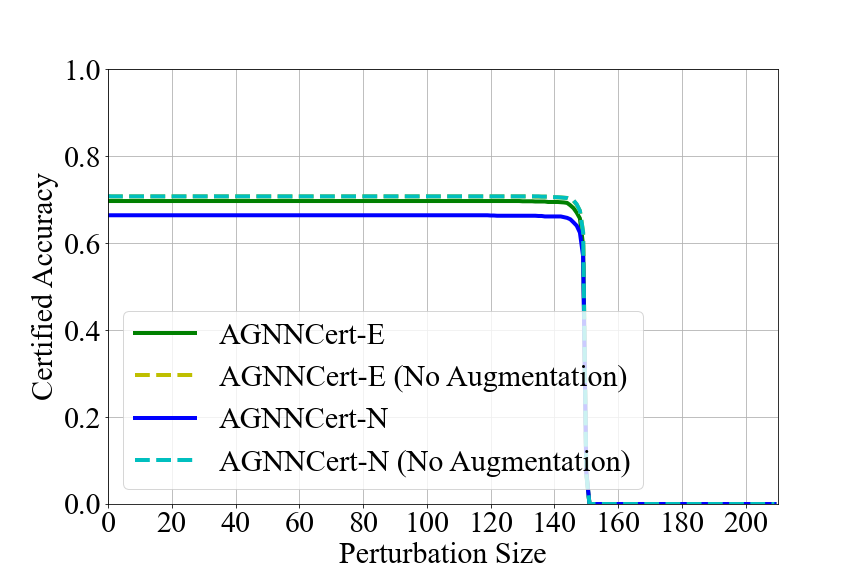}}\hfill
\subfloat[Citeseer]{\includegraphics[width=0.25\textwidth]{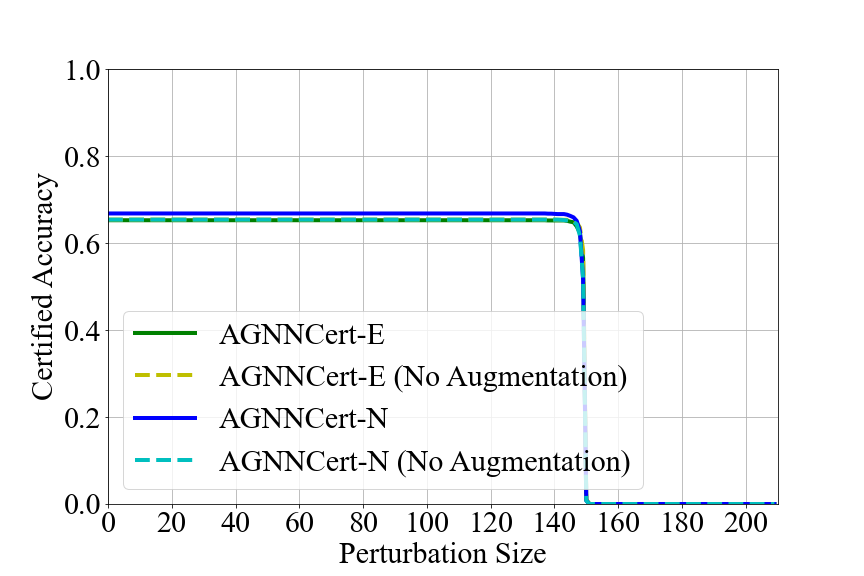}}\hfill
\subfloat[Pubmed]{\includegraphics[width=0.25\textwidth]{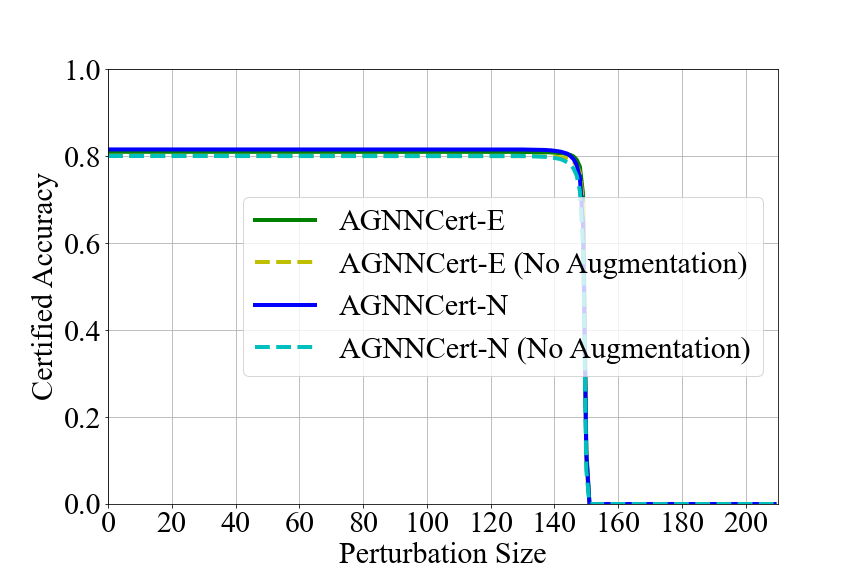}}\hfill
\subfloat[Amazon-C]{\includegraphics[width=0.25\textwidth]{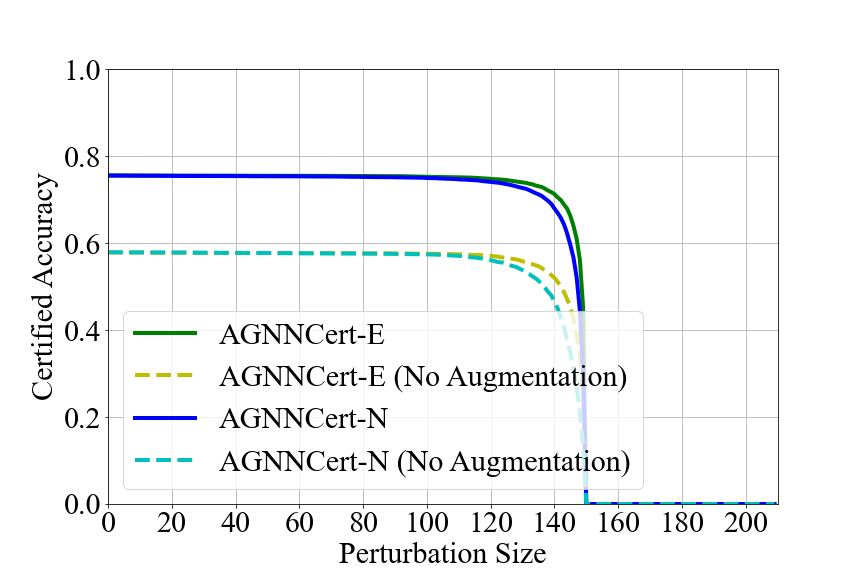}}\\
\caption{Certified node accuracy of our {\name} with and without subgraphs for training under the default setting.}
\label{fig:node-EC-w-wo}
\vspace{-4mm}
\end{figure*}

\begin{figure*}[!t]
\centering
\subfloat[AIDS]{\includegraphics[width=0.25\textwidth]{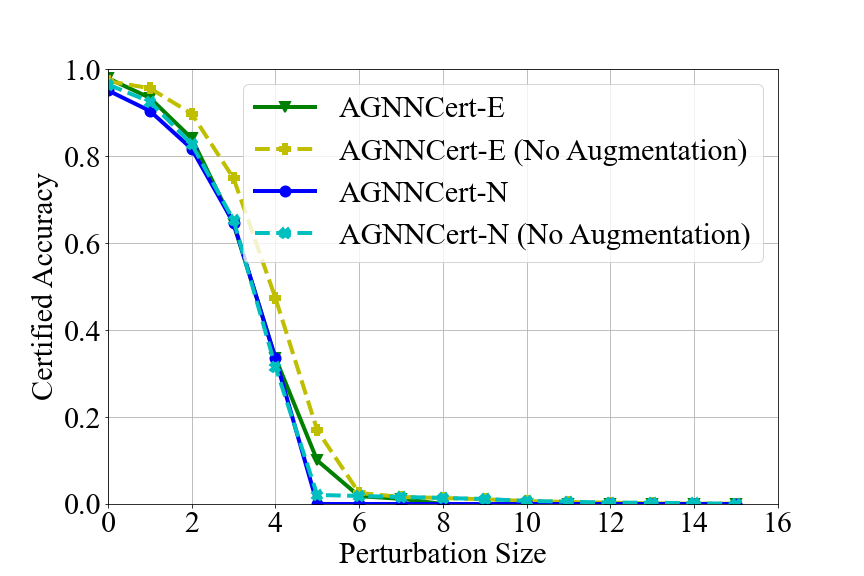}}\hfill
\subfloat[MUTAG]{\includegraphics[width=0.25\textwidth]{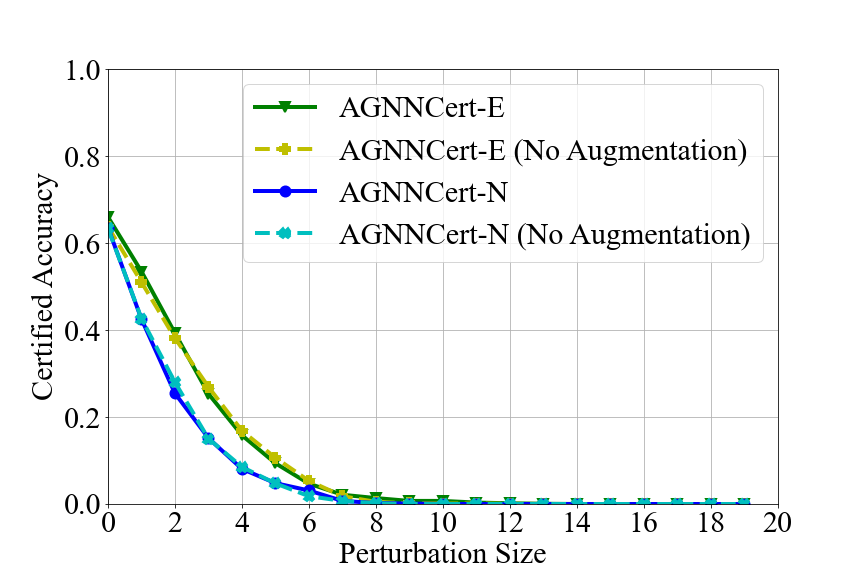}}\hfill
\subfloat[PROTEINS]{\includegraphics[width=0.25\textwidth]{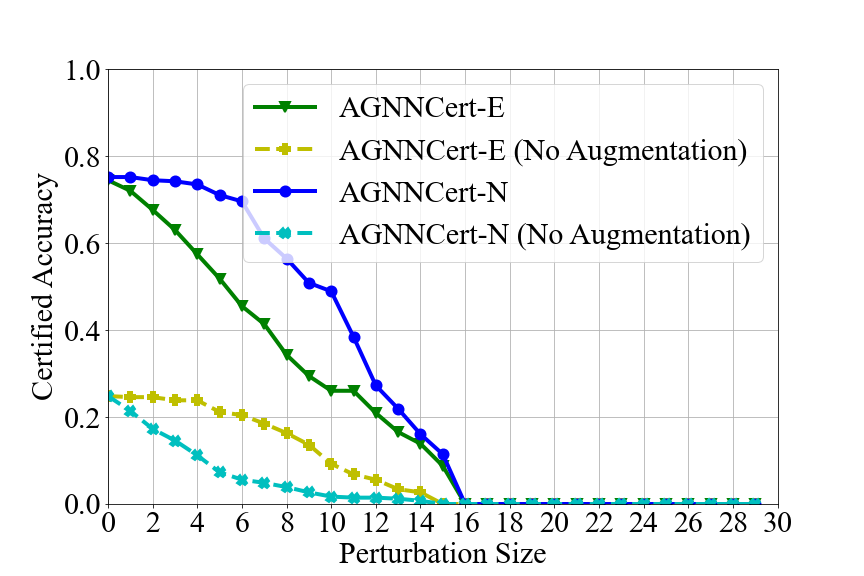}}\hfill
\subfloat[DD]{\includegraphics[width=0.25\textwidth]{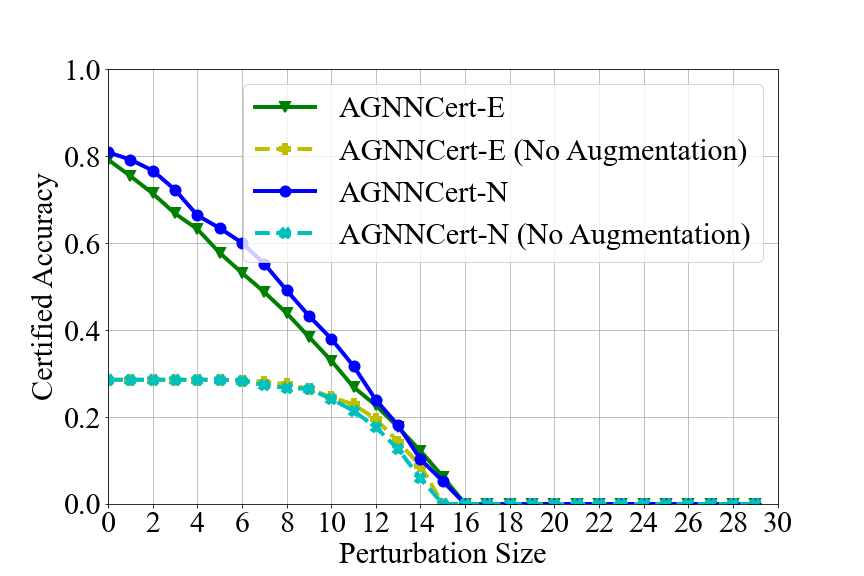}}\\
\caption{Certified graph accuracy of our {\name} with and without subgraphs  for training under the default setting.}
\label{fig:graph-EC-w-wo}
\end{figure*}

\end{document}